\PassOptionsToPackage{unicode}{hyperref}
\PassOptionsToPackage{hyphens}{url}
\PassOptionsToPackage{dvipsnames,svgnames,x11names}{xcolor}
\documentclass[
  12pt]{article}
\usepackage{lineno}
\usepackage{amsmath,amssymb}
\usepackage{mathtools}
\usepackage{mathrsfs}
\DeclareMathOperator*{\argmin}{arg\,min}

\usepackage{amsthm}
\usepackage{algorithm}
\usepackage[noend]{algpseudocode}
\theoremstyle{plain}
\newtheorem{theorem}{Theorem}       
\newtheorem{proposition}[theorem]{Proposition}

\theoremstyle{definition}

\theoremstyle{plain}
\newtheorem{remark}{Remark}

\usepackage{pdflscape}
\usepackage{placeins}
\usepackage{fancyvrb}
\usepackage{subcaption} 
\usepackage{multirow}
\usepackage{dsfont}
\usepackage{bm}
\usepackage{xcolor}

\usepackage{algcompatible}
\usepackage{algorithm}
\usepackage{rotating}
\usepackage{pdflscape}

\usepackage{adjustbox}
\usepackage{graphicx}
\usepackage{amsmath}
\usepackage{amssymb}
\usepackage{multirow}
\usepackage{setspace}

\usepackage{iftex}
\ifPDFTeX
  \usepackage[T1]{fontenc}
  \usepackage[utf8]{inputenc}
  \usepackage{textcomp} 
\else 
  \usepackage{unicode-math}
  \defaultfontfeatures{Scale=MatchLowercase}
  \defaultfontfeatures[\rmfamily]{Ligatures=TeX,Scale=1}
\fi
\usepackage{lmodern}
\ifPDFTeX\else  
\fi
\IfFileExists{upquote.sty}{\usepackage{upquote}}{}
\IfFileExists{microtype.sty}{
  \usepackage[]{microtype}
  \UseMicrotypeSet[protrusion]{basicmath} 
}{}
\makeatletter
\@ifundefined{KOMAClassName}{
  \IfFileExists{parskip.sty}{%
    \usepackage{parskip}
  }{
    \setlength{\parindent}{0pt}
    \setlength{\parskip}{6pt plus 2pt minus 1pt}}
}{
  \KOMAoptions{parskip=half}}
\makeatother
\usepackage{xcolor}
\makeatletter
\ifx\paragraph\undefined\else
  \let\oldparagraph\paragraph
  \renewcommand{\paragraph}{
    \@ifstar
      \xxxParagraphStar
      \xxxParagraphNoStar
  }
  \newcommand{\xxxParagraphStar}[1]{\oldparagraph*{#1}\mbox{}}
  \newcommand{\xxxParagraphNoStar}[1]{\oldparagraph{#1}\mbox{}}
\fi
\ifx\subparagraph\undefined\else
  \let\oldsubparagraph\subparagraph
  \renewcommand{\subparagraph}{
    \@ifstar
      \xxxSubParagraphStar
      \xxxSubParagraphNoStar
  }
  \newcommand{\xxxSubParagraphStar}[1]{\oldsubparagraph*{#1}\mbox{}}
  \newcommand{\xxxSubParagraphNoStar}[1]{\oldsubparagraph{#1}\mbox{}}
\fi
\makeatother

\usepackage{longtable,booktabs,array}
\usepackage{calc} 
\usepackage{etoolbox}
\makeatletter
\patchcmd\longtable{\par}{\if@noskipsec\mbox{}\fi\par}{}{}
\makeatother
\IfFileExists{footnotehyper.sty}{\usepackage{footnotehyper}}{\usepackage{footnote}}
\makesavenoteenv{longtable}
\makeatletter
\def\maxwidth{\ifdim\Gin@nat@width>\linewidth\linewidth\else\Gin@nat@width\fi}
\def\maxheight{\ifdim\Gin@nat@height>\textheight\textheight\else\Gin@nat@height\fi}
\makeatother
\setkeys{Gin}{width=\maxwidth,height=\maxheight,keepaspectratio}
\makeatletter
\def\fps@figure{htbp}
\makeatother

\makeatletter
\@ifpackageloaded{caption}{}{\usepackage{caption}}
\AtBeginDocument{%
\ifdefined\contentsname
  \renewcommand*\contentsname{Table of contents}
\else
  \newcommand\contentsname{Table of contents}
\fi
\ifdefined\listfigurename
  \renewcommand*\listfigurename{List of Figures}
\else
  \newcommand\listfigurename{List of Figures}
\fi
\ifdefined\listtablename
  \renewcommand*\listtablename{List of Tables}
\else
  \newcommand\listtablename{List of Tables}
\fi
\ifdefined\figurename
  \renewcommand*\figurename{Figure}
\else
  \newcommand\figurename{Figure}
\fi
\ifdefined\tablename
  \renewcommand*\tablename{Table}
\else
  \newcommand\tablename{Table}
\fi
}
\@ifpackageloaded{float}{}{\usepackage{float}}
\floatstyle{ruled}
\@ifundefined{c@chapter}{\newfloat{codelisting}{h}{lop}}{\newfloat{codelisting}{h}{lop}[chapter]}
\floatname{codelisting}{Listing}

\makeatother
\makeatletter
\@ifpackageloaded{caption}{}{\usepackage{caption}}
\@ifpackageloaded{subcaption}{}{\usepackage{subcaption}}
\makeatother

\ifLuaTeX
  \usepackage{selnolig}  
\fi
\usepackage[]{natbib}
\usepackage{bookmark}

\IfFileExists{xurl.sty}{\usepackage{xurl}}{} 
\hypersetup{
  pdftitle={Title},
  pdfauthor={Author 1; Author 2},
  pdfkeywords={3 to 6 keywords, that do not appear in the title},
  colorlinks=true,
  linkcolor={blue},
  filecolor={Maroon},
  citecolor={Blue},
  urlcolor={Blue},
  pdfcreator={LaTeX via pandoc}}

\newcommand{\anon}{1}

\usepackage{rotating}
\begin{document}

\def\spacingset#1{\renewcommand{\baselinestretch}%
{#1}\small\normalsize} \spacingset{1}


\if1\anon
{
  \title{\bf BLOC: A Global Optimization Framework for Sparse Covariance Estimation with Non-Convex Penalties}
  \author{Priyam Das
    \hspace{.2cm}\\
 {\normalsize \medskip Department of Biostatistics, Virginia Commonwealth University}\\
 Trambak Banerjee
    \hspace{.2cm}\\
 {\normalsize \medskip Department of Analytics, Information and Operations, University of Kansas}\\
 and\\
 Prajamitra Bhuyan
    \hspace{.2cm}\\
 {\normalsize \medskip Department of Operations Management, Indian Institute of Management, India}
 }
  \maketitle
} \fi

\if0\anon
{
  \bigskip
  \bigskip
  \bigskip
  \begin{center}
    {\LARGE\bf BLOC: A Global Optimization Framework for Sparse Covariance Estimation with Non-Convex Penalties}
\end{center}
  \medskip
} \fi

\bigskip
\begin{abstract}
We introduce \textsc{BLOC} (Black-box Optimization over Correlation matrices), a general framework for sparse covariance estimation with non-convex penalties. BLOC operates on the manifold of correlation matrices and reparameterizes it via an angular Cholesky mapping, transforming the positive-definite, unit-diagonal constraint into an unconstrained search over a Euclidean hyperrectangle. This enables gradient-free global optimization of diverse objectives, including non-differentiable or black-box losses, using a pattern search routine with adaptive coordinate polling, run-wise restarts to escape local minima, and leveraging up to $d(d-1)$ parallel threads when optimizing a $d$-dimensional correlation matrix. The method is penalty-agnostic and ensures that every iterate is a valid correlation matrix, from which covariance estimates are obtained. We establish convergence guarantees, including stationarity, probabilistic escape from poor local minima, and sublinear rates under smooth convex losses. From a statistical perspective, we prove consistency, convergence rates, and sparsistency for penalized correlation estimators under general conditions, extending sparse covariance theory beyond the Gaussian setting. Empirically, BLOC with nonconvex penalties such as SCAD and MCP outperforms leading estimators in both low- and high-dimensional regimes, achieving lower estimation error and improved sparsity recovery. A parallel implementation enhances scalability, and a proteomic network application demonstrates robust, positive-definite sparse covariance estimation.
\end{abstract}

\noindent%
{\it Keywords:}  Sparse covariance estimation; Nonconvex penalization; Global optimization; High-dimensional inference; Sparsistency.
\vfill

\newpage
\spacingset{1.69} 

\section{Introduction}
The population covariance matrix $\bm \Sigma_0$ of a random vector is a fundamental object in multivariate statistics, and its estimation plays a central role in a wide range of scientific applications. Reliable estimation of $\bm \Sigma_0$ underlies problems in financial factor analysis \citep{fan2011high}, high-dimensional compositional data analysis \citep{li2023robust}, 
and functional genomics and transcriptome analysis \citep{schafer2005shrinkage}, among many others. As the data dimension grows, however, covariance estimation becomes increasingly challenging: the number of free parameters in $\bm \Sigma_0$ grows quadratically with dimension, rendering classical estimators unstable or ill-conditioned in modern high-dimensional regimes. Consequently, a large body of literature has focused on regularized covariance estimation strategies that reduce effective model complexity, including eigenvalue regularization, shrinkage methods, and sparsity-inducing approaches; see \citet{bien2019graph} for a comprehensive review.

Among these strategies, estimating $\bm \Sigma_0$ under sparsity assumptions, where only a small fraction of off-diagonal entries are nonzero, has emerged as the dominant paradigm in high-dimensional covariance estimation. Early work includes banding and tapering estimators \citep{bickel2008regularized}, thresholding-based procedures \citep{cai2011adaptive}, and positive-definite modifications thereof \citep{cai2012minimax}. Penalized likelihood formulations further unified these ideas by framing covariance estimation as an optimization problem with explicit sparsity-inducing penalties on off-diagonal elements.

To enforce sparsity, the $\ell_1$ penalty is by far the most widely used regularizer, owing to its simplicity, interpretability, and the availability of scalable optimization routines in related settings. However, it is well known that $\ell_1$ penalization induces systematic shrinkage bias in nonzero estimates, particularly when true correlations are moderate or large \citep{fan2001variable}. To mitigate this bias, non-convex penalties such as the smoothly clipped absolute deviation (SCAD) penalty \citep{fan2001variable} and the minimax concave penalty (MCP) \citep{zhang2010nearly} have been proposed and studied extensively. In the context of covariance and correlation estimation, \citet{lam2009sparsistency} established that non-convex penalization can achieve both consistency and sparsistency under suitable conditions, while reducing estimation bias relative to $\ell_1$-based methods. More recent work has further extended these theoretical guarantees to broader classes of losses and penalties \citep{wang2024estimation}.

Despite these strong statistical guarantees, existing computational approaches for covariance estimation with non-convex penalties suffer from several important limitations. First, many algorithms are tightly coupled to specific penalty functions \citep{wang2024estimation} or likelihood-based objectives \citep{Bien2011}, making them difficult to adapt when applications require alternative penalties or user-defined loss functions. Second, methods based on convex relaxations or majorization--minimization schemes typically rely on objective-specific constructions, such as Gaussian likelihoods \citep{Bien2011} or Frobenius-norm \citep{Wen2021} fitting criteria, and it is unclear how these techniques extend to more general, possibly non-differentiable or black-box objectives. Finally, most existing algorithms for covariance estimation under non-convex penalties or parameterizations are inherently local optimization procedures \citep{Rothman2009, Liu2014, Wen2016}. Once trapped in poor local minima, they lack principled mechanisms for global exploration, a limitation that becomes increasingly severe as dimensionality increases.

In this paper, we propose a general framework for sparse covariance estimation with non-convex penalties, termed BLOC (\emph{Black-box Optimization over Correlation Matrices}). Rather than developing algorithms tailored to specific penalties or objective functions, BLOC formulates covariance estimation as a black-box optimization problem over the space of correlation matrices, requiring only that the objective function be evaluable.
Our approach exploits two key ideas. First, instead of optimizing directly over covariance matrices, BLOC operates on the corresponding correlation matrix. This strategy is advantageous because the correlation matrix shares the sparsity structure of $\bm \Sigma_0$ while typically being easier to estimate and numerically more stable \citep{lam2009sparsistency}. We leverage the geometric structure of the correlation matrix space: full-rank correlation matrices form a smooth nonlinear manifold characterized by positive definiteness and unit diagonal constraints. By employing an angular Cholesky reparameterization, BLOC maps this constrained manifold bijectively onto an unconstrained Euclidean hyperrectangle, ensuring that every iterate produced by the algorithm corresponds to a valid correlation matrix.

Second, BLOC couples this reparameterization with a derivative-free global optimization strategy based on a recursive modified pattern search \citep{Torczon1997, Das2023MsiCOR, Das2023RMPSH, kim2025smart}. In contrast to gradient-based or convex-relaxation methods, this approach explicitly balances local refinement with global exploration through coordinate-wise polling, adaptive step-size control, and systematic restart mechanisms. These features provide a principled means of escaping poor local optima and enable reliable exploration of highly non-convex objective landscapes, even when the loss function is non-differentiable, discontinuous, or specified implicitly through a black-box procedure.

As a result, BLOC enjoys several distinct advantages over existing methods. It is objective-agnostic and penalty-agnostic, providing a unified framework for covariance estimation under a wide class of non-convex penalties without requiring penalty-specific algorithmic derivations. 
Our theoretical analyses provide guarantees on stationarity, global reachability, convergence in probability under mild assumptions, and sublinear convergence rates for smooth convex objectives. 

The remainder of the article is organized as follows. In Section~\ref{sec:prelims}, we describe the problem setup and provide background on global optimization. Section~\ref{sec:bloc} presents the BLOC framework in detail, while Section~\ref{sec:theory} develops its theoretical properties. Benchmark comparisons and simulation studies are reported in Section~\ref{sec:sims}, respectively. In Section~\ref{sec:realdata}, we apply BLOC to pathway-informed correlation estimation in a pan-gynecologic proteomics study. The article concludes with a discussion in Section~\ref{sec:discuss}, with proofs and technical details deferred to the Supplementary Material.
\vspace{-0.6cm}
\section{Penalized Correlation Estimation}
\label{sec:prelims}\vspace{-0.5cm}
\subsection{Statistical model and correlation-based formulation}\vspace{-0.2cm}
Let $\bm X \in \mathbb R^d$ be a random vector with positive definite covariance matrix $\bm\Sigma_0$. Given $n$ i.i.d. observations $\bm X_1,\ldots,\bm X_n$, our objective is to estimate $\bm\Sigma_0$ under the assumption that it is sparse. Following a common strategy in high-dimensional covariance estimation \citep{lam2009sparsistency}, we focus on estimation of the corresponding correlation matrix $\bm\Gamma_0$, which preserves the sparsity pattern of $\bm\Sigma_0$ while having a fixed and known diagonal structure. An estimator of $\bm\Sigma_0$ can then be recovered as $\hat{\bm\Sigma}_n = \hat{\bm W}\hat{\bm\Gamma}_n\hat{\bm W}$, where $\hat{\bm W}^2 \; (=\bm S)$ is the sample covariance matrix. Let $\mathcal C_d = \{\bm C \in \mathbb R^{d\times d} : \bm C \succ 0, \bm C = \bm C^\top, \mathrm{diag}(\bm C)=1\}$ denote the space of valid correlation matrices. Our goal is to construct an estimator $\hat{\bm\Gamma}_n \in \mathcal C_d$ that is both statistically accurate and sparse.
\vspace{-0.5cm}
\subsection{General penalized estimation framework}\vspace{-0.15cm}
We consider a general class of penalized estimators defined as solutions to
\vspace{-0.3cm}
\begin{align}
\label{eq:problem 1}
\min_{\bm\Gamma \in \mathcal C_d}
\; h_n(\bm\Gamma) + \sum_{i\neq j} p_{\lambda_n}(|\gamma_{ij}|), 
\end{align}\vspace{-1.2cm}

where $h_n:\mathcal C_d\to\mathbb R$ may represent a likelihood, a non-convex loss such as the capped-$\ell_1$ loss \citep{zhang2010analysis}, or any differentiable, non-differentiable, or even potentially discontinuous criterion and $p_{\lambda_n}$ is a general non-convex coordinate-wise penalty, such as SCAD, MCP, applied only to the off-diagonal entries of $\bm \Gamma$. Typical examples of $h_n$ include the multivariate Gaussian likelihood \citep{lam2009sparsistency}, matrix depth \citep{chen2018robust}, the minimum regularized covariance determinant \citep{boudt2020minimum} and various matrix norms (see for instance \cite{bien2019graph,CaiZhou2012} and the references therein) among others. Since Problem \eqref{eq:problem 1} is non-convex, existing solution strategies generally rely on problem-specific convex relaxations \citep{wang2024estimation,Wen2021} or majorization schemes \citep{Bien2011}. In contrast, we adopt a different perspective and develop BLOC, a global optimization framework that is fully agnostic to the choice of objective or penalty, requiring only that they can be evaluated. Directly optimizing over the correlation-matrix space $\mathcal C_d$ is, nevertheless, challenging due to the inherent positive definiteness and unit diagonal constraints. BLOC overcomes these difficulties by re-parametrizing \eqref{eq:problem 1} as an unconstrained problem in a set of angular coordinates and performing a global search via a novel global optimization routine. The theoretical properties of the resulting estimators are developed in Section~\ref{sec:theory}.
\vspace{-0.5cm}
\subsection{Notation and asymptotic conventions}
Throughout, $\gamma_{ij}$ denotes the $(i,j)$ entry of $\bm\Gamma$. We write $\|\cdot\|_F$ and $\|\cdot\|$ for the Frobenius and operator norms, respectively, and use standard asymptotic notation. Additional technical definitions and regularity conditions are deferred to the Supplementary Section \ref{app:theorem}.
\vspace{-0.6cm}
\section{BLOC}\label{sec:bloc}
\vspace{-0.3cm}
We develop BLOC as a computational framework for evaluating penalized correlation matrix estimators of the form introduced in Section~\ref{sec:prelims}. The central challenge arises from the constrained nature of the correlation matrix space, which requires symmetry, positive definiteness, and unit diagonal elements to be satisfied simultaneously. Direct optimization over this space is nontrivial, particularly when the objective function is nonconvex, nonsmooth, or available only through black-box evaluation.

BLOC addresses these challenges through a reparameterization that enforces all structural constraints by construction, coupled with a derivative-free optimization strategy operating in an unconstrained Euclidean space. This design ensures that every iterate corresponds to a valid correlation matrix while remaining flexible enough to accommodate a broad class of objective functions, including nonconvex penalties and robust loss functions. The proposed framework consists of three main components:  
(i) a bijective reparameterization of the correlation matrix space via angular coordinates;  
(ii) an unconstrained reformulation of the penalized estimation problem; and  
(iii) a robust coordinate-wise optimization strategy for computing the resulting estimator.  
We describe each component in turn.
\vspace{-0.7cm}
\subsection{Correlation-matrix reparameterization}
\label{subsec:reparam}

Let \( \bm{C} \in \mathcal{C}_d \) denote a \( d \times d \) correlation matrix, i.e., a symmetric positive definite matrix with unit diagonal. A convenient and widely used representation of \( \bm{C} \) is obtained through its Cholesky decomposition,
\vspace{-0.3cm}
\[
\bm{C} = \bm{L}\bm{L}^\top,
\]
where \( \bm{L} \) is a lower triangular matrix with strictly positive diagonal entries. When \( \bm{C} \in \mathcal{C}_d \), the additional unit-diagonal constraint implies that each row of \( \bm{L} \) has unit Euclidean norm.
\vspace{-0.9cm}
\begin{proposition}[Correlation matrix bijection]
\label{thm:corr}
A matrix \( \bm{C} \in \mathcal{C}_d \) if and only if there exists a unique lower triangular matrix \( \bm{L} \) such that
\(
\bm{C} = \bm{L}\bm{L}^\top,
\)
where \( \bm{L} \) is full rank, has strictly positive diagonal entries, and satisfies
\(
\sum_{j=1}^m l_{mj}^2 = 1
\)
for each row \( m = 1,\dots,d \).
\end{proposition}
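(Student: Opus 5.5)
The plan is to prove the two directions separately, with uniqueness coming from the standard uniqueness of the Cholesky factorization.

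\emph{Forward direction.} Suppose $\bm C \in \mathcal C_d$. Since $\bm C$ is symmetric positive definite, the Cholesky theorem gives a lower triangular $\bm L$ with strictly positive diagonal such that $\bm C = \bm L\bm L^\top$. Then $\det \bm L = \prod_m l_{mm} > 0$, so $\bm L$ is full rank. Comparing diagonal entries gives, for each row $m$,
\[
c_{mm} = \sum_{j=1}^{m} l_{mj}^2 = 1,
\]
using lower triangularity, so each row has unit norm.

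\emph{Uniqueness.} Suppose $\bm L_1\bm L_1^\top = \bm L_2\bm L_2^\top$, with both factors lower triangular and positive on the diagonal. Set $\bm M = \bm L_2^{-1}\bm L_1$. This matrix is lower triangular with positive diagonal, and it satisfies $\bm M\bm M^\top = \bm I$. Hence $\bm M = \bm M^{-\top}$ is simultaneously lower and upper triangular, so it is diagonal. A diagonal matrix with $\bm M^2 = \bm I$ and positive entries must be $\bm I$, so $\bm L_1 = \bm L_2$.

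\emph{Converse.} Suppose $\bm L$ is lower triangular and full rank, has positive diagonal, and has unit-norm rows. Then $\bm C = \bm L\bm L^\top$ is clearly symmetric. For $x \neq 0$ we have
\[
x^\top \bm C x = \|\bm L^\top x\|^2 > 0,
\]
because $\bm L$ is invertible; so $\bm C \succ 0$. Finally, $c_{mm} = \sum_{j\le m} l_{mj}^2 = 1$, so the diagonal is unit. Therefore $\bm C \in \mathcal C_d$.

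No step is a genuine obstacle. The only point needing care is the uniqueness argument, namely the triangular-orthogonal-implies-identity step. The full-rank hypothesis is redundant given the positive diagonal, but it can simply be noted.
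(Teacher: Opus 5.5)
Your proof is correct and follows the same route as the paper: the Cholesky factorization plus a comparison of diagonal entries gives the forward direction, and $x^\top \bm C x = \|\bm L^\top x\|^2 > 0$ together with the row-norm identity gives the converse. The only difference is that you prove uniqueness of the Cholesky factor yourself, by showing that $\bm M = \bm L_2^{-1}\bm L_1$ is orthogonal and triangular with positive diagonal and hence equals $\bm I$, whereas the paper simply cites the standard uniqueness result.
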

\vspace{-0.1cm}
This representation transforms the positive definiteness and unit-diagonal constraints on \( \bm{C} \) into geometric constraints on the rows of \( \bm{L} \). Specifically, the \( m \)-th row of \( \bm{L} \) lies on the positive hemisphere of the unit sphere in \( \mathbb R^m \). This observation motivates an angular reparameterization of \( \bm{L} \), which converts the constrained matrix space \( \mathcal C_d \) into a product of open intervals. Let \( \bm{L} = (l_{mg}) \) denote the Cholesky factor. The angular construction proceeds recursively as follows.
\begin{itemize}
    \item Set \( l_{11} = 1 \).
    \vspace{-0.2cm}
    \item For \( m = 2 \), represent \( (l_{21}, l_{22}) \) as a point on the upper half of the unit circle: \vspace{-0.2cm}
    \[
    l_{21} = \sin \omega_{21}, \qquad
    l_{22} = \cos \omega_{21}, \qquad
    \omega_{21} \in (-\pi/2, \pi/2).
    \]
    \vspace{-1.4cm}
    
    \item For \( m \ge 3 \), represent the row
    \( \bm l_m = (l_{m1},\ldots,l_{mm}) \)
    as a point on the positive hemisphere of the unit \( (m-1) \)-sphere using hyperspherical coordinates \vspace{-0.4cm}
    \[
    \Omega_m = (\omega_{m1},\ldots,\omega_{m,m-1})
    \in (0,\pi/2)\times(0,\pi)^{m-2}\times(0,2\pi),
    \]\vspace{-1.4cm}
    
    with \vspace{-0.4cm}
    \begin{align*}
        l_{m1} &= \Big( \prod_{i=1}^{m-2} \sin \omega_{mi} \Big)\sin \omega_{m,m-1},\\
        l_{m2} &= \Big( \prod_{i=1}^{m-2} \sin \omega_{mi} \Big)\cos \omega_{m,m-1},\\
        l_{m3} &= \Big( \prod_{i=1}^{m-3} \sin \omega_{mi} \Big)\cos \omega_{m,m-2},\\
        &\vdots\\
        l_{m,m-1} &= \sin \omega_{m1}\cos \omega_{m2},\\
        l_{mm} &= \cos \omega_{m1}.
    \end{align*}
    \vspace{-1.4cm}
    
\end{itemize}\vspace{-0.4cm}
This construction yields a one-to-one correspondence between correlation matrices and collections of angular parameters.
\vspace{-0.2cm}
\begin{proposition}[Angular bijection]
\label{thm:bijection}
Define the angular domain \vspace{-0.2cm}
\[
\mathcal A_d
=
(-\pi/2,\pi/2)
\times
\prod_{m=3}^d
\big[(0,\pi/2)\times(0,\pi)^{m-2}\times(0,2\pi)\big].
\]
The mapping
\[
\Phi : \mathcal A_d \to \mathcal C_d,
\qquad
\bm\omega \mapsto \bm L(\bm\omega)\bm L(\bm\omega)^\top
\]
is bijective.
\end{proposition}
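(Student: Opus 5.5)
The plan is to compose two bijections. By Proposition~\ref{thm:corr}, $\mathcal C_d$ is in one-to-one correspondence with the set $\mathcal L_d$ of lower triangular matrices $\bm L$ with strictly positive diagonal whose rows have unit Euclidean norm, via $\bm L\mapsto \bm L\bm L^\top$. So it suffices to show that $\bm\omega\mapsto \bm L(\bm\omega)$ is a bijection from $\mathcal A_d$ onto $\mathcal L_d$; then $\Phi$ is the composition of two bijections. Since $\mathcal A_d$ is a product over rows and $\bm L(\bm\omega)$ is built row by row, with row $m$ depending only on $\Omega_m$, the claim reduces to a row-wise statement. For each $m$, the map $\Omega_m\mapsto \bm l_m$ should be a bijection from the $m$-th angular factor onto the positive hemisphere $H_m=\{\bm x\in\mathbb R^m:\|\bm x\|=1,\ x_m>0\}$. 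Row $1$ is trivial because $l_{11}=1$ is forced.

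\emph{Well-definedness and injectivity.} First, each row has unit norm. This follows from the telescoping identity $\sin^2+\cos^2=1$ applied from the last coordinate inward: $l_{m1}^2+l_{m2}^2=(\prod_{i\le m-2}\sin\omega_{mi})^2$, then add $l_{m3}^2$, and so on until $l_{mm}^2=\cos^2\omega_{m1}$ closes the sum to $1$. Second, the diagonal is positive: $l_{mm}=\cos\omega_{m1}>0$ because $\omega_{m1}\in(0,\pi/2)$ (or $(-\pi/2,\pi/2)$ for $m=2$). Hence $\bm L(\bm\omega)\in\mathcal L_d$.

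For injectivity, recover the angles sequentially.
\begin{itemize}
\item $\omega_{m1}=\arccos l_{mm}$ is unique on $(0,\pi/2)$.
\item Since $\sin\omega_{m1}>0$, $\cos\omega_{m2}=l_{m,m-1}/\sin\omega_{m1}$ determines $\omega_{m2}\in(0,\pi)$ uniquely.
\item Continuing in this way, each intermediate angle in $(0,\pi)$ is fixed by its cosine once the preceding sine product is nonzero.
\item Finally, $(\cos\omega_{m,m-1},\sin\omega_{m,m-1})$ is proportional, with positive factor, to $(l_{m2},l_{m1})$, which fixes $\omega_{m,m-1}$ on $(0,2\pi)$.
\end{itemize}
Positivity of the sines of the open-interval angles makes every division legitimate.

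\emph{Surjectivity.} Given $\bm l_m\in H_m$, run the same recursion forward to define candidate angles and check that they reproduce $\bm l_m$. Set $\omega_{m1}=\arccos l_{mm}\in(0,\pi/2)$, and then define $\omega_{mk}$ through normalized cosines of the residual tail norms. This is the standard hyperspherical inversion. The main obstacle is that the inversion needs every residual sine product to be nonzero, and the last angle must land strictly inside $(0,2\pi)$. Both can fail when intermediate coordinates vanish, for example when $l_{m,m-1}^2$ exhausts the residual norm or when $l_{m1}=0<l_{m2}$.

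I would handle this step by computing explicitly the image of the open box and comparing it with the set of rows actually appearing in matrices of $\mathcal L_d$ as the statement intends. Concretely, I would verify that for every $\bm\Gamma\in\mathcal C_d$ the Cholesky rows admit angles in the stated open ranges. If a degenerate row arises, I would first try absorbing it by an alternative choice of angle ordering. Failing that, I would isolate precisely which $\bm\Gamma$ are affected, so that the bijection argument is complete on the remaining ones.
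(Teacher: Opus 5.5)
\textbf{The surjectivity step is a genuine gap, and it cannot be closed: the proposition is false for $d\ge 3$.} Your reduction to rows via Proposition~\ref{thm:corr}, the unit-norm and positive-diagonal check, and the sequential recovery of the angles are correct and follow the paper's route. So $\Phi$ is well defined and injective. This reads the middle factor as $(0,\pi)^{m-3}$, which is what $\Omega_m$ and your argument actually use; the printed $(0,\pi)^{m-2}$ would add an unused coordinate. You correctly identified the surjectivity obstacle, which the paper's proof does not, but you leave it open.

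Carrying your inversion through gives the image exactly. On the open box every $\sin\omega_{mk}$ with $k\le m-2$ is positive, so $\|(l_{m1},l_{m2})\|=\prod_{i\le m-2}\sin\omega_{mi}>0$. Also, $l_{m1}=0<l_{m2}$ would force $\omega_{m,m-1}=0$. Every other point of the hemisphere is attained. Since $l_{11}=1$ gives $l_{m1}=\gamma_{m1}$, and when $\gamma_{m1}=0$ one has $l_{m2}=\gamma_{m2}/l_{22}$ with $l_{22}>0$, this says
\[
\Phi(\mathcal A_d)\;=\;\mathcal C_d\setminus\bigcup_{m=3}^{d}\bigl\{\bm\Gamma\in\mathcal C_d:\ \gamma_{m1}=0,\ \gamma_{m2}\ge 0\bigr\}.
\]
For example, $\bm\Gamma=\bm I_d$ is not attained: row $3$ of its Cholesky factor is $(0,0,1)$, which forces $\omega_{31}=0\notin(0,\pi/2)$. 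Nor is any $\bm\Gamma$ with $\gamma_{31}=0<\gamma_{32}$. These are exactly the sparse matrices the paper targets.

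No choice of endpoints rescues this particular formula. To reach $\bm l_m=\bm e_m$ you must admit $\omega_{m1}=0$, and then the other $m-2$ angles of row $m$ are free, which destroys injectivity. The paper's proof has the same hole. It cites \citet{blumenson1960} for bijectivity of spherical coordinates onto the open hemisphere, but that holds only off a null set of coordinate singularities, and here part of that set lies inside the hemisphere.

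\textbf{Neither of your fallbacks proves the stated proposition.} Reordering the angles (moving the pole) changes the map $\Phi$ fixed in the statement, although it is the right repair. Take $l_{m1}=\cos\theta_{m1}$, $l_{mk}=\cos\theta_{mk}\prod_{j<k}\sin\theta_{mj}$ for $2\le k\le m-1$, and $l_{mm}=\prod_{j<m}\sin\theta_{mj}$, with all $\theta_{mj}\in(0,\pi)$. Then every residual norm satisfies $\|(l_{mk},\dots,l_{mm})\|\ge l_{mm}>0$ and strictly exceeds $|l_{mk}|$ for $k<m$. All divisions in the inversion are therefore legitimate, and this gives a genuine bijection $(0,\pi)^N\to\mathcal C_d$.

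Isolating the affected $\bm\Gamma$ gives the true version of the present statement: $\Phi$ is a bijection from $\mathcal A_d$ onto the open, dense, full-measure subset displayed above, not onto $\mathcal C_d$. Rather than planning to ``verify'' that every Cholesky row has angles in the open ranges, you should conclude that this verification fails at $\bm I_d$ and state one of these corrected results.
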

While related arguments underlying Propositions \ref{thm:corr} and \ref{thm:bijection} may appear implicitly in the literature \citep{Lewandowski2009}, we have not found these results stated in the present form; for completeness, detailed proofs are provided in the Supplementary Section \ref{app:theorem}.
\vspace{-0.5cm}
\subsection{Transformation to an unconstrained parameter space}\vspace{-0.2cm}
The angular representation in Section~\ref{subsec:reparam} expresses correlation matrices in terms of parameters constrained to the structured domain \( \mathcal A_d \). To enable unconstrained numerical optimization, BLOC further maps this domain into Euclidean space. Let $N = d(d-1)/2$ denote the total number of angular parameters. Let
\(
\boldsymbol\theta = \{\theta_{m,k}\}
\)
denote an unconstrained collection of real-valued parameters indexed according to the angular coordinates. We define a smooth wrapping map
\[
\mathscr M : \mathbb R^N \to \mathcal A_d
\]
that enforces the angular constraints through periodic folding and reflection. Component-wise, the map is defined as
\[
\theta^{\mathrm{mapped}}_{m,k}
=
\mathscr M^{m,k}(\theta_{m,k}),
\]
with
\begin{align*}
\theta_{m,k}^{\mathrm{mapped}}
=
\begin{cases}
(\theta_{m,k} + \pi/2)\bmod \pi - \pi/2,
& m=2,\;k=1,\\[4pt]
\pi/2 - |(\theta_{m,k}\bmod \pi) - \pi/2|,
& m\ge3,\;k=1,\\[4pt]
\pi - |(\theta_{m,k}\bmod 2\pi) - \pi|,
& m\ge3,\;2\le k\le m-2,\\[4pt]
\theta_{m,k}\bmod 2\pi,
& m\ge3,\;k=m-1.
\end{cases}
\end{align*}

This construction guarantees that arbitrary perturbations in \( \mathbb R^N \) are mapped to admissible angles in \( \mathcal A_d \), and hence to valid correlation matrices. For notational convenience, we vectorize the parameters into \vspace{-0.3cm}
\[
\boldsymbol\varphi = (\varphi_1,\ldots,\varphi_N)\in\mathbb R^N,
\]
\vspace{-1.4cm}

using a fixed one-to-one correspondence between indices \( n \) and pairs \( (m(n),k(n)) \). Under this identification,
\[
\mathscr M(\boldsymbol\varphi)
=
\big(\mathscr M^{m(n),k(n)}(\varphi_n)\big)_{n=1}^N.
\]
Combining the wrapping map with the inverse angular-Cholesky mapping yields the full transformation
\[
\boldsymbol\varphi
\;\xrightarrow{\;\mathscr M\;}\;
\bm\omega \in \mathcal A_d
\;\xrightarrow{\;\Phi\;}\;
\bm C \in \mathcal C_d.
\]
The middle and bottom panels of Figure~\ref{fermi} illustrate this transformation and its role within BLOC.
\begin{figure}[]
	\centering
	\includegraphics[width=0.99\textwidth]{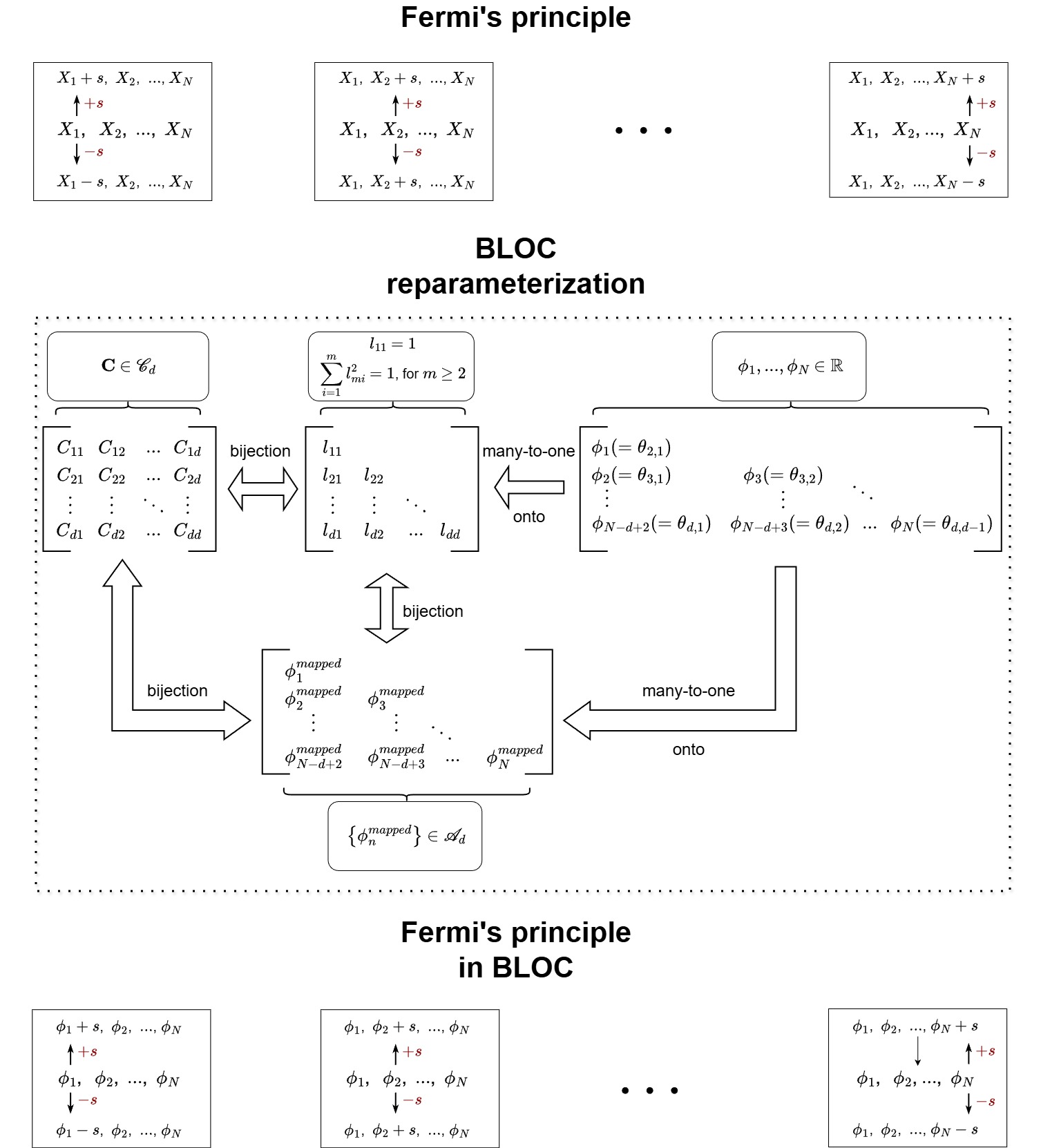}
	\caption{Illustration of Fermi’s principle and its implementation within BLOC.
\textbf{Top:} Classical Fermi’s principle in $\mathbb{R}^N$, where $2N$ coordinate-wise perturbations are generated using a fixed step-size $s$. \textbf{Middle:} BLOC reparameterization, mapping a correlation matrix $\bm{C} \in \mathcal{C}_d$ to an unconstrained angular vector $\boldsymbol\varphi \in \mathbb{R}^N$ via Cholesky and angular representations, with a wrapping map $\mathscr{M}$ enforcing the admissible angular domain $\mathcal{A}_d$. \textbf{Bottom:} Fermi’s principle applied in the angular space, where coordinate-wise perturbations in $\mathbb{R}^N$ are wrapped into $\mathcal{A}_d$ and mapped back to valid correlation matrices.}
	\label{fermi}
\end{figure}
\subsection{Optimization strategy}
\label{subsec:opt_strategy}
After reparameterization, estimation of the correlation matrix reduces to optimizing a potentially nonconvex and nonsmooth objective over an unconstrained Euclidean parameter space. In this setting, gradient-based methods are often unreliable or infeasible, either because derivatives are unavailable or because the objective surface contains multiple local minima induced by nonconvex penalties or robust loss functions. 
To address these challenges, BLOC adopts a derivative-free, coordinate-wise search strategy inspired by Fermi’s principle \citep{Fermi1952}. At each iteration, the objective is evaluated at a finite set of candidate points obtained by perturbing individual coordinates in positive and negative directions using a common step size. The best-performing candidate is selected, and the step size is adaptively reduced when no improvement is observed, allowing the search to transition naturally from global exploration to local refinement. This strategy is closely related to classical pattern search and direct search methods \citep{Torczon1997}, which are well suited for derivative-free optimization and admit convergence guarantees under mild conditions. To improve robustness in nonconvex landscapes, BLOC builds on the Recursive Modified Pattern Search (RMPS) framework \citep{Das2023RMPSH, Das2023MsiCOR, kim2025smart, Das2022}, which incorporates adaptive step-size control and a restart mechanism. These features enable escape from unfavorable local stationary points while retaining computational scalability. Further discussion of related global optimization literature is provided in Supplementary Section \ref{sec:global_opt}.
In the present setting, RMPS is applied to the unconstrained angular parameterization of the correlation matrix space, yielding an optimization strategy that respects the induced geometry and accommodates general nonconvex and nonsmooth objectives. This procedure forms the computational backbone of BLOC.
\vspace{-0.5cm}
\subsection{BLOC algorithm}
\label{subsec:bloc_algorithm}
We now describe the algorithmic procedure underlying BLOC for computing penalized correlation matrix estimators. For convenience, we rewrite the penalized estimation problem in \eqref{eq:problem 1} as
\begin{equation}
    \label{eq:problem 2}
    \min_{\bm \Gamma \in \mathcal C_d} g_{\lambda_n}(\bm \Gamma),
\end{equation}
where \( g_{\lambda_n}(\bm \Gamma) = h_n(\bm \Gamma) + \sum_{i \neq j} p_{\lambda_n}(|\gamma_{ij}|) \).
As discussed in the previous subsections, BLOC transforms the constrained optimization problem \eqref{eq:problem 2}, defined over the manifold of full-rank correlation matrices \( \mathcal C_d \), into the unconstrained problem
\begin{equation}
\label{eq:problem 3}
    \min_{\boldsymbol{\varphi} \in \mathbb{R}^N}
    f_{\lambda_n}(\boldsymbol{\varphi})
    :=
    g_{\lambda_n}\!\big( \Phi(\mathscr{M}(\boldsymbol{\varphi})) \big),
\end{equation}
where \( \boldsymbol{\varphi} \in \mathbb{R}^N \) denotes the vector of unconstrained angular parameters, \( \mathscr{M} : \mathbb{R}^N \to \mathcal A_d \) enforces the admissible angular domain, and \( \Phi : \mathcal A_d \to \mathcal C_d \) maps angles to correlation matrices via the inverse angular Cholesky representation (Proposition~\ref{thm:bijection}). To solve \eqref{eq:problem 3}, BLOC adapts a recursive coordinate-wise pattern search procedure, namely Recursive Modified Pattern Search (RMPS), to the angular parameterization of the correlation matrix space. The resulting algorithm is fully derivative-free and requires only pointwise evaluation of the objective function, making it applicable even when \( g_{\lambda_n} \) is non-differentiable or available through a black-box routine.

The BLOC algorithm proceeds over multiple \emph{runs}. Within each \emph{run}, the method performs a sequence of coordinate-wise searches along the directions \( \pm \bm e_i \) in the angular space, using a global step size that is adaptively reduced over iterations. At each iteration, \( 2N \) candidate perturbations are evaluated, the best improving direction (if any) is selected, and the current iterate is updated accordingly. If no improvement is achieved, the step size is reduced. Between \emph{runs}, the algorithm may restart from the best solution obtained so far, which allows the search to escape unfavorable local stationary points. We now summarize the main components of the algorithm.

\noindent \textit{Parallelization.}
BLOC is naturally parallelizable. At each iteration within a \emph{run}, the objective is evaluated at \(2N\) coordinate-wise perturbations along \( \pm \bm e_i \), where \(N=d(d-1)/2\) is the number of angular parameters. These evaluations are independent and can be executed in parallel, allowing BLOC to exploit up to \(d(d-1)\) threads per iteration. Parallelization affects only computational efficiency and does not alter the algorithmic logic or theoretical guarantees.

\noindent \textit{Initialization and transformation.}
Each \emph{run} is initialized either from a user-specified correlation matrix or directly in the angular space. When a correlation matrix is supplied, it is mapped to its angular representation via the bijective transformation in Section~\ref{subsec:reparam}, ensuring validity of the initial iterate.

\noindent \textit{Coordinate-wise search and step-size control.}
Within a \emph{run}, BLOC performs coordinate-wise pattern search by evaluating \(2N\) candidate updates at each iteration and accepting the candidate that yields the largest decrease in the objective. If no improvement is found, the iterate remains unchanged and the global step size is geometrically reduced, enabling a transition from coarse exploration to local refinement.

\noindent \textit{Termination and restarts.}
A \emph{run} terminates when a maximum iteration budget is reached or the step size falls below a prescribed threshold. To mitigate entrapment in local minima, BLOC is executed over multiple \emph{runs}. In practice, each new \emph{run} is initialized from the best solution of the previous one while resetting the step size, a strategy that empirically improves exploration and underpins the reachability results in Section~\ref{sec:bloc_theory}.

\noindent \textit{Final solution.}
After completion of all \emph{runs}, the angular vector attaining the lowest objective value is mapped back to the correlation matrix space using the inverse transformation \( \Phi \circ \mathscr{M} \), yielding the final estimator.

In summary, BLOC converts the constrained penalized correlation estimation problem into an unconstrained formulation and computes the resulting estimator using a derivative-free coordinate-wise search strategy with adaptive step-size control and restarts. This design ensures that all iterates correspond to valid correlation matrices while accommodating a wide class of objective functions. The overall computational flow is illustrated in Figure~\ref{BLOC_concept}, and the full procedure is summarized in Algorithm~\ref{alg:bloc}.
\begin{figure}[]
	\centering
\includegraphics[width=0.99\textwidth]{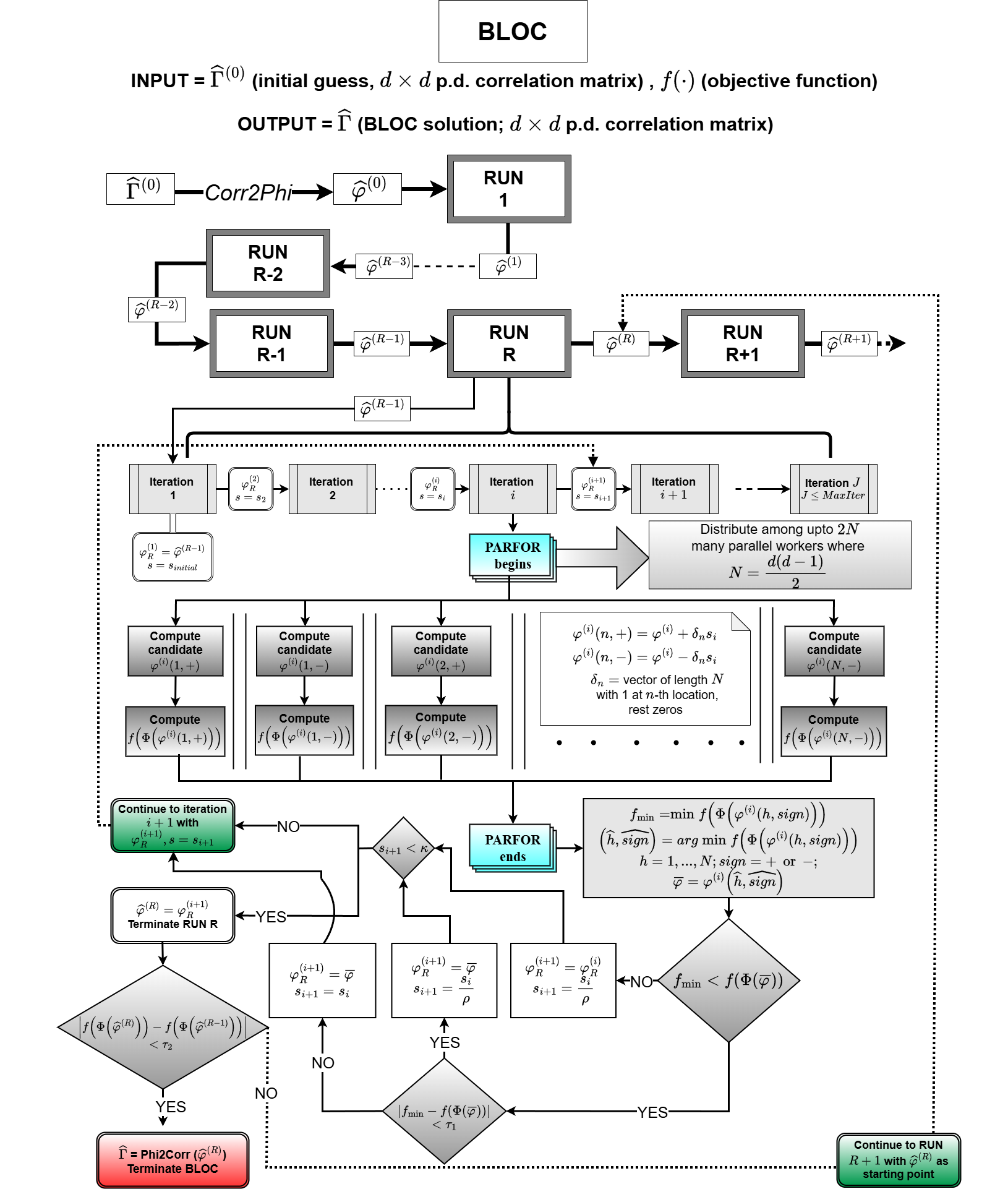}
	\caption{Flowchart of the BLOC algorithm. The method alternates between coordinate-wise pattern searches and adaptive step-size updates across multiple \emph{runs}, evaluating up to \(2N\) candidate perturbations per iteration (with \(N=d(d-1)/2\)) in parallel, and mapping the final angular solution back to a valid correlation matrix.}
	\label{BLOC_concept}
\end{figure}

\begin{algorithm}[!h]
\caption{BLOC}\label{alg:bloc}
{\footnotesize
\noindent\textbf{Input:} Initial correlation matrix $\bm{\Gamma}^{(0)} \in \mathcal{C}_d$\\
\textbf{Output:} Optimized solution $\widehat{\bm{\Gamma}} \in \mathcal{C}_d$
}
\begin{algorithmic}[1]
\fontsize{8}{1}
\STATE \textbf{Initialization:} $R \gets 1$ \hfill {\tiny ($R$ = run index)}
\STATE \emph{top}:
\IF {$R = 1$}
\STATE $\widehat{\bm{\varphi}}^{(0)} \gets \textit{Corr2Phi}(\bm{\Gamma}^{(0)})$ {\tiny \bigg($\widehat{\bm{\varphi}}^{(r)}$ denotes the value of $\bm{\varphi}$ at the end of $r$-th \textit{run};
\textit{Corr2Phi}$()$ converts correlation matrix in 
\textcolor{white}{fill fill fill fill  fill fill fill fill fill fill fill fill fill}
$\mathcal{C}_d$ to its polar coordinate representation on $\mathcal{A}_d$\bigg)}
\STATE $\bm{\varphi}^{(0)} \gets \widehat{\bm{\varphi}}^{(0)},\;j \gets 1$ {\tiny \bigg($\bm{\varphi}^{(j)}$ denotes the value of $\bm{\varphi}$ at the end of $j$-th iteration inside current \textit{run}\bigg)}
\ELSE
\STATE $\bm{\varphi}^{(0)} \gets \widehat{\bm{\varphi}}^{(R-1)},\;j \gets 1$
\ENDIF
\STATE $s^{(0)} \gets s_{initial}$ {\tiny \bigg($s^{(j)}$ denotes the value of \textit{global step-size} at the end of $j$-th iteration inside current \textit{run}\bigg)}
\WHILE {($j \leq max\_iter$ and $s^{(j)} > \kappa$)} 
\STATE $F_1 \gets f(\Phi(\bm{\varphi}^{(j-1)}))$, $s^{(j)} \gets s^{(j-1)}$ {\tiny \bigg(note that, $\bm{\varphi}^{(j-1)} = (\varphi_1^{(j-1)}, \ldots, \varphi_N^{(j-1)})$\bigg)}
\FOR {$h = 1:2N$} {\tiny \bigg(can be distributed among up to $2N$ parallel cores\bigg)}
\STATE $i \gets [\frac{(h+1)}{2} ]$ {\tiny \bigg($[\cdot]$ denotes largest smaller integer function; $i$ denotes the vector location\bigg)}
\STATE $\bm{\varphi}_h \gets \bm{\varphi}^{(j-1)}${\tiny\bigg(note that, 
$\bm{\varphi}_h = \big\{\bm{\varphi}_h(i): 1 \leq i \leq N\big\}$\bigg)} 
\STATE $ s_h \gets (-1)^hs^{(j)}$
\STATE $\bm{\varphi}_h(i) \gets \mathscr{M}^i(\bm{\varphi}_h(i) + s_h)$ {\tiny \bigg(this ensures updated $\bm{\varphi}_h$ remains in $\mathcal{A}_d$\bigg)}
\STATE $f_h \gets f(\Phi(\bm{\varphi}_h))$
\ENDFOR
\STATE $h_{best} \gets \argmin_h f_h$ {\tiny over} $h = 1,\ldots,2N$
\STATE $\bm{\varphi}_{temp} \gets \bm{\varphi}_{h_{best}}$
\STATE $F_2 \gets f_{h_{best}}$
\STATE $\bm{\varphi}^{(j)} \gets \bm{\varphi}^{(j-1)}$
\IF {($F_2 < F_1$)} $\bm{\varphi}^{(j)} \gets \bm{\varphi}_{temp}$ 
\ENDIF
\IF {($j > 1$) and $|F_1-min(F_1,F_2)| < \tau_1$ and $s^{(j)}>\kappa$)} 
\STATE $s^{(j)} \gets \frac{s^{(j)}}{\rho}$
\ENDIF
\STATE $j \gets j+1$
\ENDWHILE
\STATE $\widehat{\bm{\varphi}}^{(R)} \gets \bm{\varphi}^{(j)}$ 
\IF {$R > max\_run$ or $|f(\Phi(\widehat{\bm{\varphi}}^{(R)})) - f(\Phi(\widehat{\bm{\varphi}}^{(R-1)}))| < \tau_2$ }
\STATE $\widehat{\bm{\varphi}} =\widehat{\bm{\varphi}}^{(R)}$ {\tiny \big($\widehat{\bm{\varphi}}$ denotes BLOC optimized final solution in $\mathcal{A}_d$\big)}
\STATE \textbf{Return} $\widehat{\bm{\Gamma}} = \textit{Phi2Corr}(\widehat{\bm{\varphi}})$ {\tiny \bigg(\textit{Phi2Corr}$()$ maps back polar representation on $\mathcal{A}_d$ to corresponding correlation matrix in $\mathcal{C}_d$\bigg)}
\STATE \textbf{break} {\tiny \big(exiting BLOC\big)}
\ELSE
\STATE $R \gets R+1$
\STATE \textbf{go to} \emph{top}
\ENDIF 
\end{algorithmic}
\end{algorithm}

\vspace{-0.5cm}
\subsection{Practical considerations}
Theoretical guarantees for BLOC, including global reachability and convergence in probability, rely on a restart mechanism with exploration from multiple initializations; see Assumption~(B3) and Theorems~\ref{thm:open_ball_reach} and~\ref{thm:global_conv_prob}. In practice, however, only a small number of \emph{runs} (typically 5–20) is sufficient to obtain high-quality solutions. Within each \emph{run}, coordinate-wise polling combined with progressive step-size reduction enables exploration at multiple resolutions, transitioning naturally from global search to local refinement. Rather than fully randomized restarts, BLOC initializes successive \emph{runs} from the best previous solution while resetting the step size, a strategy that empirically facilitates escape from local minima. This warm-restart design has proven effective in earlier RMPS-based methods \citep{Das2023RMPSH, Das2022, Das2021, Das2023MsiCOR, kim2025smart}, yielding a computationally efficient and numerically stable procedure. Its practical benefits are demonstrated in the benchmark studies of Section~\ref{sec:sims} and are further detailed in Supplementary Section \ref{sec:benchmark}.
\vspace{-0.7cm}
\section{Theory}
\label{sec:theory}\vspace{-0.5cm}
We first establish the existence of a local minimizer of Problem \eqref{eq:problem 1} and derive its convergence rate under Frobenius norm (Section \ref{sec:stat_theory}). Thereafter, we present the theoretical guarantees for BLOC (Section \ref{sec:bloc_theory}).
\vspace{-0.5cm}
\subsection{Statistical guarantees: rates and sparsistency}
\label{sec:stat_theory} \vspace{-0.2cm}
We begin by deriving a Frobenius norm error bound for any local minimizer of Problem \eqref{eq:problem 1}
(Theorem \ref{thm:G7}), and subsequently show that such a minimizer recovers the correct 
off-diagonal support of the true correlation matrix under suitable conditions on the penalty 
(Theorem \ref{thm:G8}). Let $\bm\Gamma_0=(\gamma_{0,ij}:1\le i,j\le d)$ be the true $d\times d$ correlation matrix with off-diagonal support $S\subset\{(i,j):i\neq j\}$ of size $s=|S|$. Consider a minimizer $\hat{\bm \Gamma}_n=(\hat{\gamma}_{n,ij}:1\le i,j\le d)$ of Problem \eqref{eq:problem 1}. 
Our analysis generalizes the framework of
\citet{lam2009sparsistency} by allowing the Gaussian negative log-likelihood to be replaced by an arbitrary empirical loss 
$h_n(\bm\Gamma)$. Since $h_n$ is not required to be likelihood-based or even differentiable, the curvature, identifiability, and 
concentration properties that hold automatically in the Gaussian model must be imposed directly. Assumptions (A1)--(A4) encode precisely these minimal requirements. Additionally, we adopt Assumptions (A5) and (A6), which impose regularity conditions on the penalty function $p_\lambda(\cdot)$, from \citet{lam2009sparsistency}, which are satisfied by commonly used nonconvex penalties such as SCAD. Further discussion and interpretation of these assumptions, along with proofs of Theorems~\ref{thm:G7} and~\ref{thm:G8}, are provided in the Supplementary Section \ref{app:stat_proofs}.
\begin{enumerate}
\item[(A1)]\emph{Identifiability \& interiority:} The population        risk $\mathbb E[h_n(\bm\Gamma)]$ has a unique minimizer             $\bm\Gamma_0$ over $\mathcal C_d$, and $\bm\Gamma_0$ is an            interior point of $\mathcal C_d$.
\item[(A2)]\emph{Restricted strong convexity (RSC):} There exist        $\kappa>0$ and $\tau\ge 0$ such that for all $\Delta$ in the cone
    \[
    \mathcal C(S):=\bigl\{\Delta:\ \|\Delta_{S^c}\|_1\le 3\|\Delta_{S}\|_1,\ \mathrm{diag}(\Delta)=0\bigr\},
    \quad \bm\Gamma_0+\Delta\in\mathcal C_d,
    \]
    and for some $g_0\in\partial h_n(\bm\Gamma_0)$,
    \[
    h_n(\bm\Gamma_0+\Delta)-h_n(\bm\Gamma_0)-\langle g_0,\Delta\rangle
    \ \ge\ \frac{\kappa}{2}\|\Delta\|_F^2-\tau\,\Phi(\Delta),
    \]
    with a tolerance function $\Phi(\Delta)$.
\item[(A3)]\emph{Score/subgradient concentration:} With probability     $1-o(1)$,
    \[
    \inf_{g_0\in\partial h_n(\bm\Gamma_0)}\|g_0\|_\infty\ \le\ C_0\sqrt{\frac{\log d}{n}}.
    \] 
\item[(A4)]\emph{Subgradient regularity:} There exists a constant       $L>0$ such that for all $\bm\Gamma\in\mathcal C_d$,
    \[
    \sup_{g\in\partial h_n(\bm\Gamma),\,g_0\in\partial h_n(\bm\Gamma_0)}|\langle g-g_0,\bm\Gamma-\bm\Gamma_0\rangle|\ \le\ L\,\|\bm\Gamma-\bm\Gamma_0\|_F^2.
    \]
\item[(A5)]The penalty $p_\lambda(\cdot)$ is singular at the origin,    with $\lim_{t\downarrow 0}{p_{\lambda}(t)}/(\lambda t)=k>0$.
\item[(A6)]Denote $a_{n}=\max_{(i,j)\in S}p'_{\lambda_{n}}              (|\gamma_{0,ij}|)$ and $b_{n}=\max_{(i,j)\in S}p''_{\lambda_{n}}    (|\gamma_{0,ij}|)$, and assume $a_{n}=O\big\{(\log                      d)/n\big\}^{1/2}$, $b_{n}=o(1)$, and $\min_{(i,j)\in                S}|\gamma_{0,ij}|/\lambda_{n}\to\infty$.
\end{enumerate}
Theorem~\ref{thm:G7} establishes the convergence rate of the penalized estimator $\hat{\bm \Gamma}_n$ under
the general loss $h_n(\bm\Gamma)$ and its structure parallels Theorem 7 of \citet{lam2009sparsistency}, which derives the 
Frobenius norm rate for the Gaussian negative log-likelihood.
\vspace{-0.2cm}
\setcounter{theorem}{0}
\begin{theorem}[Convergence Rates]\label{thm:G7}
Under assumptions (A1)--(A6), if
$(d+s)(\log d)^k/n=O(1)$ for some $k>1$, and $(s+1)\log d/n=O({\lambda_n}^2)$, then there exists a local minimizer $\hat{\bm\Gamma}_n$ of \eqref{eq:problem 1} such that, {with probability $1-o(1)$},
\[
\|\hat{\bm\Gamma}_n-\bm\Gamma_0\|^2_F\ =\ O_P\!\Big({s\,\dfrac{\log d}{n}}\Big),
\qquad
\|\hat{\bm\Sigma}_n-\bm\Sigma_0\|^2 =\ O_P\!\Big\{(s+1)\dfrac{\log d}{n}\Big\},
\]
where $\hat{\bm\Sigma}_n=\hat{\bm W}\,\hat{\bm\Gamma}_n\,\hat{\bm W}$ and $\hat{\bm  W}^2=\mathrm{diag}(\bm S)$.
\end{theorem}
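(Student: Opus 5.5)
We follow the sphere argument of \citet{lam2009sparsistency} (Theorem 7 there), replacing the Gaussian likelihood expansions with Assumptions (A1)--(A4). Set $\alpha_n=\sqrt{s\log d/n}$ and consider the sphere $\mathcal B_M=\{\bm\Gamma_0+\Delta:\ \Delta=\Delta^\top,\ \mathrm{diag}(\Delta)=0,\ \|\Delta\|_F=M\alpha_n\}$. By (A1), $\bm\Gamma_0$ is interior to $\mathcal C_d$. The rate condition $(d+s)(\log d)^k/n=O(1)$ gives $\|\Delta\|\le\|\Delta\|_F=M\alpha_n\to0$, so $\mathcal B_M\subset\mathcal C_d$ for large $n$. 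It then suffices to show that, with probability $1-o(1)$,
\[
D(\Delta):=g_{\lambda_n}(\bm\Gamma_0+\Delta)-g_{\lambda_n}(\bm\Gamma_0)>0
\]
for every $\Delta$ on this sphere. Continuity of $g_{\lambda_n}$ on the compact ball then yields a local minimizer in its interior with $\|\hat{\bm\Gamma}_n-\bm\Gamma_0\|_F^2=O_P(s\log d/n)$.

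\emph{Step 1 (loss).} Pick $g_0\in\partial h_n(\bm\Gamma_0)$ attaining the bound in (A3) (or nearly attaining it). Split $h_n(\bm\Gamma_0+\Delta)-h_n(\bm\Gamma_0)=\langle g_0,\Delta\rangle+R(\Delta)$.

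The linear term satisfies
\[
|\langle g_0,\Delta\rangle|\le C_0\sqrt{\log d/n}\,\|\Delta\|_1 .
\]
We split $\|\Delta\|_1=\|\Delta_S\|_1+\|\Delta_{S^c}\|_1$ and bound $\|\Delta_S\|_1\le\sqrt s\|\Delta\|_F$, so the $S$-part is $O(\alpha_n\|\Delta\|_F)=O(M\alpha_n^2)$.

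For the remainder we use (A2): $R(\Delta)\ge\frac{\kappa}{2}M^2\alpha_n^2-\tau\Phi(\Delta)$, with the tolerance assumed to be $o(\alpha_n^2)$ on the sphere.

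\emph{Step 2 (penalty).} On $S^c$, (A5) gives $p_{\lambda_n}(|\Delta_{ij}|)\ge k'\lambda_n|\Delta_{ij}|$ for small entries. Since $\lambda_n^2\gtrsim(s+1)\log d/n$, we have $\lambda_n\gg\sqrt{\log d/n}$ up to constants, and the penalty on $S^c$ dominates $C_0\sqrt{\log d/n}\,\|\Delta_{S^c}\|_1$.

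On $S$, take a Taylor expansion: $p_{\lambda_n}(|\gamma_{0,ij}+\Delta_{ij}|)-p_{\lambda_n}(|\gamma_{0,ij}|)$ is bounded in absolute value by $a_n|\Delta_{ij}|+b_n\Delta_{ij}^2(1+o(1))$. The sign is preserved here because $\min_S|\gamma_{0,ij}|/\lambda_n\to\infty$ while $|\Delta_{ij}|\le M\alpha_n$. Summing over $S$ gives at most
\[
a_n\sqrt s\,M\alpha_n+b_nM^2\alpha_n^2=O(M\alpha_n^2)+o(M^2\alpha_n^2)
\]
by (A6).

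\emph{Step 3 (sign).} Combining, $D(\Delta)\ge\frac{\kappa}{2}M^2\alpha_n^2-CM\alpha_n^2-o(M^2\alpha_n^2)>0$ for $M$ large, uniformly on the sphere, with probability $1-o(1)$.

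\emph{Main obstacle.} The hardest step is reconciling (A2), which holds only on the cone $\mathcal C(S)$, with a sphere that contains non-cone directions. For $\Delta\notin\mathcal C(S)$ we have $\|\Delta_{S^c}\|_1>3\|\Delta_S\|_1$. In that case we would argue directly that the $S^c$ penalty term $k'\lambda_n\|\Delta_{S^c}\|_1$ exceeds both the linear term and any possible negative part of $R(\Delta)$. The latter we control via (A4), together with convexity along segments: $R(\Delta)\ge-L\|\Delta\|_F^2$ by integrating subgradients along the segment. We then compare $L M^2\alpha_n^2$ with $\lambda_n\|\Delta_{S^c}\|_1\ge\lambda_n\|\Delta\|_F\cdot(\text{const})$. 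This forces an extra requirement linking $\lambda_n$ and $\alpha_n$ that may need to be taken from the assumed rate conditions, and we expect this to be the delicate point.

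\emph{Step 4 (covariance).} Write
\[
\hat{\bm\Sigma}_n-\bm\Sigma_0=\hat{\bm W}(\hat{\bm\Gamma}_n-\bm\Gamma_0)\hat{\bm W}+(\hat{\bm W}\bm\Gamma_0\hat{\bm W}-\bm W\bm\Gamma_0\bm W).
\]
The first term is bounded in operator norm by $\|\hat{\bm W}\|^2\|\hat{\bm\Gamma}_n-\bm\Gamma_0\|_F$. For the second we use $\|\hat{\bm W}-\bm W\|=\max_i|\hat\sigma_i-\sigma_i|=O_P(\sqrt{\log d/n})$, from a concentration of the sample variances imposed as in \citet{lam2009sparsistency}, together with bounded $\|\bm\Gamma_0\|$. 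Squaring gives $O_P\{(s+1)\log d/n\}$.
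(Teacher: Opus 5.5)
Your sphere argument takes a different route from the paper's. As written, though, it has a real gap exactly where you flagged it: directions on the sphere outside $\mathcal C(S)$, where (A2) gives no curvature.

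Carrying out your proposed fix precisely: if $\|\Delta_{S^c}\|_1>3\|\Delta_S\|_1$, then $\|\Delta\|_F\le\|\Delta_S\|_1+\|\Delta_{S^c}\|_1<\tfrac43\|\Delta_{S^c}\|_1$. Also $|R(\Delta)|\le L\|\Delta\|_F^2$ holds without any convexity: apply Lebourg's mean value theorem for the Clarke subdifferential on the segment $[\bm\Gamma_0,\bm\Gamma_0+\Delta]\subset\mathcal C_d$, then (A4) at $\bm\Gamma_0+t\Delta$. Your bounds then give
\[
D(\Delta)\ \ge\ \Bigl(k'\lambda_n-C\sqrt{\log d/n}\Bigr)\|\Delta_{S^c}\|_1-(L+o(1))M^2\alpha_n^2,
\qquad \|\Delta_{S^c}\|_1\ \ge\ \tfrac34 M\alpha_n .
\]
This is positive only if $\lambda_n\gtrsim (L/k')\,M\alpha_n$. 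The requirement is not an artifact of the bound: nothing in (A1)--(A4) prevents $R(\Delta)$ from being of order $-L\|\Delta\|_F^2$ along, say, a single symmetric pair in $S^c$.

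Since $M$ was already fixed large by the cone directions (a multiple of $(C_0+\limsup_n a_n\sqrt{n/\log d})/\kappa$), you need $\lambda_n\ge C_*\sqrt{(s+1)\log d/n}$ with $C_*$ a \emph{sufficiently large} constant depending on $\kappa,L,k,C_0$. The hypothesis $(s+1)\log d/n=O(\lambda_n^2)$ supplies only some unspecified constant. So this cannot be taken from the stated rate conditions; it has to be imposed explicitly, or replaced by $\lambda_n/\alpha_n\to\infty$. Once imposed, it also does three other jobs, and your argument then closes:
\begin{itemize}
\item it gives $M\alpha_n\le\lambda_n$, which you need anyway for $p_{\lambda_n}(t)\ge k'\lambda_n t$ on $0<t\le M\alpha_n$ (SCAD and MCP flatten beyond $a\lambda_n$);
\item it justifies your Step 2 claim that the $S^c$ penalty dominates $C_0\sqrt{\log d/n}\,\|\Delta_{S^c}\|_1$, which is likewise a large-constant statement rather than a consequence of the $O(\cdot)$ hypothesis;
\item it is the same kind of large-constant choice the paper makes with $\lambda_n\ge c_1\|g_0\|_\infty$.
\end{itemize}

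The paper does not offer a way around this obstacle. It starts from the basic inequality $g_{\lambda_n}(\hat{\bm\Gamma}_n)\le g_{\lambda_n}(\bm\Gamma_0)$. It lower-bounds the $S^c$ penalty by $k\lambda_n\|\Delta_{S^c}\|_1$ and the $S$ part by $-c_b\lambda_n\|\Delta_S\|_1$, absorbs the score, and deduces $\|\Delta_{S^c}\|_1\le3\|\Delta_S\|_1$. Only then does it use (A2) with $\|\Delta\|_1\le4\sqrt s\|\Delta\|_F$ to get $\|\Delta\|_F\lesssim\sqrt s\,\lambda_n$. But the cone is deduced from an inequality that has already invoked (A2). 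It therefore tacitly needs $h_n(\bm\Gamma_0+\Delta)-h_n(\bm\Gamma_0)-\langle g_0,\Delta\rangle\ge0$ off the cone, for example convex $h_n$.

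Once the constant is fixed, your route has two advantages:
\begin{itemize}
\item It produces a local minimizer, which is what the statement asserts, rather than treating $\hat{\bm\Gamma}_n$ as a global minimizer compared against $\bm\Gamma_0$.
\item Because you bound the $S$ part by $a_n\|\Delta_S\|_1+b_n\|\Delta_S\|_F^2$, your rate is $s\log d/n$ however large $\lambda_n$ is. The paper's bound $\sqrt s\,\lambda_n$ yields the stated rate only for $\lambda_n\asymp\sqrt{\log d/n}$, which conflicts with $(s+1)\log d/n=O(\lambda_n^2)$ when $s\to\infty$.
\end{itemize}

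A few other inputs you add are not among (A1)--(A6): $\tau\Phi(\Delta)=o(\alpha_n^2)$, and variance concentration with bounded $\|\bm\Gamma_0\|$ and $\|\bm W\|$ in Step 4. The paper merely asserts these points as well. Finally, placing the sphere inside $\mathcal C_d$ requires the smallest eigenvalue of $\bm\Gamma_0$ to be bounded below uniformly in $n$, not just interiority of $\bm\Gamma_0$.
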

Theorem~\ref{thm:G7} shows that $\hat{\bm \Gamma}_n$ attains the same Frobenius norm rate as in the Gaussian setting of \citet{lam2009sparsistency}. We now turn to support recovery, where an analogue of their Theorem~8 is established below.
\vspace{-0.5cm}
\begin{theorem}[Sparsistency]\label{thm:G8}
Under the assumptions of Theorem \ref{thm:G7} and for any local mimimizer $\hat{\bm \Gamma}_n=(\hat{\gamma}_{n,ij}:1\le i,j\le d)$ of \eqref{eq:problem 1} satisfying $\|\hat{\bm\Gamma}_n-\bm\Gamma_0\|^2_F=O_P(s\log d/{n})$, if ${\log d}/{n}\ =\ O(\lambda_{n}^2)$, then with probability tending to $1$,
$\hat\gamma_{n,ij}=0~\text{for all }(i,j)\in S^c$.
\end{theorem}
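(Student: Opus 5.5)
We follow the strategy of Theorem~8 in \citet{lam2009sparsistency}: a first-order (KKT-type) argument at a local minimizer that exploits the singularity of $p_{\lambda_n}$ at the origin (A5). Suppose, for contradiction, that $\hat\gamma_{n,ij}\neq 0$ for some $(i,j)\in S^c$. Since $\hat{\bm\Gamma}_n$ is a local minimizer in the interior of $\mathcal C_d$ (interiority follows from (A1) together with $\|\hat{\bm\Gamma}_n-\bm\Gamma_0\|_F\to 0$), we can perturb the symmetric pair $(i,j),(j,i)$ alone. This changes only $\gamma_{ij}=\gamma_{ji}$ and keeps the iterate in $\mathcal C_d$ for small perturbations. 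The one-sided directional derivative of the objective in the direction $-\mathrm{sgn}(\hat\gamma_{n,ij})(E_{ij}+E_{ji})$ must then be nonnegative. This yields the stationarity condition
\[
2\,p'_{\lambda_n}(|\hat\gamma_{n,ij}|)\ \le\ \bigl|\langle \hat g, E_{ij}+E_{ji}\rangle\bigr| \;\le\; 2\|\hat g\|_\infty
\]
for some $\hat g\in\partial h_n(\hat{\bm\Gamma}_n)$; for nonsmooth $h_n$ we use the directional-derivative form.

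The next step is to bound $\|\hat g\|_\infty$. Decompose $\hat g=g_0+(\hat g-g_0)$ with $g_0$ chosen as in (A3), so that $\|g_0\|_\infty\le C_0\sqrt{\log d/n}$ with probability $1-o(1)$. For the difference, we use (A4). Specifically, we apply it along the perturbation direction, or equivalently use the Lipschitz-type control it implies on subgradient differences. Combined with the assumed rate $\|\hat{\bm\Gamma}_n-\bm\Gamma_0\|_F^2=O_P(s\log d/n)$, this gives an entrywise bound of order $O_P(\sqrt{\log d/n})$, possibly up to a factor absorbed as in \citet{lam2009sparsistency}, where the analogous term is $\|\hat{\bm\Gamma}_n^{-1}-\bm\Gamma_0^{-1}\|$. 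This is the \emph{main obstacle}. (A4) controls only the inner product $\langle g-g_0,\bm\Gamma-\bm\Gamma_0\rangle$, not $\|g-g_0\|_\infty$. So we will either read (A4) as applying to arbitrary increments, i.e.\ a Lipschitz condition $\|g-g_0\|_F\le L\|\bm\Gamma-\bm\Gamma_0\|_F$, or restrict to perturbations within the local neighbourhood. In the latter case we use a mean-value argument on the one-dimensional function $t\mapsto h_n(\hat{\bm\Gamma}_n+t(E_{ij}+E_{ji}))$ and control its slope through (A4) and (A2). We will state clearly which reading is used.

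For the left side, the fact that $\hat\gamma_{n,ij}\to 0$ uniformly over $S^c$ (since $|\hat\gamma_{n,ij}|\le\|\hat{\bm\Gamma}_n-\bm\Gamma_0\|_F$) combines with (A5) to give $p'_{\lambda_n}(|\hat\gamma_{n,ij}|)\ge (k/2)\lambda_n$ for $n$ large. This holds for SCAD/MCP-type penalties whenever $|\hat\gamma_{n,ij}|=o(\lambda_n)$. We also need $|\hat\gamma_{n,ij}|=O_P(\sqrt{s\log d/n})$ to be $o(\lambda_n)$ on the relevant event, or else a case split in which large $|\hat\gamma_{n,ij}|$ contradicts the Frobenius rate. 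Putting the two sides together, the inequality reads $k\lambda_n\lesssim \sqrt{\log d/n}$. This is incompatible with the rate condition relating $\lambda_n$ to $\log d/n$, which must be read as $\sqrt{\log d/n}=o(\lambda_n)$ for the contradiction to hold. Finally, we take a union over all $(i,j)\in S^c$. Since the bounds on $\|g_0\|_\infty$ and on the subgradient difference are uniform in the entries, a single event of probability $1-o(1)$ suffices, and this gives $\hat\gamma_{n,ij}=0$ for all $(i,j)\in S^c$ with probability tending to one.
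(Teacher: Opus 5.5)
Your route is the same as the paper's. You take the first-order condition at an entry $(i,j)\in S^c$ with $\hat\gamma_{n,ij}\neq 0$, get a penalty slope of order $\lambda_n$ from (A5), and use the bound $|g_{ij}(\hat{\bm\Gamma}_n)|\le \|g_0\|_\infty+\|g(\hat{\bm\Gamma}_n)-g_0\|_\infty\le C_0\sqrt{\log d/n}+L\|\Delta\|_F$ from (A3) and a Lipschitz use of (A4), which gives the contradiction.

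The obstacles you flag are real, and the paper does not overcome them; it asserts past them:
\begin{itemize}
\item It writes $\|g(\hat{\bm\Gamma}_n)-g_0\|_\infty\le L\|\Delta\|_F$ ``by Assumption (A4)'', although (A4) only bounds $\langle g-g_0,\bm\Gamma-\bm\Gamma_0\rangle$.
\item It uses $p'_{\lambda_n}(|\hat\gamma_{n,ij}|)\ge p'_{\lambda_n}(0+)=k\lambda_n$, which is false for MCP at every $t>0$ and for SCAD once $t>\lambda_n$.
\item It deduces $(C_0+LC_\Delta\sqrt{s})\sqrt{\log d/n}<k\lambda_n$ from $\lambda_n\asymp\sqrt{\log d/n}$. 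That needs an $o(\cdot)$ relation and ignores growth of $s$.
\end{itemize}
Your treatment of the penalty slope through $|\hat\gamma_{n,ij}|=o(\lambda_n)$, and your directional-derivative form of the first-order condition, are the right repairs. With the hypotheses strengthened explicitly, your argument is more complete than the one in the paper.

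Two points still need fixing.

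First, commit to the Lipschitz reading of (A4), i.e.\ $\sup_{g\in\partial h_n(\bm\Gamma)}\|g-g_0\|_\infty\le L\|\bm\Gamma-\bm\Gamma_0\|_F$. Your fallback, a mean-value argument on $t\mapsto h_n(\hat{\bm\Gamma}_n+t(E_{ij}+E_{ji}))$ controlled by (A4) and (A2), cannot work. (A4) says nothing about the component of $g-g_0$ transverse to $\bm\Gamma-\bm\Gamma_0$, and (A2) is only a lower curvature bound on a cone, so neither bounds $\langle \hat g,E_{ij}+E_{ji}\rangle$ from above. Your ``case split'' alternative also fails. Under the stated $O(\cdot)$ conditions, $|\hat\gamma_{n,ij}|$ may be of exact order $\lambda_n$. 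That is consistent with the Frobenius rate, and it is exactly where the SCAD/MCP slope can be close to $0$.

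Second, the condition you end with, $\sqrt{\log d/n}=o(\lambda_n)$, is too weak when $s\to\infty$. The subgradient-difference term is $L\|\Delta\|_F=O_P(\sqrt{s\log d/n})$. You also need $\|\Delta\|_F=o_P(\lambda_n)$ to get $p'_{\lambda_n}(|\hat\gamma_{n,ij}|)\ge (k/2)\lambda_n$. The condition that actually closes the argument is $s\log d/n=o(\lambda_n^2)$. Under it, a single event of probability $1-o(1)$ handles all $(i,j)\in S^c$ at once, as you say.
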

\vspace{-0.5cm}
\subsection{Theoretical properties of BLOC}
\label{sec:bloc_theory}
We establish theoretical guarantees for BLOC by analyzing the behavior of its underlying recursive modified pattern search (RMPS) routine. The analysis is conducted on a compact hyperrectangle \( \mathcal{C} \subset \mathbb{R}^N \), corresponding to the unconstrained angular parameter space induced by the reparameterization in Section~\ref{subsec:reparam}. Owing to the bijective mapping between this space and the correlation-matrix manifold \( \mathcal{C}_d \) (Proposition~\ref{thm:bijection}), all results derived for RMPS on \( \mathcal{C} \) transfer directly to the BLOC iterates on \( \mathcal{C}_d \).

The results characterize three fundamental properties of the algorithm. Theorem~\ref{thm:stationarity} establishes that failure of coordinate-wise descent at arbitrarily small step sizes implies first-order stationarity. Theorem~\ref{thm:open_ball_reach} shows that, under a randomized multi-\emph{run} schedule, BLOC reaches any prescribed neighborhood of a global minimizer with probability one, while Theorem~\ref{thm:global_conv_prob} strengthens this result to convergence in probability under additional smoothness and local convexity assumptions. Finally, Theorem~\ref{thm:conv_rate} shows that, for smooth convex objectives, BLOC achieves an \( O(1/r) \) sublinear convergence rate within a single \emph{run}, matching classical guarantees for gradient descent and coordinate descent methods \citep{Wright2015, Nesterov2018}. Complete proofs are provided in the Supplementary Section \ref{app:theorem}. In addition, Remarks~1 and~2 associated with Theorem~\ref{thm:open_ball_reach}, as well as further discussion on the interpretation and significance of Theorem~\ref{thm:conv_rate}, are deferred to the Supplementary Section \ref{app:theorem}.
\vspace{-0.2cm}
\begin{theorem}[Stationarity under Coordinatewise Descent Failure]
\label{thm:stationarity}
Let \( f : \mathcal{C} \mapsto \mathbb{R} \) be differentiable on a compact hyperrectangle \( \mathcal{C} \subset \mathbb{R}^N \). Suppose \( \bm{\nu} = (\nu_1,\ldots, \nu_N) \in \mathcal{C} \), and there exists an open neighborhood \( \bm{U} \subset \mathcal{C} \) of \( \bm{\nu} \) such that \( f \) is convex on \( \bm{U} \). Let \( \delta_k = \frac{s}{\rho^k} \) for \( k \in \mathbb{N} \), with \( s > 0 \), \( \rho > 1 \). Define:
\[
\bm{\nu}_k^{(i+)} = (\nu_1,\ldots, \nu_{i-1}, \nu_i+\delta_k, \nu_{i+1}, \ldots, \nu_N), \quad
\bm{\nu}_k^{(i-)} = (\nu_1,\ldots, \nu_{i-1}, \nu_i-\delta_k, \nu_{i+1}, \ldots, \nu_N)
\]
for \( i = 1,\ldots,N \). If for all \( k \in \mathbb{N} \) and all \( i = 1,\ldots,N \),
\[
f(\bm{\nu}) \leq f(\bm{\nu}_k^{(i+)}), \quad \text{and} \quad f(\bm{\nu}) \leq f(\bm{\nu}_k^{(i-)}),
\]
then the gradient of \( f \) at \( \bm{\nu} \) vanishes: \( \nabla f(\bm{\nu}) = \bm{0} \).
\end{theorem}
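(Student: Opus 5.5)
The plan is to argue coordinate by coordinate, using only differentiability of $f$ at $\bm\nu$ together with the one-sided difference quotients along the sequence $\delta_k = s/\rho^k \downarrow 0$. The local convexity on $\bm U$ is not really needed for the conclusion; it only guarantees that the relevant one-sided directional derivatives exist. The main substantive care needed concerns feasibility: the perturbed points must lie in $\mathcal C$.

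\emph{Feasibility.} Since $\bm U$ is an open neighborhood of $\bm\nu$ contained in $\mathcal C$, there is an $\epsilon>0$ with $B(\bm\nu,\epsilon)\subset\bm U\subset\mathcal C$. Because $\delta_k\to 0$ (here $\rho>1$ is used), there is a $K$ such that $\delta_k<\epsilon$ for all $k\ge K$. Hence both $\bm\nu_k^{(i\pm)}$ lie in $\bm U$ for every $i$ and every $k\ge K$. In particular $\bm\nu$ is an interior point, so the two-sided partial derivatives make sense.

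\emph{One-sided inequalities.} Fix $i$. For $k\ge K$ the hypothesis $f(\bm\nu)\le f(\bm\nu_k^{(i+)})$ gives
\[
\frac{f(\bm\nu+\delta_k\bm e_i)-f(\bm\nu)}{\delta_k}\ \ge\ 0 .
\]
By differentiability at $\bm\nu$, the left side converges to $\partial_i f(\bm\nu)$ as $k\to\infty$. This holds along any sequence tending to $0^+$, so only the subsequence $\delta_k$ is needed. Therefore $\partial_i f(\bm\nu)\ge 0$.

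Similarly, from $f(\bm\nu)\le f(\bm\nu_k^{(i-)})$ we get
\[
\frac{f(\bm\nu-\delta_k\bm e_i)-f(\bm\nu)}{\delta_k}\ \ge\ 0 ,
\]
whose limit is $-\partial_i f(\bm\nu)$. Therefore $\partial_i f(\bm\nu)\le 0$.

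Combining the two gives $\partial_i f(\bm\nu)=0$ for each $i=1,\ldots,N$. Since $f$ is differentiable at $\bm\nu$, the gradient is the vector of partial derivatives, so $\nabla f(\bm\nu)=\bm 0$.

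\emph{Where the hypotheses matter.} There is no real analytical obstacle; the only delicate point is the feasibility step. If $\bm\nu$ sat on the boundary of the hyperrectangle, only one-sided information would be available, and the conclusion could fail. The assumption that the open set $\bm U\subset\mathcal C$ contains $\bm\nu$ is exactly what rules this out. If one wishes to use convexity instead of full differentiability, the same argument runs through: the one-sided directional derivatives $f'(\bm\nu;\pm\bm e_i)$ exist by monotonicity of convex difference quotients and are $\ge 0$. Differentiability then identifies them as $\pm\partial_i f(\bm\nu)$.
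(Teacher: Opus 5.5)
Your proof is correct, but it takes a different route from the paper's. The paper takes an $\ell_\infty$-box $\bm U=\prod_i(\nu_i-r,\nu_i+r)$ on which $f$ is convex and restricts to the slice $g_i(z)=f(\nu_1,\ldots,\nu_{i-1},z,\nu_{i+1},\ldots,\nu_N)$. It then uses convexity to show that $\nu_i$ minimizes $g_i$ on the whole interval $(\nu_i-r,\nu_i+r)$. The argument is by contradiction: if $g_i(\nu_i+d)<g_i(\nu_i)$ for some $d$, then choosing $\delta_K<d$ and applying the chord inequality gives $g_i(\nu_i+\delta_K)<g_i(\nu_i)$, which contradicts the hypothesis. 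Fermat's theorem then gives $g_i'(\nu_i)=0$. You skip the minimality step and pass directly to the limit of the one-sided difference quotients along $\delta_k\downarrow 0$. These quotients are nonnegative by hypothesis and converge to $\pm\partial_i f(\bm\nu)$ by differentiability at $\bm\nu$, so you are in effect running the proof of Fermat's theorem along the discrete sequence. Your route shows that the convexity hypothesis is redundant for the stated conclusion: only interiority of $\bm\nu$ and differentiability at $\bm\nu$ are used. The paper's route yields the stronger intermediate fact that $\bm\nu$ is a coordinatewise minimizer over the entire box $\bm U$. That fact uses only convexity and would hold even if $f$ were not differentiable at $\bm\nu$.
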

\vspace{-0.2cm}
\begin{theorem}[Open-ball reachability under grid-supported refining restarts]
\label{thm:open_ball_reach}
Let $f : \mathcal{C} \to \mathbb{R}$ be continuous on a compact, convex $\mathcal{C}\subset\mathbb{R}^N$.
Let $\bm{\nu}^* \in \arg\min_{\bm{\nu}\in\mathcal{C}} f(\bm{\nu})$, and fix any $\delta>0$. Denote
$B_\delta(\bm{\nu}^*)=\{\bm{\nu}\in\mathcal{C}:\|\bm{\nu}-\bm{\nu}^*\|_2<\delta\}$.
Assume BLOC is \emph{run} in multiple \emph{runs} $r=1,2,\dots$ as in the algorithmic description.
\begin{enumerate}
\item[(B1)] In \emph{run} $r$, the algorithm evaluates the $2N$
candidates $\bm{\nu}\pm s^{(r)}_j \bm{e}_i$ ($i=1,\dots,N$) at iteration $j$ and accepts an update only if $f$ decreases.

\item[(B2)] The in-\emph{run} step size is initialized at $s^{(r)}_0>0$,
and is reduced geometrically by a factor $\rho>1$ \emph{only after an unsuccessful iteration}. Each \emph{run} enforces
a positive floor $\kappa>0$ and terminates when $s^{(r)}_j\le \kappa$ or a prescribed budget is reached.

\item[(B3)] After each \emph{run} $r$, the next \emph{run}
starts from $\bm{\nu}^{(r+1)}_0$ drawn \emph{uniformly at random} from the finite grid
\[
G_r:=\mathcal{C}\cap\big(\alpha+s_r\mathbb{Z}^N\big),\qquad s_r\downarrow 0,
\]
where $\alpha\in\mathbb{R}^N$ is a fixed offset (lattice phase). Restart draws are independent across \emph{runs}.
\end{enumerate}
Then, with probability one, there exists a finite iteration index $T<\infty$ such that $\bm{\nu}^{(T)}\in B_\delta(\bm{\nu}^*)$.
\end{theorem}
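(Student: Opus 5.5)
The plan is to ignore the in-\emph{run} dynamics almost entirely and extract reachability from the restart mechanism (B3) alone. The in-\emph{run} updates only matter through (B1)--(B2): each \emph{run} terminates after finitely many iterations (budget or step floor $\kappa$), so \emph{run} $r+1$ starts at a finite global iteration index. It therefore suffices to show that, almost surely, some restart point $\bm{\nu}^{(r+1)}_0$ itself lies in $B_\delta(\bm{\nu}^*)$. Its global iteration index is then the desired $T$. Continuity of $f$ enters only through compactness, which guarantees that $\bm{\nu}^*$ exists.

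\textbf{Step 1 (grid points in the ball).} Since $\mathcal{C}$ is convex with nonempty interior (it is a hyperrectangle in the BLOC setting; if $\mathcal{C}$ were degenerate I would work relative to its affine hull), I can pick an interior point $\bm{c}$ and set $\bm{z}=\bm{\nu}^*+t(\bm{c}-\bm{\nu}^*)$ with $t$ small. By the standard segment property of convex sets, $\bm{z}$ is interior. Hence there is $\eta\in(0,\delta/2)$ such that the cube $Q=\bm{z}+[-\eta,\eta]^N$ lies in $\mathcal{C}\cap B_\delta(\bm{\nu}^*)$. For $s_r<2\eta$ the shifted lattice $\alpha+s_r\mathbb{Z}^N$ meets $Q$: in each coordinate an interval of length $2\eta>s_r$ contains a point of $\alpha_i+s_r\mathbb{Z}$. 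In fact $Q$ contains at least $\lfloor 2\eta/s_r\rfloor^N\ge (\eta/s_r)^N$ lattice points once $s_r\le\eta$. Since $s_r\downarrow0$, this holds for all $r\ge r_0$.

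\textbf{Step 2 (lower bound on hit probabilities).} The grid $G_r$ has $|G_r|\le\prod_i(\ell_i/s_r+1)\le (D+s_r)^N/s_r^N$, where $\ell_i\le D$ are the side lengths of a bounding box of $\mathcal{C}$. So
\[
p_r:=\Pr\big(\bm{\nu}^{(r+1)}_0\in B_\delta(\bm{\nu}^*)\big)\ \ge\ \frac{(\eta/s_r)^N}{(D+s_r)^N/s_r^N}\ \ge\ \Big(\frac{\eta}{D+s_1}\Big)^N=:p>0
\]
for all $r\ge r_0$. The bound is uniform in $r$ because the scale factors $s_r^N$ cancel. This cancellation is the main point to get right: a naive count might suggest the probability vanishes as the grid refines.

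\textbf{Step 3 (Borel--Cantelli).} Restart draws are independent across \emph{runs} by (B3), so
\[
\Pr\big(\text{no hit for } r_0\le r\le R\big)\le(1-p)^{R-r_0+1}\to0.
\]
Equivalently, $\sum_r p_r=\infty$ and the second Borel--Cantelli lemma apply. Hence almost surely some \emph{run} starts inside $B_\delta(\bm{\nu}^*)$, and its starting iteration index is a finite $T$.

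I expect the main obstacle to be the formal bookkeeping rather than the probability. One issue is defining the global iteration counter across \emph{runs} and confirming that each \emph{run} is finite. (B2) gives this: the number of unsuccessful iterations is at most $\log_\rho(s_0^{(r)}/\kappa)$, and the budget caps the successful ones. The other issue is the geometric Step 1 when $\bm{\nu}^*$ lies on $\partial\mathcal{C}$, which the interior-point shift handles.
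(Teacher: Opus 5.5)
Your proof is correct and has the same skeleton as the paper's. Both discard the in-\emph{run} dynamics and show that the restart hit probability $p_r=|G_r\cap B_\delta(\bm{\nu}^*)|/|G_r|$ stays above a fixed positive constant for all large $r$. Both then finish with independence of the restarts and the second Borel--Cantelli lemma.

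The difference is how that uniform lower bound is obtained. The paper argues asymptotically. It uses Riemann-sum convergence of lattice counts for sets with null boundary to get $p_r\to \operatorname{vol}(B_\delta(\bm{\nu}^*)\cap\mathcal{C})/\operatorname{vol}(\mathcal{C})>0$. This identifies the sharp limiting constant, but it relies on a Jordan-measurability argument and gives no explicit $r_0$. You instead inscribe a cube $Q\subset\mathcal{C}\cap B_\delta(\bm{\nu}^*)$ and compare an explicit lattice count inside $Q$ with an upper count over a bounding box. This yields the cruder but fully explicit, non-asymptotic bound $p_r\ge(\eta/(D+s_1))^N$ as soon as $s_r\le\eta$.

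Your interior-point shift also handles something the paper's Step 1 glosses over. The lattice point within $\ell_\infty$-distance $s_r/2$ of $\bm{\nu}^*$ used there need not lie in $\mathcal{C}$ when $\bm{\nu}^*\in\partial\mathcal{C}$; the paper's volume-fraction step makes that Step 1 redundant anyway. Your remark that each \emph{run} is finite, so that $T$ is a genuine finite global index, is likewise a detail the paper takes for granted.

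One caveat concerns your parenthetical fallback for degenerate $\mathcal{C}$ (working relative to its affine hull), which would not work. The full-dimensional lattice $\alpha+s_r\mathbb{Z}^N$ can miss a lower-dimensional affine set entirely, making $G_r$ empty and (B3) ill-posed. Both your argument and the paper's genuinely require $\operatorname{int}\mathcal{C}\neq\emptyset$, which holds for BLOC's hyperrectangle.
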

\vspace{-0.2cm}
\begin{theorem}[Global Convergence in Probability of BLOC]
\label{thm:global_conv_prob}
Let \( f : \mathcal{C} \to \mathbb{R} \) be a continuous function defined on a compact, convex set \( \mathcal{C} \subset \mathbb{R}^N \), and suppose \( f \) has a unique strict global minimizer \( \bm{\nu}^* \in \operatorname{int}(\mathcal{C}) \). 
Assume that BLOC satisfies the same conditions (B1)–(B3) as in Theorem~\ref{thm:open_ball_reach}, and in addition:
\begin{enumerate}
\item[(B4)] The function \( f \) is continuously differentiable and locally convex in some neighborhood of \( \bm{\nu}^* \).
\end{enumerate}
Then the BLOC algorithm converges in probability to \( \bm{\nu}^* \). That is, for any \( \varepsilon > 0 \), \vspace{-0.2cm}
\[
\mathbb{P}\!\big(\| \bm{\nu}^{(r)} - \bm{\nu}^* \| > \varepsilon\big) \longrightarrow 0 \quad \text{as } r \to \infty.
\]
\end{theorem}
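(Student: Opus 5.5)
The plan is to combine the reachability result of Theorem~\ref{thm:open_ball_reach} with a local trapping argument driven by (B4) and the descent-only acceptance rule (B1). The interpretation of $\bm{\nu}^{(r)}$ matters here. I would take it to be the best-so-far iterate at the end of run $r$, i.e.\ the point with the smallest objective value found among all runs up to $r$, which is what BLOC returns. If it instead denoted the raw end-of-run point, the random restart in (B3) would keep moving it and no convergence in probability could hold, so only the best-so-far reading can work. The first step is therefore to note that the sequence $f(\bm{\nu}^{(r)})$ is nonincreasing in $r$. This holds because (B1) accepts a move only when $f$ strictly decreases, and because a fresh restart is never recorded as best unless it improves on the current best.

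The second step is to translate closeness to $\bm{\nu}^*$ into a level-set condition, using that $\bm{\nu}^*$ is the unique strict global minimizer. Fix $\varepsilon>0$ and set
\[
\eta(\varepsilon):=\min\{f(\bm{\nu})-f(\bm{\nu}^*) : \bm{\nu}\in\mathcal{C},\ \|\bm{\nu}-\bm{\nu}^*\|\ge\varepsilon\}.
\]
This minimum is attained because the set is compact and $f$ is continuous, and it is strictly positive by uniqueness of the minimizer. By continuity of $f$ at $\bm{\nu}^*$ there is a $\delta\in(0,\varepsilon)$ such that $f(\bm{\nu})<f(\bm{\nu}^*)+\eta(\varepsilon)$ on $B_\delta(\bm{\nu}^*)$. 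Consequently, any point whose value is below $f(\bm{\nu}^*)+\eta(\varepsilon)$ lies within distance $\varepsilon$ of $\bm{\nu}^*$. Since $f(\bm{\nu}^{(r)})$ is monotone, once the best-so-far value drops below this threshold it stays below for every later run.

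The third step applies Theorem~\ref{thm:open_ball_reach} with this $\delta$. It gives, almost surely, a finite index $T$ with $\bm{\nu}^{(T)}\in B_\delta(\bm{\nu}^*)$, and hence $f(\bm{\nu}^{(T)})<f(\bm{\nu}^*)+\eta(\varepsilon)$. Let $R_T$ denote the run containing $T$. Then on the event $\{R_T\le r\}$ we have $\|\bm{\nu}^{(r)}-\bm{\nu}^*\|<\varepsilon$, and therefore
\[
\mathbb{P}\big(\|\bm{\nu}^{(r)}-\bm{\nu}^*\|>\varepsilon\big)\le \mathbb{P}(R_T>r)\to 0,
\]
because $R_T<\infty$ almost surely. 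This already yields convergence in probability, and in fact almost sure eventual containment, so (B4) is not logically required for this chain. I would still use (B4) to describe the behaviour inside a run. Local convexity and $C^1$ regularity near the interior point $\bm{\nu}^*$ mean that a run started in a small convex neighbourhood where $f$ is convex only accepts strictly decreasing moves. The iterates therefore remain in the corresponding sublevel set, and by the argument of Theorem~\ref{thm:stationarity} any limit of unsuccessful polls at shrinking step sizes is stationary, which by convexity and uniqueness must be $\bm{\nu}^*$. This strengthens the statement from ``visits the ball'' to ``the end-of-run point is close'', with the proviso that the floor $\kappa$ has to shrink, or be taken small relative to $\delta$, for this step to hold.

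The main obstacle is the bookkeeping. I need to pin down precisely that $\bm{\nu}^{(r)}$ is the incumbent, and to make the measurability and index conversion between the iteration index $T$ of Theorem~\ref{thm:open_ball_reach} and the run index $r$ rigorous. This conversion is where $R_T<\infty$ almost surely must be justified. It should follow because each run has a finite budget, or terminates once the step size reaches the floor $\kappa$ after finitely many reductions, so finitely many iterations occur per run.
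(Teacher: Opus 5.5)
Your proof is correct, but it differs from the paper's at the decisive step.

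\textbf{Common ground.} Both arguments start from the separation property of the unique strict minimizer. Both invoke Theorem~\ref{thm:open_ball_reach} to get, almost surely, a restart inside $B_\delta(\bm{\nu}^*)$.

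\textbf{The paper's step.} The paper then argues that the run entering the ball must itself converge to $\bm{\nu}^*$. Descent-only acceptance keeps it where (B4) holds, and once no poll succeeds, Theorem~\ref{thm:stationarity} plus local convexity identify the point as the minimizer.

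\textbf{Your step.} You read $\bm{\nu}^{(r)}$ as the incumbent, so $f(\bm{\nu}^{(r)})$ is nonincreasing. You choose $\delta$ only \emph{after} $\eta(\varepsilon)$, by continuity. A single visit to $B_\delta(\bm{\nu}^*)$ then locks the incumbent into the sublevel set $\{f<f(\bm{\nu}^*)+\eta(\varepsilon)\}\subset B_\varepsilon(\bm{\nu}^*)$.

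\textbf{What your route buys.} It needs neither (B4) nor Theorem~\ref{thm:stationarity}, and it yields almost-sure eventual containment, which is stronger than convergence in probability. It also avoids two weak points in the paper's step:
\begin{enumerate}
\item Theorem~\ref{thm:stationarity} requires failed polls at \emph{every} step size $s/\rho^k$. A run with the positive floor $\kappa$ of (B2) never performs these.
\item The paper asserts that every point of $B_\delta(\bm{\nu}^*)$ is preferable to every point outside. This does not follow from its separation inequality with the quantifiers in the stated order. Your ordering $\varepsilon\to\eta(\varepsilon)\to\delta$ is the right fix.
\end{enumerate}

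\textbf{What the paper's route would add.} If the floor were allowed to vanish, it would give a stronger conclusion: the entering run's own end point approaches $\bm{\nu}^*$. That is exactly the caveat you flag.

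\textbf{Remaining remarks.} Your explicit best-so-far reading is what the paper tacitly needs when it passes from ``enters $B_\delta$ by run $r$'' to a statement about $\bm{\nu}^{(r)}$. Under (B3), the raw end-of-run point is reset by each random restart. Your index-conversion worry is harmless, because the proof of Theorem~\ref{thm:open_ball_reach} produces the run index directly.
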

\vspace{-0.2cm}
\begin{theorem}[Sublinear Convergence Rate of BLOC]\label{thm:conv_rate}
Let \( f : \mathcal{C} \rightarrow \mathbb{R} \) be a convex, continuously differentiable function defined on a compact convex set \( \mathcal{C} \subset \mathbb{R}^N \), and suppose its gradient \( \nabla f \) is Lipschitz continuous with Lipschitz constant \( L > 0 \). Consider the BLOC algorithm initialized at \( \bm{\nu}^{(0)} \in \mathcal{C} \), using an initial step-size \( s_0 > 0 \), reduction factor \( \rho > 1 \), and coordinate directions \( \{ \pm \bm{e}_1, \ldots, \pm \bm{e}_N \} \). Suppose that the step-size is reduced only after an unsuccessful iteration (i.e., when no direction yields decrease in function value). Then, after \( r \) step-size reductions, the BLOC iterate \( \bm{\nu}^{(r)} \) satisfies \vspace{-0.2cm}
\[
f(\bm{\nu}^{(r)}) - f(\bm{\nu}^*) \leq \frac{C}{r+1},
\]
for some constant \( C > 0 \), where \( \bm{\nu}^* \in \arg\min_{\bm{\nu} \in \mathcal{C}} f(\bm{\nu}) \).
\end{theorem}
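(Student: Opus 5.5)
The plan is to argue in the style of coordinate descent with an inexact line search: link the step size at each unsuccessful iteration to the gradient norm, then turn the resulting per-phase decrease into a recursion for the optimality gap. Throughout, $\delta_k=s_0/\rho^k$ is the step size after $k$ reductions, and $\bm{\nu}^{(k)}$ is the iterate at which the $k$-th reduction is triggered, i.e.\ the point where all $2N$ coordinate polls fail.

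\textbf{Step 1: failed polls bound the gradient.} Fix an unsuccessful iterate $\bm\nu=\bm\nu^{(k)}$ with step $\delta_k$. By the descent lemma, for every $i$,
\[
f(\bm\nu\pm\delta_k\bm e_i)\le f(\bm\nu)\pm\delta_k\,\partial_i f(\bm\nu)+\tfrac{L}{2}\delta_k^2 .
\]
Failure means $f(\bm\nu\pm\delta_k\bm e_i)\ge f(\bm\nu)$, so $|\partial_i f(\bm\nu)|\le L\delta_k/2$ and hence $\|\nabla f(\bm\nu^{(k)})\|_2\le \sqrt N L\delta_k/2$. This is the quantitative version of the argument in Theorem~\ref{thm:stationarity}.

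\textbf{Step 2: relate the gap to the gradient.} Convexity gives
\[
f(\bm\nu^{(k)})-f(\bm\nu^*)\le \langle\nabla f(\bm\nu^{(k)}),\bm\nu^{(k)}-\bm\nu^*\rangle\le D\,\|\nabla f(\bm\nu^{(k)})\|,
\]
where $D=\mathrm{diam}(\mathcal C)$ is finite by compactness. So $\epsilon_k:=f(\bm\nu^{(k)})-f^*$ satisfies $\epsilon_k\le \tfrac{\sqrt N LD s_0}{2}\rho^{-k}$. This already decays geometrically, which is stronger than $O(1/(r+1))$. Since $\rho^{-r}\le c_\rho/(r+1)$ with $c_\rho=\sup_r (r+1)\rho^{-r}<\infty$, we can take $C=\tfrac12\sqrt N L D s_0 c_\rho$, adjusted to absorb $\epsilon_0\le \max_{\mathcal C} f-f^*$ at $r=0$.

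\textbf{Step 3: monotonicity and the domain.} Iterates are accepted only on decrease, so $f(\bm\nu^{(r)})$ is nonincreasing, and the bound holds for every iterate after the $r$-th reduction. This lets us state the result for ``the iterate after $r$ reductions'' irrespective of how many successful moves occur in between.

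\textbf{Main obstacle: the boundary of $\mathcal C$.} The real difficulty is that a poll point $\bm\nu\pm\delta_k\bm e_i$ may leave $\mathcal C$, or be wrapped by $\mathscr M$, so Step~1 could fail in that coordinate. I would handle this in one of two ways:
\begin{itemize}
\item assume, as the theorem implicitly does, that $f$ is defined with Lipschitz gradient on a neighborhood so that the polls are valid; or
\item restrict to $k$ large enough that $\delta_k$ is below the distance to the boundary, with finitely many early phases absorbed into $C$. Where $\bm\nu^*$ may lie on the boundary, I would replace $\nabla f$ by the projected gradient and use the first-order optimality inequality instead.
\end{itemize}
If this proves too delicate, the fallback is the standard coordinate-descent recursion $\epsilon_{k}-\epsilon_{k+1}\ge c\,\epsilon_{k}^2/D^2$, which gives $\epsilon_k\le C/(k+1)$ directly. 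The per-phase sufficient decrease it requires would itself come from Step~1 combined with a successful move of length $\delta_k$.
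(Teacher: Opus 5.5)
Your proof is correct under the tacit assumption the paper also makes: the polls at a reduction point are genuine evaluations of $f$ in the region where it is convex and $L$-smooth, and the iterates stay in $\mathcal C$. Your Step~1 is exactly the paper's. Step~2, however, takes a genuinely different route.

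The paper converts the gradient bound into a gap bound via $f(\bm\nu)-f(\bm\nu^*)\le \frac{1}{2L}\|\nabla f(\bm\nu)\|_2^2$, which yields $\frac{LNs_0^2}{8}\rho^{-2r}$. For a merely convex $L$-smooth $f$ that inequality points the wrong way. The standard fact is $\frac{1}{2L}\|\nabla f\|_2^2\le f-f^*$ when $f^*$ is the unconstrained minimum. An upper bound of the paper's form is a Polyak--\L{}ojasiewicz/strong-convexity condition, and even then holds with $\mu\le L$ in place of $L$. For example, $f(x)=x^4$ on $[-1,1]$ with $L=12$ violates it at every $x\neq 0$.

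Your inequality $f(\bm\nu)-f(\bm\nu^*)\le\langle\nabla f(\bm\nu),\bm\nu-\bm\nu^*\rangle\le D\|\nabla f(\bm\nu)\|_2$ uses only convexity and compactness, which are exactly the stated hypotheses. So your argument actually repairs the paper's. The trade-off: the paper's route, if a PL condition were added, would give a gap quadratic in the step size and free of $\operatorname{diam}(\mathcal C)$. Yours is linear in the step size and carries $D$. Both are geometric in $r$, so $C/(r+1)$ is a large relaxation either way. A minor point: the polls that fail at the $k$-th reduction use step $\delta_{k-1}$, not $\delta_k$, which only changes $C$ by a factor $\rho$.

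Two of your side remarks should be dropped.

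\emph{Restricting to large $k$.} This does not handle the boundary, since nothing keeps the iterates a fixed distance from $\partial\mathcal C$. In fact, for a general compact convex $\mathcal C$ with infeasible polls counted as failures, the statement is false. Take $\mathcal C=\{x\in[0,1]^2: x_2\ge x_1\}$ and $f(x)=-2x_1+x_2$. Every poll from $(a,a)$ with $a<1$ is either infeasible or increases $f$, so the search stalls with gap $1-a$. On a hyperrectangle (the paper's stated setting), with the same convention, your Step~2 does survive coordinatewise. A term $\partial_i f(\bm\nu)(\nu_i-\nu_i^*)$ not controlled by a failed poll can be positive only if $\nu_i^*$ lies between $\nu_i$ and a face within one poll step of $\nu_i$. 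Hence it is at most the poll step times $\max_{\mathcal C}\|\nabla f\|_\infty$.

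\emph{The fallback recursion.} The bound $\epsilon_k-\epsilon_{k+1}\ge c\,\epsilon_k^2/D^2$ does not follow from Step~1. BLOC accepts any decrease, so a successful move carries no sufficient-decrease guarantee.
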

\vspace{-0.6cm}
\section{Numerical results}\label{sec:sims}\vspace{-0.2cm}
To evaluate the finite-sample performance of BLOC, we consider a sequence of numerical experiments encompassing both synthetic benchmark optimization problems and sparse covariance estimation under standard statistical models. We first summarize results from a benchmark study on structured nonconvex objectives over correlation matrix spaces, before turning to simulation experiments assessing estimation accuracy, support recovery, and positive definiteness in low- and high-dimensional regimes ($n>d$ and $d\ge n$). Performance is evaluated relative to representative existing methods, providing a comparative assessment of the gains achieved by the proposed framework.
\vspace{-0.5cm}
\subsection{Benchmark study}\vspace{-0.2cm}
We first assess the optimization performance of BLOC using four standard nonconvex benchmark functions Ackley, Griewank, Rosenbrock, and Rastrigin, adapted to the space of correlation matrices \citep{surjanovic2013virtual}. These benchmarks span oscillatory, multimodal, and narrow-valley landscapes and are evaluated over increasing dimensions ($d = 5, 10, 20, 50$), with additional experiments for BLOC conducted up to $d = 100$. We compare BLOC and its parallel implementation against constrained solvers in MATLAB (\texttt{fmincon}: interior-point, SQP, active-set) and optimization routines from the \texttt{Manopt} toolbox \citep{boumal2014manopt}. Across all benchmarks, BLOC consistently attains near-optimal objective values with stable behavior as dimensionality increases, while competing methods degrade substantially in accuracy or scalability. Parallelization yields pronounced gains in moderate to large dimensions. Full numerical results and runtime comparisons are reported in Tables~4-5 of the Supplementary Section \ref{sec:benchmark}.
\vspace{-0.6cm}
\subsection{Sparse covariance estimation with Gaussian likelihood ($n>d$)}
\label{sec:sims_small_p}\vspace{-0.2cm}
We first examine the performance of BLOC in the classical regime where the number of observations exceeds the number of variables ($n > d$), so that the Gaussian log-likelihood provides a natural loss. In this setting, the optimization problem in \eqref{eq:problem 1} is instantiated with \vspace{-0.4cm}
\[
h_n(\bm\Gamma) = \operatorname{tr}(\bm\Gamma^{-1} \hat{\bm\Gamma}_S) + \log|\bm\Gamma|,
\]\vspace{-1.8cm}

where $\hat{\bm\Gamma}_S$ denotes the sample correlation matrix. We compare BLOC equipped with nonconvex penalties (SCAD and MCP) against the $\ell_1$-penalized estimator implemented in the \texttt{Spcov} package \citep{Bien2011}, which serves as a representative convex benchmark.

We generate true correlation matrices under two sparsity regimes, block-diagonal and uniform-sparse, across settings with $d$ variables, $n$ observations, and $10$ replications per configuration. The block-diagonal design partitions $d$ into disjoint $5\times5$ blocks, inducing clustered within-block dependence and exact zeros elsewhere. The uniform-sparse design enforces approximately $95\%$, $98\%$, and $99\%$ off-diagonal zeros for $(d,n)=(20,50)$, $(50,100)$, and $(100,150)$, respectively, with nonzero entries drawn from $\mathrm{Uniform}[0.3,0.6]$ and positive definiteness enforced. Data are generated from a mean-zero multivariate normal distribution. Performance is evaluated using RMSE, MAD, and edge-selection metrics (TPR, FPR, MCC). Full simulation details are provided in the Supplementary Section \ref{sec:simulation_supp}.

Results for $(d,n)\in{(20,50),(50,100),(100,150)}$ are summarized in Table~\ref{tab:small_p}. Across all settings, BLOC with SCAD or MCP achieves a favorable balance between estimation accuracy and sparsity recovery. Under the block-diagonal design, BLOC attains higher MCC by maintaining strong true positive rates while controlling false positives, whereas the $\ell_1$ benchmark tends to over-select edges.
Under the uniform-sparse design, the benefits of nonconvex penalization are more pronounced, with SCAD and MCP yielding stable recovery and low false positive rates as dimensionality increases. In contrast, the $\ell_1$ estimator deteriorates and becomes unstable, and at $(100,150)$ frequently fails, while BLOC continues to produce reliable estimates.
\vspace{-0.5cm}
\begin{table}
\centering
\resizebox{0.99\columnwidth}{!}{%
\begin{tabular}{|c|c|c|ccccc|}
\hline
Scenarios & Matrix Type & Methods & TPR & FPR & MCC & RMSE & MAD \\ \hline
\multirow{6}{*}{\begin{tabular}[c]{@{}c@{}}$d = 20$,\\  $n = 50$\end{tabular}} & \multirow{3}{*}{Block-diagonal} & SCAD (BLOC) & 0.77 (0.014) & 0.22 (0.016) & 0.47 (0.015) & 0.086 (0.004) & 0.036 (0.001) \\
 &  & MCP (BLOC) & 0.78 (0.014) & 0.25 (0.018) & 0.46 (0.020) & 0.086 (0.003) & 0.038 (0.001) \\
 &  & LASSO (Spcov) & 0.93 (0.012) & 0.62 (0.037) & 0.28 (0.019) & 0.099 (0.005) & 0.067 (0.005) \\ \cline{2-8} 
 & \multirow{3}{*}{Uniform-sparse} & SCAD (BLOC) & 0.73 (0.045) & 0.06 (0.016) & 0.56 (0.024) & 0.061 (0.002) & 0.015 (0.001) \\
 &  & MCP (BLOC) & 0.79 (0.046) & 0.07 (0.015) & 0.54 (0.031) & 0.057 (0.004) & 0.015 (0.001) \\
 &  & LASSO (Spcov) & 0.57 (0.050) & 0.02 (0.004) & 0.59 (0.039) & 0.069 (0.003) & 0.015 (0.001) \\ \hline
\multirow{6}{*}{\begin{tabular}[c]{@{}c@{}}$d = 50$,\\  $n = 100$\end{tabular}} & \multirow{3}{*}{Block-diagonal} & SCAD (BLOC) & 0.83 (0.022) & 0.26 (0.029) & 0.35 (0.017) & 0.066 (0.003) & 0.026 (0.001) \\
 &  & MCP (BLOC) & 0.87 (0.017) & 0.32 (0.015) & 0.32 (0.008) & 0.065 (0.003) & 0.028 (0.001) \\
 &  & LASSO (Spcov) & 0.25 (0.129) & 0.13 (0.086) & 0.12 (0.069) & 0.117 (0.011) & 0.034 (0.003) \\ \cline{2-8} 
 & \multirow{3}{*}{Uniform-sparse} & SCAD (BLOC) & 0.74 (0.032) & 0.05 (0.005) & 0.41 (0.016) & 0.033 (0.002) & 0.005 (0.000) \\
 &  & MCP (BLOC) & 0.73 (0.033) & 0.04 (0.007) & 0.45 (0.015) & 0.035 (0.002) & 0.005 (0.000) \\
 &  & LASSO (Spcov) & 0.85 (0.022) & 0.01 (0.001) & 0.73 (0.023) & 0.034 (0.001) & 0.004 (0.000) \\ \hline
\multirow{6}{*}{\begin{tabular}[c]{@{}c@{}}$d = 100$,\\ $n = 150$\end{tabular}} & \multirow{3}{*}{Block-diagonal} & SCAD (BLOC) & 0.89 (0.018) & 0.46 (0.068) & 0.20 (0.033) & 0.049 (0.004) & 0.028 (0.004) \\
 &  & MCP (BLOC) & 0.81 (0.062) & 0.37 (0.086) & 0.27 (0.054) & 0.053 (0.004) & 0.024 (0.004) \\
 &  & LASSO (Spcov) & - & - & - & - & - \\ \cline{2-8} 
 & \multirow{3}{*}{Uniform-sparse} & SCAD (BLOC) & 0.66 (0.009) & 0.03 (0.002) & 0.34 (0.010) & 0.027 (0.000) & 0.003 (0.000) \\
 &  & MCP (BLOC) & 0.65 (0.011) & 0.03 (0.002) & 0.35 (0.009) & 0.027 (0.000) & 0.003 (0.000) \\
 &  & LASSO (Spcov) & - & - & - & - & - \\ \hline
\end{tabular}}
\caption{Simulation results for sparse covariance estimation with Gaussian likelihood under block-diagonal and uniform-sparse correlation structures. Reported metrics include true positive rate (TPR), false positive rate (FPR), Matthews correlation coefficient (MCC), root mean squared error (RMSE), and mean absolute deviation (MAD). Each entry is the average across 10 replications, with standard errors in parentheses. Results are shown for SCAD- and MCP-based BLOC estimators and the LASSO-based \texttt{Spcov} method across varying $(d,n)$ settings.}
\label{tab:small_p} 
\end{table}
\vspace{-0.2cm}
\subsection{Sparse covariance estimation with Frobenius norm ({$d\ge n$})}\label{sec:sims_large_p} \vspace{-0.2cm}
We next examine the high-dimensional regime in which the number of variables is comparable to or exceeds the sample size ($d \ge n$). Estimation accuracy is evaluated using a Frobenius norm loss, \vspace{-0.2cm}
\[
h_n(\bm\Gamma) = \|\bm\Gamma - \hat{\bm\Gamma}_S\|_F^2,
\]\vspace{-1.6cm}

where $\hat{\bm\Gamma}_S$ denotes the sample correlation matrix. Simulations are conducted under three canonical sparse covariance structures, block-diagonal, Toeplitz, and banded, constructed following \cite{Wen2021}, which represent clustered, long-range decaying, and local dependence patterns, respectively. For each structure, $n$ observations are generated from a zero-mean multivariate Normal distribution with the corresponding covariance matrix, and correlation matrices are estimated under Frobenius loss. We consider representative high-dimensional settings $(d,n)\in\{(50,50),(100,50),(100,100)\}$; full details of the covariance construction and tuning choices are deferred to the Supplementary Section \ref{sec:simulation_supp}.
\begin{table}[!t]
\centering
\scalebox{0.8}{\begin{tabular}{c|lcccccc}
\multicolumn{1}{l}{}       & Methods      & Frob. Norm  & Spec. norm  & MAD          & TPR         & FPR         & MCC         \\
\hline
\multirow{11}{*}{Block}    & SCAD thres   & 7.36(0.189) & 4.24(0.276) & 0.023(0.000) & 1.00(0.000) & 0.10(0.012) & 0.68(0.027) \\
                           & Hardthresh   & 6.94(0.302) & 4.71(0.419) & 0.016(0.000) & 0.96(0.010) & 0.00(0.000) & 0.97(0.005) \\
                           & Softthresh   & 8.29(0.194) & 4.46(0.191) & 0.029(0.001) & 1.00(0.000) & 0.14(0.017) & 0.60(0.021) \\
                           & TradIRW      & 5.46(0.157) & 3.28(0.182) & 0.014(0.001) & 1.00(0.000) & 0.00(0.000) & 1.00(0.001) \\
                           & IRWADMM      & 5.48(0.124) & 3.29(0.169) & 0.014(0.000) & 1.00(0.000) & 0.00(0.000) & 1.00(0.002) \\
                           & SCADBCD      & 5.46(0.124) & 3.12(0.146) & 0.014(0.000) & 1.00(0.000) & 0.02(0.002) & 0.92(0.009) \\
                           & LqBCD(q=0)   & 5.35(0.134) & 3.04(0.121) & 0.014(0.000) & 1.00(0.000) & 0.00(0.000) & 1.00(0.000) \\
                           & LqBCD(q=0.5) & 5.47(0.174) & 3.25(0.168) & 0.014(0.001) & 1.00(0.000) & 0.00(0.001) & 0.99(0.006) \\
                           & L1ADMM       & 8.03(0.184) & 4.16(0.160) & 0.027(0.001) & 1.00(0.000) & 0.11(0.021) & 0.67(0.028) \\
                           & MCP(BLOC)    & 4.83(0.433) & 2.66(0.287) & 0.014(0.001) & 1.00(0.000) & 0.07(0.010) & 0.75(0.032) \\
                           & SCAD(BLOC)   & 5.09(0.292) & 2.80(0.204) & 0.015(0.001) & 1.00(0.000) & 0.08(0.008) & 0.72(0.018) \\
                           \hline
\multirow{11}{*}{Toeplitz} & SCADthres    & 8.99(0.225) & 3.62(0.153) & 0.040(0.001) & 0.32(0.020) & 0.12(0.016) & 0.24(0.008) \\
                           & Hardthresh   & 9.88(0.233) & 3.52(0.153) & 0.039(0.001) & 0.14(0.008) & 0.01(0.001) & 0.27(0.008) \\
                           & Softthresh   & 8.77(0.145) & 3.81(0.084) & 0.040(0.001) & 0.34(0.011) & 0.14(0.010) & 0.23(0.008) \\
                           & TradIRW      & 8.14(0.156) & 3.19(0.123) & 0.033(0.001) & 0.17(0.005) & 0.01(0.002) & 0.30(0.006) \\
                           & IRWADMM      & 8.13(0.184) & 3.13(0.141) & 0.034(0.001) & 0.19(0.009) & 0.02(0.005) & 0.29(0.008) \\
                           & SCADBCD      & 8.36(0.116) & 3.39(0.080) & 0.035(0.000) & 0.21(0.002) & 0.03(0.003) & 0.29(0.007) \\
                           & LqBCD(q=0)   & 8.48(0.125) & 3.24(0.088) & 0.035(0.000) & 0.16(0.002) & 0.01(0.002) & 0.29(0.005) \\
                           & LqBCD(q=0.5) & 8.32(0.124) & 3.30(0.089) & 0.034(0.001) & 0.18(0.003) & 0.02(0.003) & 0.28(0.007) \\
                           & L1ADMM       & 8.53(0.105) & 3.70(0.103) & 0.040(0.001) & 0.34(0.027) & 0.15(0.030) & 0.23(0.014) \\
                           & MCP(BLOC)    & 6.21(0.099) & 2.41(0.094) & 0.025(0.000) & 0.38(0.018) & 0.00(0.001) & 0.51(0.014) \\
                           & SCAD(BLOC)   & 6.99(0.217) & 3.02(0.128) & 0.028(0.001) & 0.29(0.032) & 0.00(0.001) & 0.44(0.025) \\
                           \hline
\multirow{11}{*}{Banded}   & SCADthres    & 9.58(0.371) & 4.21(0.231) & 0.041(0.002) & 0.83(0.007) & 0.15(0.007) & 0.57(0.010) \\
                           & Hardthresh   & 9.99(0.418) & 4.21(0.273) & 0.035(0.002) & 0.65(0.008) & 0.01(0.002) & 0.74(0.011) \\
                           & Softthresh   & 9.75(0.326) & 4.47(0.196) & 0.045(0.002) & 0.85(0.012) & 0.21(0.026) & 0.53(0.023) \\
                           & TradIRW      & 8.68(0.356) & 3.63(0.213) & 0.032(0.002) & 0.74(0.015) & 0.02(0.013) & 0.79(0.021) \\
                           & IRWADMM      & 8.53(0.329) & 3.57(0.204) & 0.031(0.001) & 0.77(0.012) & 0.07(0.017) & 0.69(0.028) \\
                           & SCADBCD      & 8.43(0.289) & 3.51(0.179) & 0.031(0.001) & 0.83(0.004) & 0.05(0.005) & 0.76(0.014) \\
                           & LqBCD(q=0)   & 9.35(0.319) & 3.81(0.210) & 0.034(0.001) & 0.71(0.011) & 0.02(0.007) & 0.77(0.014) \\
                           & LqBCD(q=0.5) & 8.86(0.304) & 3.66(0.195) & 0.032(0.001) & 0.75(0.005) & 0.02(0.003) & 0.79(0.011) \\
                           & L1ADMM       & 9.36(0.313) & 4.31(0.206) & 0.043(0.002) & 0.85(0.015) & 0.21(0.034) & 0.54(0.032) \\
                           & MCP(BLOC)    & 6.61(0.420) & 3.19(0.263) & 0.026(0.002) & 0.94(0.029) & 0.09(0.015) & 0.77(0.019) \\
                           & SCAD(BLOC)   & 6.17(0.242) & 2.87(0.193) & 0.023(0.001) & 0.95(0.013) & 0.07(0.013) & 0.80(0.020)\\
                        \hline
\end{tabular}}
\caption{Simulation results for sparse covariance estimation under Frobenius loss with $(d,n)=(100,50)$. Performance is evaluated across block-diagonal, Toeplitz, and banded covariance structures. Reported metrics include Frobenius norm error (Frob. norm), spectral norm error (Spec. norm), mean absolute deviation (MAD) of off-diagonal entries, and edge-selection accuracy measured by true positive rate (TPR), false positive rate (FPR), and Matthews correlation coefficient (MCC). Entries report means with standard errors in parentheses across 10 independent replications.}
\label{tab:n50_p100} 
\end{table}
\vspace{-0.2cm}
\begin{table}[!t]
\centering
\scalebox{0.8}{\begin{tabular}{c|lcccccc}
\multicolumn{1}{l}{}       & Methods      & Frob. Norm  & Spec. norm  & MAD          & TPR         & FPR         & MCC         \\
\hline
\multirow{11}{*}{Block}    & SCAD thres   & 4.77(0.328) & 2.78(0.308) & 0.015(0.001) & 1.00(0.000) & 0.10(0.020) & 0.69(0.050) \\
                           & Hardthresh   & 4.16(0.328) & 2.38(0.277) & 0.011(0.001) & 1.00(0.000) & 0.00(0.000) & 1.00(0.000) \\
                           & Softthresh   & 6.04(0.249) & 3.11(0.203) & 0.021(0.001) & 1.00(0.000) & 0.15(0.022) & 0.59(0.030) \\
                           & TradIRW      & 4.20(0.291) & 2.32(0.211) & 0.011(0.001) & 1.00(0.000) & 0.00(0.001) & 1.00(0.004) \\
                           & IRWADMM      & 4.16(0.308) & 2.37(0.268) & 0.011(0.001) & 1.00(0.000) & 0.00(0.000) & 1.00(0.000) \\
                           & SCADBCD      & 4.19(0.339) & 2.37(0.273) & 0.011(0.001) & 1.00(0.000) & 0.01(0.006) & 0.97(0.024) \\
                           & LqBCD(q=0)   & 4.13(0.333) & 2.38(0.278) & 0.011(0.001) & 1.00(0.000) & 0.00(0.000) & 1.00(0.000) \\
                           & LqBCD(q=0.5) & 4.18(0.292) & 2.32(0.240) & 0.011(0.001) & 1.00(0.000) & 0.00(0.000) & 1.00(0.000) \\
                           & L1ADMM       & 5.62(0.213) & 2.87(0.175) & 0.019(0.001) & 1.00(0.000) & 0.15(0.009) & 0.59(0.014) \\
                           & MCP(BLOC)    & 3.02(0.321) & 1.58(0.187) & 0.009(0.001) & 1.00(0.000) & 0.06(0.011) & 0.79(0.033) \\
                           & SCAD(BLOC)   & 3.78(0.293) & 2.02(0.167) & 0.012(0.001) & 1.00(0.000) & 0.09(0.012) & 0.70(0.027) \\
                           \hline
\multirow{11}{*}{Toeplitz} & SCADthres    & 6.33(0.093) & 2.57(0.124) & 0.030(0.000) & 0.38(0.015) & 0.15(0.016) & 0.27(0.007) \\
                           & Hardthresh   & 6.97(0.164) & 2.52(0.112) & 0.029(0.001) & 0.19(0.008) & 0.00(0.001) & 0.33(0.006) \\
                           & Softthresh   & 6.61(0.150) & 2.91(0.156) & 0.033(0.000) & 0.43(0.025) & 0.21(0.028) & 0.24(0.011) \\
                           & TradIRW      & 6.20(0.108) & 2.41(0.119) & 0.026(0.000) & 0.22(0.010) & 0.01(0.004) & 0.34(0.003) \\
                           & IRWADMM      & 6.15(0.084) & 2.38(0.118) & 0.026(0.000) & 0.22(0.009) & 0.01(0.004) & 0.34(0.004) \\
                           & SCADBCD      & 6.02(0.084) & 2.37(0.105) & 0.027(0.000) & 0.30(0.007) & 0.06(0.005) & 0.32(0.005) \\
                           & LqBCD(q=0)   & 6.44(0.088) & 2.43(0.108) & 0.027(0.000) & 0.21(0.009) & 0.01(0.004) & 0.34(0.004) \\
                           & LqBCD(q=0.5) & 6.21(0.084) & 2.44(0.105) & 0.026(0.000) & 0.22(0.004) & 0.02(0.002) & 0.33(0.005) \\
                           & L1ADMM       & 6.46(0.112) & 2.90(0.113) & 0.032(0.001) & 0.41(0.020) & 0.18(0.024) & 0.25(0.009) \\
                           & MCP(BLOC)    & 5.88(0.218) & 2.45(0.129) & 0.024(0.001) & 0.25(0.023) & 0.00(0.000) & 0.40(0.019) \\
                           & SCAD(BLOC)   & 4.78(0.227) & 1.94(0.137) & 0.020(0.001) & 0.45(0.024) & 0.00(0.001) & 0.56(0.019) \\
                           \hline
\multirow{11}{*}{Banded}   & SCADthres    & 6.35(0.148) & 2.93(0.142) & 0.027(0.000) & 0.88(0.011) & 0.15(0.014) & 0.62(0.013) \\
                           & Hardthresh   & 6.54(0.112) & 2.73(0.132) & 0.022(0.000) & 0.74(0.012) & 0.00(0.001) & 0.82(0.006) \\
                           & Softthresh   & 7.02(0.198) & 3.38(0.179) & 0.034(0.001) & 0.91(0.011) & 0.26(0.028) & 0.51(0.024) \\
                           & TradIRW      & 5.84(0.167) & 2.59(0.153) & 0.021(0.001) & 0.83(0.011) & 0.01(0.004) & 0.86(0.007) \\
                           & IRWADMM      & 6.00(0.182) & 2.68(0.153) & 0.022(0.001) & 0.83(0.015) & 0.05(0.013) & 0.77(0.020) \\
                           & SCADBCD      & 5.66(0.143) & 2.56(0.128) & 0.022(0.001) & 0.92(0.011) & 0.09(0.011) & 0.74(0.016) \\
                           & LqBCD(q=0)   & 6.44(0.208) & 2.89(0.133) & 0.023(0.001) & 0.83(0.017) & 0.01(0.005) & 0.85(0.007) \\
                           & LqBCD(q=0.5) & 6.00(0.184) & 2.67(0.153) & 0.021(0.001) & 0.86(0.009) & 0.02(0.002) & 0.87(0.004) \\
                           & L1ADMM       & 6.92(0.226) & 3.34(0.181) & 0.032(0.001) & 0.90(0.012) & 0.23(0.039) & 0.55(0.032) \\
                           & MCP(BLOC)    & 4.46(0.583) & 1.88(0.252) & 0.016(0.002) & 0.93(0.029) & 0.04(0.008) & 0.87(0.015) \\
                           & SCAD(BLOC)   & 5.44(0.439) & 2.44(0.129) & 0.019(0.001) & 0.92(0.026) & 0.05(0.016) & 0.83(0.021)\\
                           \hline
\end{tabular}}
\caption{Simulation results for sparse covariance estimation under Frobenius loss with $(d,n)=(100,100)$. Performance is evaluated across block-diagonal, Toeplitz, and banded covariance structures. Reported metrics include Frobenius norm error (Frob. norm), spectral norm error (Spec. norm), mean absolute deviation (MAD) of off-diagonal entries, and edge-selection accuracy measured by true positive rate (TPR), false positive rate (FPR), and Matthews correlation coefficient (MCC). Entries report means with standard errors in parentheses across 10 independent replications.}
\label{tab:n100_p100} 
\end{table}

BLOC with nonconvex penalties (SCAD and MCP) is compared against a broad set of existing sparse covariance estimators, including the $\ell_1$-penalized ADMM estimator under eigenvalue constraints \citep{Liu2014}; iterative reweighting approaches such as IRW-ADMM \citep{Wen2021} and the classical double-loop reweighted ADMM \citep{Rothman2009}; block coordinate descent methods with nonconvex penalties, including $\ell_q$-BCD ($q=0.5$ and $q=0$) and SCAD-BCD \citep{Wen2016}; and elementwise thresholding rules (soft, hard, and SCAD). Implementations are taken from the public MATLAB repository accompanying \cite{Wen2021}.

Performance is assessed using Frobenius and spectral norm errors, MAD of off-diagonal entries, and edge-selection metrics (TPR, FPR, MCC). Results for $(d,n)=(50,50)$ are reported Supplementary Table~\ref{tab:n50_p50}, while Tables~\ref{tab:n50_p100} and~\ref{tab:n100_p100} summarize results for the larger settings.

Across all covariance structures and dimensional regimes, BLOC coupled with SCAD or MCP consistently delivers lower Frobenius and spectral norm errors and improved sparsity recovery compared with competing methods. In the balanced setting $(d,n)=(50,50)$ (reported in Supplementary Table~\ref{tab:n50_p50}), most approaches perform reasonably well under the block-diagonal design, but BLOC-based estimators achieve uniformly smaller estimation errors while maintaining accurate edge selection. Under Toeplitz and banded structures, BLOC further reduces estimation error relative to $\ell_1$-based ADMM and thresholding methods, which often attain comparable TPR at the expense of inflated FPR and reduced MCC.
As dimensionality increases to $(d,n)=(100,50)$ and $(100,100)$ (Tables~\ref{tab:n50_p100} and \ref{tab:n100_p100}), the advantage of BLOC becomes more pronounced. SCAD–BLOC and MCP–BLOC maintain stable estimation accuracy and balanced edge recovery, whereas convex and thresholding-based competitors exhibit degraded performance or excessive false positives. Additional numerical details and extended comparisons are provided in the Supplementary Section \ref{sec:simulation_supp}.
\vspace{-0.7cm}
\section{Pathway-informed correlation estimation for pan-gynecologic proteomics data}
\label{sec:realdata}\vspace{-0.2cm}
We illustrate the utility of the proposed BLOC framework through an application to reverse-phase protein array (RPPA) data from The Cancer Genome Atlas (TCGA), focusing on five pan-gynecologic cancers: breast carcinoma (BRCA), cervical squamous cell carcinoma (CESC), ovarian serous cystadenocarcinoma (OV), uterine corpus endometrial carcinoma (UCEC), and uterine carcinosarcoma (UCS). These cancers exhibit shared hormonal drivers and pathway dysregulation, yet they differ markedly in clinical presentation and progression. Understanding the protein--protein correlation structure across key signaling modules is therefore of biological and translational interest. Our proteomics data arises from The Cancer Proteome Atlas (TCPA, \citealp{Li2013}), which provides protein abundance data for TCGA samples, and consists of RPPA-based quantifications using antibodies that cover functions including proliferation, DNA damage, polarity, vesicle function, epithelial–mesenchymal transition (EMT), invasiveness, hormone signaling, apoptosis, metabolism, immunological and stromal function, as well as other critical cellular signaling pathways \citep{Akbani2014}.

Following prior literature highlighting the central role of specific signaling pathways in pan-gynecologic cancers \citep{Akbani2014, Das2020}, we focused on a curated panel of $d=27$ proteins that are routinely studied in this context. These proteins are commonly organized into well-characterized functional modules, or pathways, including Cell Cycle (CDK1, CYCLINB1, CYCLINE2, P27\_pT157, P27\_pT198, PCNA, FOXM1), Hormone Receptor (ERALPHA, ERALPHA\_pS118, PR, AR), Hormone Signaling (BCL2, INPP4B, GATA3), PI3K/AKT (AKT\_ps473, AKT\_ps308, GSK3ALPHABETA\_pS21S9, GSK3\_pS9, PRAS40\_pT246, TUBERIN\_pT1462, PTEN), and Breast Reactive (BETACATENIN, CAVEOLIN1, MYH11, RAB11, GAPDH, RBM15). 

To encode biological pathway structure, we exploit BLOC’s flexibility to optimize arbitrary objectives over the correlation matrix space. We incorporate structured penalization through a penalty cover that exempts within-pathway pairs from shrinkage while penalizing cross-pathway associations. Since proteins within the same pathway are biologically expected to interact, their correlations are left unpenalized, whereas cross-pathway correlations are penalized unless strongly supported by the data. Let the $d$ proteins be partitioned into $G=5$ non-overlapping pathways, and let $g(i)\in\{1,\dots,5\}$ denote the pathway membership of protein $i$. Define the symmetric penalty-cover matrix $\bm P=(P_{ij})$ by
\[
P_{ij} \;=\; 
\begin{cases}
0, & \text{if } g(i)=g(j),\\[2pt]
1, & \text{if } g(i)\neq g(j),
\end{cases}
\qquad P_{ij}=P_{ji}.
\]
Thus, only cross-pathway pairs ($P_{ij}=1$) are penalized.

For each cancer type, we estimate the correlation matrix by solving
\[
\min_{\bm\Gamma\in\mathcal{C}_d}
\; h_n(\bm\Gamma)\;+\;
\lambda \sum_{1\le i<j\le d} 
P_{ij}\,\rho_{\mathrm{SCAD}}\!\left(|\gamma_{ij}|;\,a\right),
\]
where $h_n(\bm\Gamma)$ is the Frobenius loss, $\lambda>0$ is a tuning parameter, and $\rho_{\mathrm{SCAD}}(\cdot;a)$ denotes the SCAD penalty. Equivalently, penalization applies only to the Hadamard-masked entries $\bm P\circ\bm\Gamma$, yielding a block-structured penalty that preserves within-pathway coherence while encouraging sparsity across pathways.

\begin{figure}[htbp]
    \centering
    \begin{subfigure}{0.46\textwidth}
        \centering
        \includegraphics[width=\linewidth]{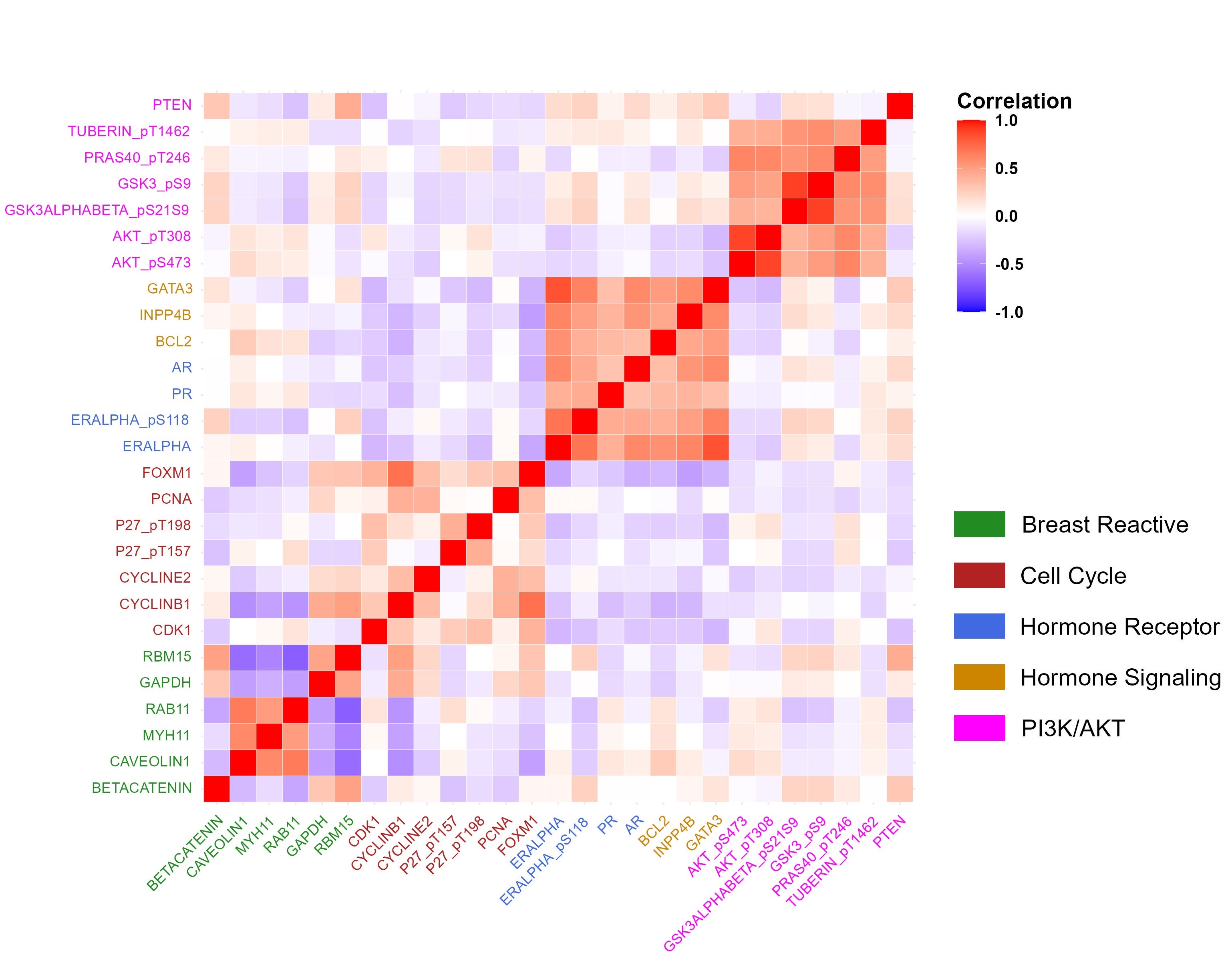}
        \caption{BRCA}
        \label{fig:BRCA}
    \end{subfigure}
    \hfill
    \begin{subfigure}{0.46\textwidth}
        \centering
        \includegraphics[width=\linewidth]{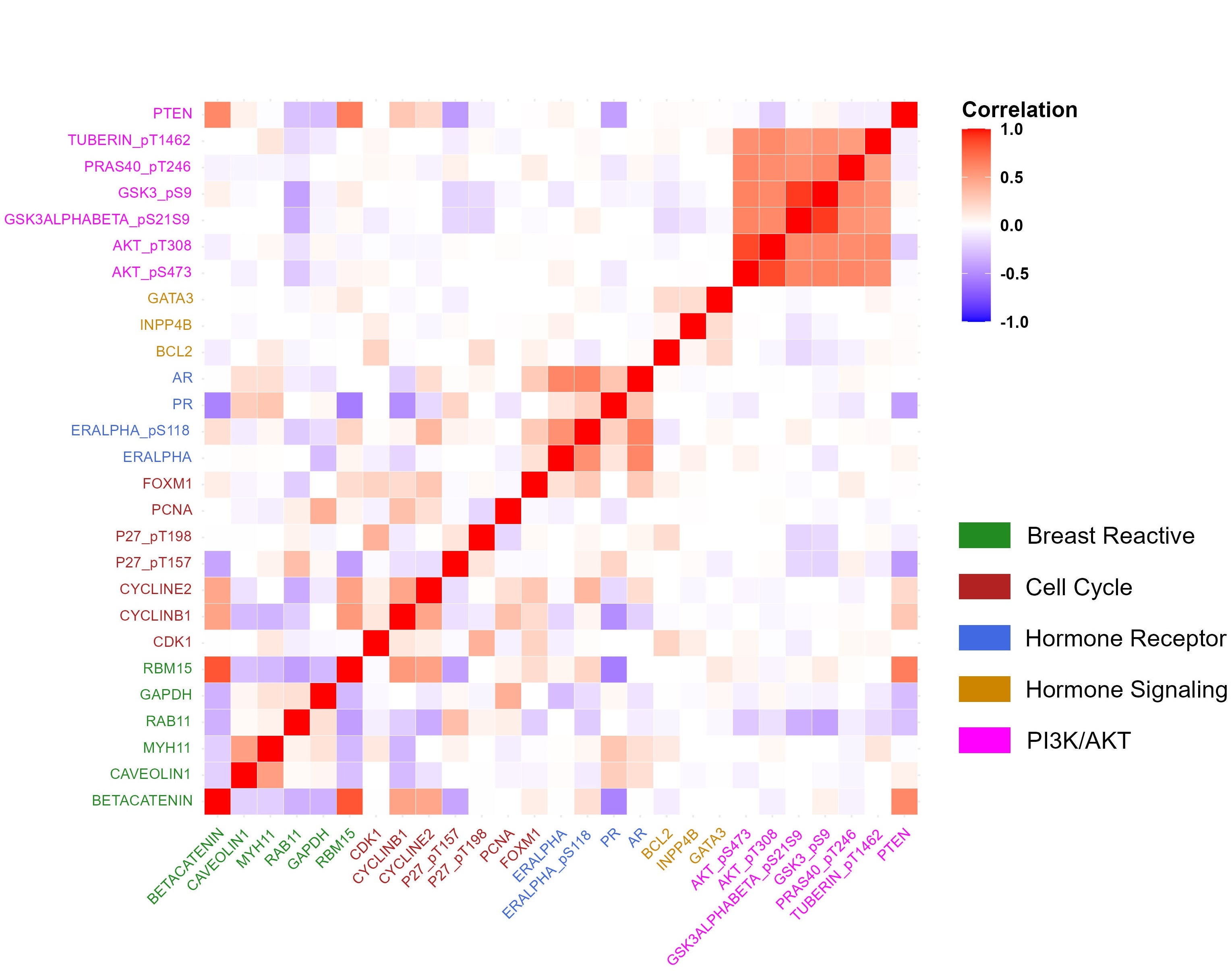}
        \caption{CESC}
        \label{fig:CESC}
    \end{subfigure}

    \begin{subfigure}{0.46\textwidth}
        \centering
        \includegraphics[width=\linewidth]{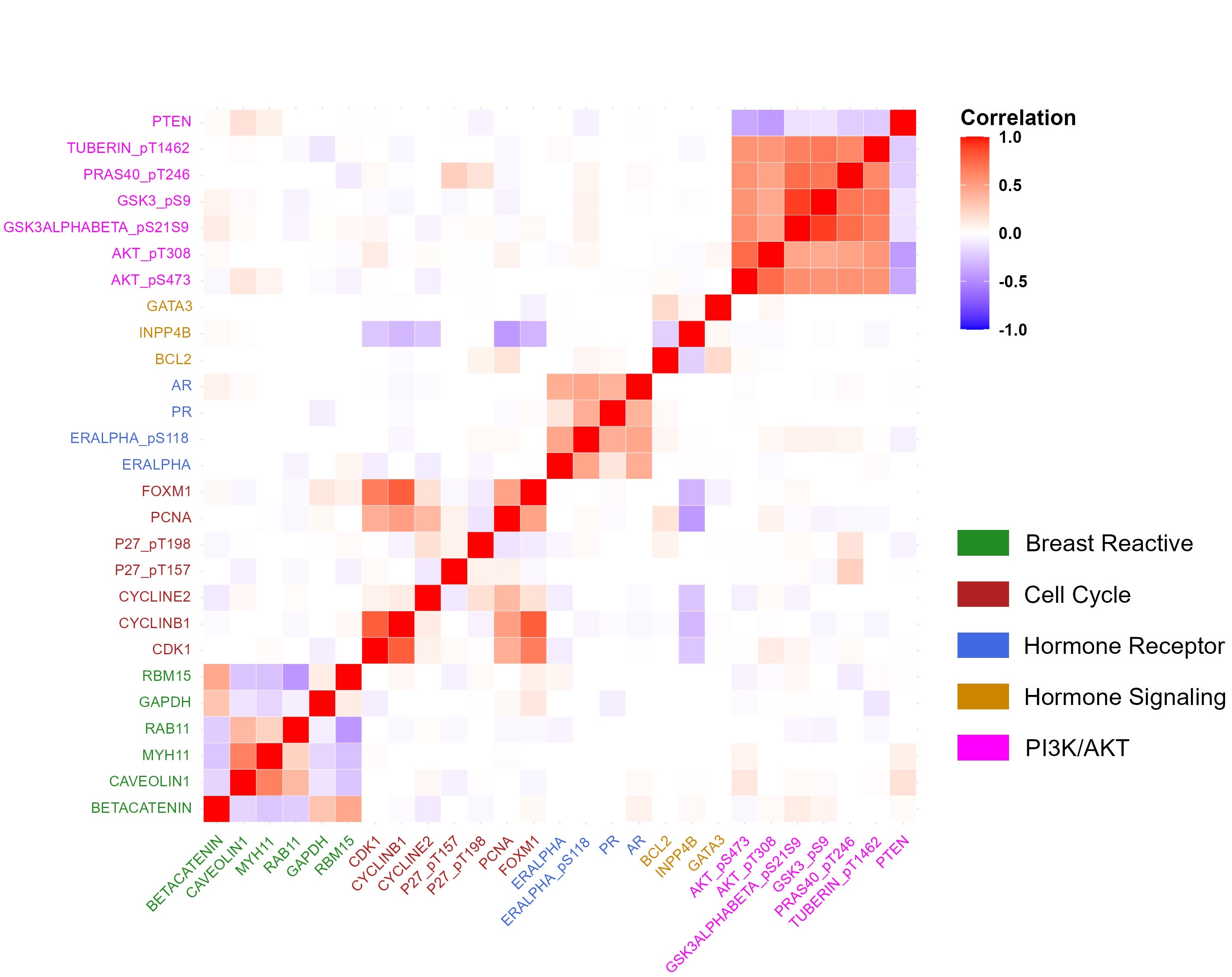}
        \caption{OV}
        \label{fig:OV}
    \end{subfigure}
    \hfill
    \begin{subfigure}{0.46\textwidth}
        \centering
        \includegraphics[width=\linewidth]{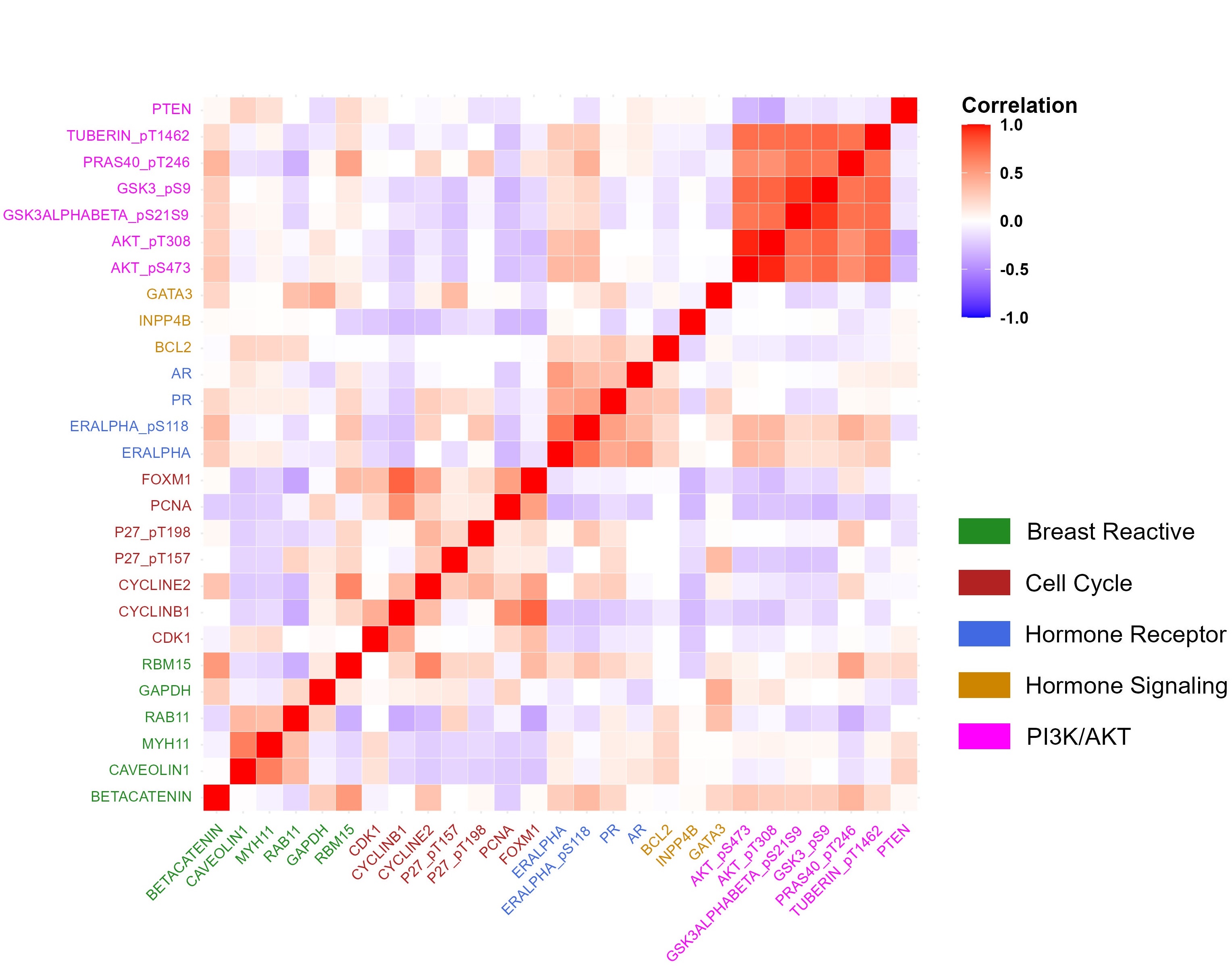}
        \caption{UCEC}
        \label{fig:UCEC}
    \end{subfigure}

    \begin{subfigure}{0.46\textwidth}
        \centering
        \includegraphics[width=\linewidth]{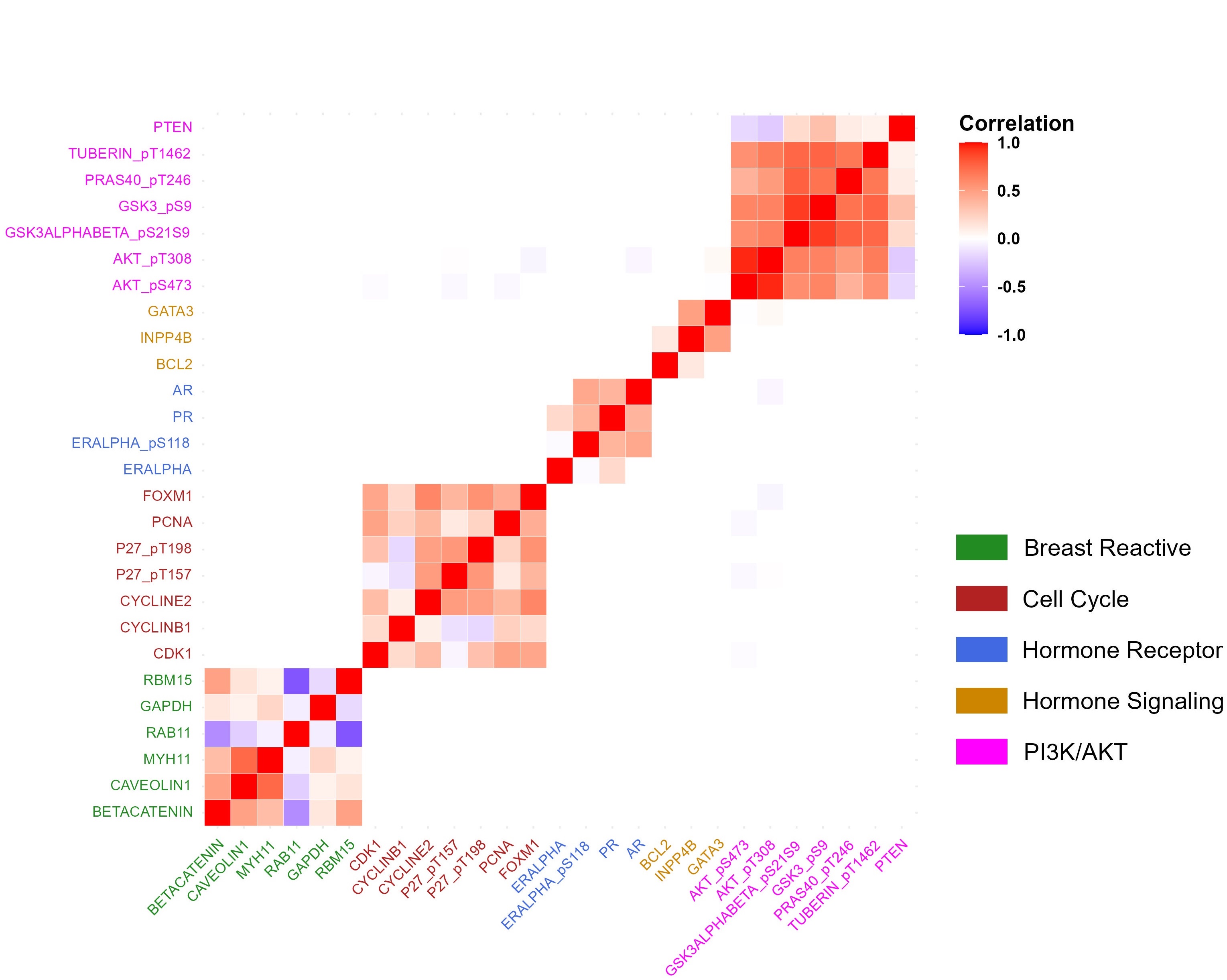}
        \caption{UCS}
        \label{fig:UCS}
    \end{subfigure}

    \caption{Estimated sparse correlation heatmaps for five pan-gynecologic cancers using BLOC with SCAD and a pathway-based penalty cover. Within-pathway coherence is preserved, while across-pathway edges highlight tumor-specific differences in integration.}
    \label{fig:all_corr}
\end{figure}

The estimated sparse correlation heatmaps for each cancer type are displayed in Figures~\ref{fig:BRCA}--\ref{fig:UCS}. Several clear and biologically interpretable features emerge from these results. First, by construction, within-pathway associations are preserved, and indeed strong positive correlations are consistently observed among proteins within the Cell Cycle module (FOXM1, PCNA, CYCLINB1, P27 phospho-sites) as well as within the PI3K/AKT module (AKT\_pS473, AKT\_pT308, PRAS40, GSK3 isoforms, PTEN). This is reassuring, as it confirms that the penalty cover preserves expected pathway coherence. 

Beyond these within-pathway signals, the penalization highlights selective cross-pathway connections. In BRCA and UCEC, Hormone Signaling proteins such as INPP4B and GATA3 show positive correlations with ERALPHA and PR from the Hormone Receptor pathway, consistent with known crosstalk between estrogen receptor activity and PI3K signaling. In OV, a subset of PI3K/AKT proteins retain residual correlation with Cell Cycle regulators, suggesting potential pathway integration that may drive tumor proliferation. CESC, in contrast, exhibits weaker and more diffuse cross-talk, reflecting a less hormone-driven biology. UCS presents a strikingly different pattern, with two nearly independent correlation blocks--one corresponding to the Cell Cycle cluster and the other to PI3K/AKT--indicating a high degree of separation between modules, consistent with its aggressive and heterogeneous profile. It is worth noting that the estimated sparsity patterns are influenced by sample sizes across cancer types (BRCA: $n=879$, CESC: $n=171$, OV: $n=428$, UCEC: $n=404$, UCS: $n=48$). In particular, the extreme sparsity observed for UCS likely reflects, in part, the limited sample size, which reduces power to detect moderate correlations. Thus, while the block structure in UCS may indicate biological heterogeneity, it should be interpreted cautiously given the sample-size constraints.
\vspace{-0.1cm}

Taken together, these results underscore the value of embedding biological prior knowledge through a structured penalty cover. The approach preserves expected within-pathway correlation while adaptively shrinking across-pathway edges, thereby uncovering tumor-specific differences in pathway integration. BRCA and UCEC demonstrate strong Hormone Receptor-signaling interplay, OV suggests partial integration between PI3K/AKT and the Cell Cycle, CESC reveals comparatively sparse cross-talk, and UCS highlights sharp modular separation, though in part amplified by limited sample size. These differences may have implications for therapeutic targeting, as they suggest varying degrees of pathway dependency across tumor types.

\vspace{-0.8cm}
\section{Discussion}
\label{sec:discuss}\vspace{-0.2cm}
We have developed \textsc{BLOC}, a general-purpose algorithm for sparse covariance estimation that unifies reparameterization of correlation matrices with a recursive global search strategy. The central contribution is methodological generality: BLOC is designed to optimize arbitrary objective functions over $\mathcal{C}_d$, the $d \times d$ correlation matrix space, with no restrictions on the choice of penalty or loss. This makes the framework broadly applicable to a wide spectrum of statistical problems where correlation structure plays a central role. Unlike existing methods that are tailored to specific penalties or likelihoods, BLOC can seamlessly accommodate nonconvex penalties such as SCAD and MCP, as well as customized masking structures that encode prior knowledge. The algorithm therefore bridges a gap between theoretical flexibility and practical feasibility, offering researchers a tool to directly impose domain-informed structures while preserving validity of the correlation matrix.

Our empirical investigations highlight several strengths. In simulation studies covering both the classical small-$d$ regime ($n > d$ with Gaussian likelihood) and the challenging high-dimensional regime ($d \ge n$ with Frobenius loss), BLOC consistently delivered low estimation error, faithful support recovery, and guaranteed positive definiteness. Nonconvex penalties implemented through BLOC yielded improvements in Frobenius and spectral norm accuracy compared to convex $\ell_1$-based baselines, while also providing superior mean absolute deviation and Matthews correlation coefficient scores in edge selection. Benchmark experiments on synthetic landscapes further established that BLOC’s recursive modified pattern search is able to escape local minima and identify near-optimal solutions, underscoring its ability to balance local exploitation with global exploration. These empirical successes are underpinned by our theoretical analysis, which establishes local stationarity of limit points, global reachability when restarts are employed, and convergence properties under convexity, thereby providing a rigorous foundation for the algorithm’s practical behavior.

From a computational perspective, BLOC is naturally suited for parallelization. Each iteration evaluates $d(d-1)$ perturbations in the angular coordinate system, which can be distributed across independent threads with minimal communication. This enables efficient scaling across multiple cores, making BLOC attractive for modern high-performance environments. Our experiments show that parallel implementations substantially reduce runtime, with near-linear speedups as the number of threads increases in higher dimensions. Consequently, BLOC remains practical for large-scale problems even when dimensionality is high and the landscape is rugged. The real-data analysis of TCGA reverse-phase protein array data further illustrates this flexibility: by specifying a biologically motivated penalty cover, we incorporated prior signaling pathway knowledge into correlation estimation without modifying the core algorithm. The resulting networks exhibited expected within-pathway coherence and tumor-specific cross-pathway differences, underscoring the interpretability and adaptability of the framework.

Nevertheless, limitations remain. The generality of BLOC implies that it is not always the fastest solver: when the problem is simple and convex, specialized methods such as graphical lasso or ADMM-based routines may be more efficient. Moreover, global exploration—essential for avoiding spurious local optima—inevitably increases runtime, creating a trade-off between robustness and computational efficiency. These considerations motivate future work on accelerated variants that retain global convergence guarantees while reducing overhead, as well as adaptive schemes balancing global search with local refinement. Despite these challenges, the present work demonstrates that a flexible, globally aware, and parallelizable framework can meaningfully advance correlation matrix optimization, offering both theoretical rigor and practical utility.

\section{Data Availability Statement}\label{data-availability-statement} 
The dataset used in this study is publicly available at \href{https://github.com/priyamdas2/BLOC}{https://github.com/priyamdas2/BLOC}.
\begin{description}
\item[BLOC software:] 
Reproducible code for the simulations and real data analysis is available on GitHub at \href{https://github.com/priyamdas2/BLOC}{https://github.com/priyamdas2/BLOC}. \vspace{-0.3cm}
\item[Funding:]
\begin{description}
\item PD is partially supported by NIH-NCI Cancer Center Support Grant P30 CA016059.
\end{description}
\end{description}
\newpage
\appendix
\begin{center}
	\Large Supplement to ``BLOC: A Global Optimization Framework for Sparse Covariance Estimation with Non-Convex Penalties"
\end{center}
\section{Global Optimization: Background}\label{sec:global_opt}
Optimization has long been central to numerous disciplines including Mathematics, Statistics, and Computer Science. Classical approaches have been dominated by convex optimization methods, celebrated for their strong convergence properties and scalability across large parameter spaces \citep{Boyd2004}. Techniques such as Newton-Raphson, interior-point (IP), and sequential quadratic programming (SQP) continue to be widely applied \citep{Nocedal2006}. However, in settings where the objective function is non-convex, which is often the case in robust or penalized covariance estimation, these techniques can be limited by their inability to escape local optima.

Advancements in computing power have led to increased use of global optimization methods in tackling non-convex problems. Metaheuristic algorithms such as Genetic Algorithms (GA; \citealp{Bethke1980}) and Simulated Annealing (SA; \citealp{Kirkpatrick1983}) are designed to explore complex landscapes and offer a probabilistic guaranty of finding global solutions. These heuristics employ mechanisms that help escape local optima, thereby increasing the chances of identifying global solutions. In the following, we highlight the fundamental differences between convex and global optimization through the lens of the \textit{exploration versus exploitation} paradigm.

A key conceptual framework for understanding global optimization is the \textit{exploration vs. exploitation} trade-off. This idea, which is foundational across reinforcement learning and search theory, applies equally to numerical optimization \citep{BergerTal2014, Zhang2023}. In this context, gradient-based methods typically focus on \textit{exploitation}, refining solutions through descent directions, while full-grid search methods emphasize \textit{exploration}, evaluating a broad swath of the space. Modern global optimization algorithms strive to combine these two principles, which can be conceptually expressed as:
\[
\gamma \cdot \textit{exploitation} + (1 - \gamma) \cdot \textit{exploration}, \quad \gamma \in (0,1)
\]
Here, $\gamma$ quantifies the algorithm’s emphasis on local refinement versus global search. Despite significant advances, scalability remains a major challenge. For instance, in high-dimensional settings, search complexity grows exponentially for genetic algorithms (GAs), limiting their applicability \citep{Geris2012}. Furthermore, only a limited number of optimization methods are specifically designed for symmetric positive definite or correlation matrix spaces, and to the best of our knowledge, very few of them (if any) adhere to a global optimization framework that incorporates strategies to escape local optima. Although certain algorithms, such as Barzilai-Borwein, conjugate gradient, steepest descent, and trust-region methods, can be adapted to operate over such constrained matrix manifolds \citep{boumal2014manopt}, they typically lack mechanisms for escaping local solutions. Consequently, they are not well-suited for solving non-convex objective functions in these spaces, a limitation that is further substantiated in our benchmark study presented in Section \ref{sec:benchmark}.

\paragraph{Recursive Modified Pattern Search:} One of the earliest attempts at derivative-free black-box optimization was pioneered by \cite{Fermi1952}, introducing a strategy commonly referred to as \textit{Fermi’s principle}. This method allows for optimizing an objective function over an unconstrained domain, even when the function is non-differentiable or discontinuous. At each iteration, the function is evaluated at \(2n\) neighboring points, corresponding to coordinate-wise movements in both positive and negative directions with a step-size denoted by \(s > 0\). The best-performing point is selected as the updated solution. By adjusting \(s\), candidate points can be sampled either from a local neighborhood (small \(s\)) or from more distant regions (large \(s\)), thus enabling adaptive exploration. 
\textit{Fermi’s principle} eventually gave rise to methods such as Direct Search and Pattern Search (PS) algorithms \citep{Torczon1997, Kolda2003}. When employing Fermi’s principle, progressively reducing the step-size across iterations can lead to local convergence as \( s \to 0 \) \citep{Torczon1997, Das2023RMPSH}. Unlike gradient-based algorithms, PS methods incorporate a substantial but scalable degree of exploration over the parameter space. This exploration-driven approach makes PS a powerful and scalable tool for derivative-free optimization. However, due to the absence of any built-in mechanism to escape local minima, PS is typically not categorized under meta-heuristic global optimization strategies such as Genetic Algorithms (GA) and Simulated Annealing (SA).

In order to extend the PS principle to global optimization problems, \cite{Das2023RMPSH} proposed the \textit{Recursive Modified Pattern Search (RMPS)} algorithm by modifying the classical PS method of \cite{Torczon1997} in two key ways: (i) incorporating an adaptive strategy to shrink the step-size $s \to 0$ based on improvements in the objective function, and (ii) introducing a restart mechanism that allows the algorithm to escape local optima. These enhancements enabled RMPS to outperform Genetic Algorithms (GA) and Simulated Annealing (SA) in benchmark studies, typically yielding superior solutions in less time. RMPS has since been extended to constrained domains such as the unit sphere \citep{Das2022} and the simplex \citep{Das2021}. It has also been applied to non-convex statistical problems in both frequentist and Bayesian frameworks, across domains such as finance \citep{tan2020}, environmental studies \citep{Das2017b}, and biomedical informatics \citep{Das2023, kim2025smart}. RMPS has also been shown to outperform conventional algorithms, such as the Expectation–Maximization (EM) algorithm, in mixture Markov clustering problems \citep{Das2023MsiCOR}.
\section{Benchmark study}
\label{sec:benchmark}
\noindent To assess the performance of BLOC on structured, high-dimensional constrained domains, we employ a set of four widely used benchmark functions, Ackley, Griewank, Rosenbrock, and Rastrigin, adapted to the space of correlation matrices. For a correlation matrix $\boldsymbol{C} \in \mathcal{C}_d$, we vectorize its off-diagonal elements into $\boldsymbol{x} \in \mathbb{R}^N$ with $N = d(d-1)$. In their classical unconstrained forms, all four benchmarks achieve a global minimum of zero at the origin. In our adaptation, the identity correlation matrix plays the role of the origin, making it the global minimizer with objective value zero. This ensures that optimization quality can be judged relative to a common reference point while partially preserving the intrinsic landscape of each benchmark. The Ackley function, applied to inputs $x_i = 10 \cdot C_{pq}$ ($p \ne q$), is characterized by flat basins interspersed with oscillations and tests the optimizer’s ability to escape plateaus. The Griewank function, with inputs scaled as $x_i = 100 \cdot C_{pq}$, combines quadratic growth with multiplicative cosine modulations, introducing moderate multimodality. The Rastrigin function, evaluated on $x_i = 10 \cdot C_{pq}$, creates a highly repetitive landscape of local minima, stressing robustness in oscillatory terrains. Finally, the Rosenbrock function, applied to $x_i = 100 \cdot C_{pq}$, presents a curved, narrow valley that requires careful search along nonlinear dependencies. Embedding these functions into the space of positive definite correlation matrices preserves the structural challenges of the original objectives while respecting the unit-diagonal and positive definiteness constraints. For precise definitions and classical domains of these functions, see \cite{surjanovic2013virtual}. Table~\ref{tab:bloc_benchmark} summarizes the comparative performance of BLOC and its parallel implementation against a suite of optimization methods for positive definite matrix problems. Competing solvers include MATLAB’s \texttt{fmincon} variants (Interior-Point, SQP, Active-Set) as well as routines from the \texttt{Manopt} toolbox \citep{boumal2014manopt}. A maximum runtime of one hour is imposed for each solver. For fairness, all methods, including both BLOC variants, are initialized from the same 10 randomly chosen starting points per problem instance. The table reports the best objective value across replications, the associated standard error, and the mean runtime (seconds, with standard error in parentheses), for correlation matrices of dimension $d = 5, 10, 20, 50$. All experiments were conducted on a Windows 10 Enterprise workstation with 32 GB RAM and a 12th Gen Intel(R) Core(TM) i7-12700 processor (2.1 GHz, 12 cores, 20 threads).

The comparative results in Table~\ref{tab:bloc_benchmark} show that BLOC is consistently competitive, often achieving the global minimum or values close to machine precision across all four benchmark functions. Among the other solvers, \texttt{fmincon:sqp} and \texttt{fmincon:active-set} perform reasonably well in lower dimensions, converging quickly on Ackley and Griewank with modest runtimes. However, their solution quality deteriorates as dimensionality increases, and they frequently fail to approach the global minimum on Rosenbrock and Rastrigin. Similarly, some \texttt{Manopt} routines, such as conjugate-gradient and steepest-descent, occasionally deliver acceptable solutions for $d=5$ or $d=10$, but their progress slows sharply with higher dimensions and they often plateau far above the optimum or exhaust the one-hour time budget. Overall, while a few alternatives show reasonable performance in small-scale problems, BLOC demonstrates more consistent accuracy and stability as the correlation matrix size increases.

\begin{landscape}
	\begin{table}[h]
		\centering
		\renewcommand{\arraystretch}{1.15}
		\resizebox{0.99\columnwidth}{!}{\begin{tabular}{|c|l|ccc|ccc|ccc|ccc|}
				\hline
				\multirow{2}{*}{Functions} & \multirow{2}{*}{Methods} & \multicolumn{3}{c|}{$d = 5$} & \multicolumn{3}{c|}{$d = 10$} & \multicolumn{3}{c|}{$d = 20$} & \multicolumn{3}{c|}{$d = 50$} \\ \cline{3-14} 
				&  & min. value & s.e. of values & mean time (s.e.) & min. value & s.e. of values & mean time (s.e.) & min. value & s.e. of values & mean time (s.e.) & min. value & s.e. of values & mean time (s.e.) \\ \hline
				\multirow{9}{*}{Ackley} & BLOC & \textbf{4.44E-16} & 1.19E-15 & 0.21 (0.017) & \textbf{4.00E-15} & 3.38E-15 & 6.20 (0.333) & \textbf{2.18E-14} & 3.32E-15 & 690.14 (29.384) & \textbf{4.21E-06} & 6.64E-07 & 3600.79 (0.126) \\
				& BLOCparallel & \textbf{4.44E-16} & 1.19E-15 & 14.85 (1.045) & \textbf{4.00E-15} & 3.38E-15 & 79.84 (4.232) & \textbf{2.18E-14} & 3.32E-15 & 465.38 (22.530) & \textbf{8.57E-14} & 3.45E-15 & 2798.00 (127.924) \\
				& fmincon:active-set & 4.77E+00 & 6.92E-01 & 0.10 (0.038) & 2.71E+00 & 3.26E-01 & 0.25 (0.038) & 2.24E+00 & 1.97E-01 & 1.23 (0.088) & 2.58E+00 & 8.08E-02 & 14.84 (1.938) \\
				& fmincon:interior-point & 7.62E+00 & 5.15E-01 & 0.11 (0.031) & 2.76E+00 & 2.95E-01 & 0.22 (0.033) & 3.65E+00 & 1.25E-01 & 0.30 (0.031) & 4.12E+00 & 1.75E-01 & 0.74 (0.023) \\
				& fmincon:sqp & 2.32E+00 & 6.65E-01 & 0.03 (0.005) & 1.57E-01 & 3.47E-01 & 0.21 (0.008) & 2.11E+00 & 1.88E-01 & 3.41 (0.194) & 2.75E+00 & 6.34E-02 & 36.28 (7.507) \\
				& Manopt:barzilai-borwein & 2.59E-04 & 1.56E+00 & 1.20 (0.160) & 2.19E+00 & 2.09E-01 & 12.90 (0.106) & 2.74E+00 & 1.91E-01 & 1845.61 (16.090) & 2.91E+00 & 4.25E-02 & 3655.08 (9.139) \\
				& Manopt:conjugate-gradient & 3.40E+00 & 9.46E-01 & 0.34 (0.141) & 5.33E+00 & 4.60E-01 & 10.06 (1.096) & 2.59E+00 & 3.14E-01 & 1473.70 (35.334) & 2.75E+00 & 2.94E-02 & 3628.97 (8.379) \\
				& Manopt:steepest-descent & 2.01E+00 & 1.11E+00 & 0.90 (0.188) & 2.25E+00 & 5.86E-01 & 12.29 (0.647) & 1.37E+00 & 9.70E-02 & 1562.16 (84.209) & 3.92E-02 & 1.94E-01 & 3652.13 (12.547) \\
				& Manopt:trust-region & 8.58E+00 & 5.89E-01 & 3.41 (0.093) & 5.27E+00 & 4.55E-01 & 125.19 (57.512) & 2.13E+00 & 2.51E-01 & 3612.68 (5.701) & 2.12E+00 & 7.16E-02 & 3739.28 (22.810) \\ \hline
				\multirow{9}{*}{Griewank} & BLOC & \textbf{0.00E+00} & 3.56E-02 & 0.11 (0.014) & \textbf{0.00E+00} & 9.02E-02 & 2.61 (0.118) & \textbf{0.00E+00} & 6.57E-02 & 368.09 (22.019) & 1.79E+00 & 9.39E-02 & 3600.80 (0.145) \\
				& BLOCparallel & \textbf{0.00E+00} & 3.56E-02 & 7.12 (0.899) & \textbf{0.00E+00} & 9.02E-02 & 32.87 (1.511) & \textbf{0.00E+00} & 6.57E-02 & 168.62 (8.821) & \textbf{1.33E-15} & 3.27E-02 & 1453.56 (26.140) \\
				& fmincon:active-set & 3.54E-08 & 1.30E-01 & 0.06 (0.014) & 9.67E-01 & 1.50E-02 & 0.49 (0.048) & 1.12E+00 & 4.90E-02 & 8.87 (0.684) & 1.05E+00 & 8.26E-02 & 711.20 (49.887) \\
				& fmincon:interior-point & 3.62E-14 & 1.34E-01 & 0.09 (0.008) & 5.31E-09 & 1.14E-01 & 0.17 (0.003) & 8.66E-02 & 1.00E-01 & 0.26 (0.007) & 1.58E+01 & 6.60E-01 & 0.75 (0.012) \\
				& fmincon:sqp & 5.24E-12 & 1.35E-01 & 0.02 (0.001) & 2.06E-04 & 1.03E-01 & 0.16 (0.018) & 7.09E-03 & 9.69E-02 & 5.70 (2.323) & 8.09E-02 & 7.88E-02 & 523.22 (129.671) \\
				& Manopt:barzilai-borwein & 1.19E-01 & 1.22E-01 & 1.30 (0.155) & 1.22E-11 & 1.29E-01 & 12.23 (0.828) & 3.99E-10 & 9.02E-02 & 1645.40 (33.268) & 1.18E+00 & 1.63E-01 & 3651.76 (12.201) \\
				& Manopt:conjugate-gradient & 3.02E-13 & 1.09E-01 & 0.20 (0.079) & 1.51E-12 & 1.13E-01 & 7.40 (1.619) & 4.14E-12 & 7.48E-07 & 1006.67 (119.258) & \textbf{2.80E-02} & 9.04E-02 & 3648.72 (8.335) \\
				& Manopt:steepest-descent & 3.65E-13 & 1.48E-01 & 1.03 (0.205) & 3.20E-12 & 1.03E-01 & 11.63 (1.040) & 1.55E-04 & 7.65E-02 & 1412.82 (8.442) & 6.05E-01 & 5.84E-02 & 3666.81 (8.680) \\
				& Manopt:trust-region & 6.12E-14 & 1.40E-01 & 2.95 (0.459) & 8.57E-13 & 1.03E-01 & 88.85 (33.392) & 6.85E-11 & 5.54E-02 & 3424.90 (183.014) & 4.03E+00 & 3.86E-01 & 3977.01 (71.034) \\ \hline
				\multirow{9}{*}{Rosenbrock} & BLOC & \textbf{7.82E-22} & 2.51E+00 & 0.31 (0.021) & \textbf{4.48E-20} & 1.35E+01 & 18.48 (1.917) & \textbf{3.75E+02} & 6.49E-13 & 1031.01 (60.384) & 1.51E+05 & 3.21E+05 & 3600.54 (0.109) \\
				& BLOCparallel & \textbf{7.82E-22} & 2.51E+00 & 24.06 (1.642) & \textbf{4.48E-20} & 1.35E+01 & 251.39 (25.812) & \textbf{3.75E+02} & 4.43E-01 & 3600.29 (0.022) & \textbf{2.75E+03} & 3.79E+02 & 3600.13 (0.021) \\
				& fmincon:active-set & 6.73E+03 & 2.45E+05 & 0.09 (0.008) & 1.17E+04 & 9.86E+03 & 2.01 (0.041) & 9.96E+04 & 2.02E+04 & 377.28 (5.830) & NaN & NaN & NaN (NaN) \\
				& fmincon:interior-point & 2.68E-08 & 1.88E+00 & 0.11 (0.020) & 7.63E+02 & 7.18E+04 & 0.14 (0.002) & 4.13E+08 & 2.30E+08 & 0.20 (0.004) & 1.01E+09 & 2.36E+09 & 0.76 (0.030) \\
				& fmincon:sqp & 1.52E+02 & 2.43E+06 & 0.03 (0.002) & 1.61E+02 & 6.12E+05 & 0.87 (0.176) & 2.03E+03 & 4.59E+05 & 202.14 (18.788) & NaN & NaN & NaN (NaN) \\
				& Manopt:barzilai-borwein & 4.05E+09 & 4.25E+09 & 0.01 (0.001) & 1.63E+10 & 1.76E+09 & 0.02 (0.001) & 1.98E+10 & 1.56E+09 & 1.60 (0.044) & 2.48E+10 & 7.10E+08 & 119.75 (2.236) \\
				& Manopt:conjugate-gradient & 1.88E+01 & 9.28E-10 & 0.15 (0.017) & 8.81E+01 & 3.71E-02 & 8.03 (1.743) & \textbf{3.75E+02} & 4.29E-02 & 901.03 (174.999) & 8.06E+03 & 2.64E+03 & 3654.56 (8.325) \\
				& Manopt:steepest-descent & 1.88E+01 & 7.41E-10 & 0.38 (0.128) & 8.81E+01 & 5.23E+00 & 10.31 (1.460) & \textbf{3.75E+02} & 2.71E+00 & 958.70 (94.386) & \textbf{6.34E+03} & 1.69E+03 & 3643.65 (8.871) \\
				& Manopt:trust-region & 1.34E-09 & 1.88E+00 & 3.25 (0.019) & 8.81E+01 & 7.86E-04 & 66.38 (18.825) & \textbf{3.75E+02} & 1.38E-01 & 3255.01 (105.917) & 4.31E+07 & 2.78E+07 & 3926.85 (55.487) \\ \hline
				\multirow{9}{*}{Rastrigin} & BLOC & \textbf{1.99E+00} & 7.41E+00 & 0.13 (0.011) & \textbf{5.77E+01} & 1.81E+01 & 16.04 (9.370) & \textbf{1.87E+02} & 4.05E+01 & 2162.80 (332.831) & \textbf{2.09E+03} & 2.16E+02 & 3600.91 (0.199) \\
				& BLOCparallel & \textbf{1.99E+00} & 7.41E+00 & 9.65 (0.654) & \textbf{5.77E+01} & 1.81E+01 & 191.22 (111.360) & \textbf{1.87E+02} & 4.18E+01 & 2946.87 (347.088) & \textbf{1.02E+03} & 9.07E+01 & 3600.17 (0.028) \\
				& fmincon:active-set & 5.97E+01 & 5.47E+01 & 0.07 (0.025) & 5.69E+02 & 4.79E+01 & 1.55 (0.079) & NaN & NaN & NaN (NaN) & NaN & NaN & NaN (NaN) \\
				& fmincon:interior-point & 1.53E+02 & 4.04E+01 & 0.09 (0.021) & 3.78E+02 & 6.08E+01 & 0.21 (0.031) & 2.11E+03 & 1.25E+02 & 0.43 (0.047) & 1.95E+04 & 7.45E+02 & 1.00 (0.045) \\
				& fmincon:sqp & 4.78E+01 & 2.55E+01 & 0.02 (0.002) & 3.86E+02 & 2.92E+01 & 0.55 (0.017) & NaN & NaN & NaN (NaN) & NaN & NaN & NaN (NaN) \\
				& Manopt:barzilai-borwein & \textbf{1.99E+00} & 1.54E+01 & 1.66 (0.072) & \textbf{2.39E+01} & 1.09E+01 & 13.79 (0.240) & \textbf{2.24E+02} & 2.26E+02 & 1462.75 (128.662) & 1.94E+04 & 1.08E+03 & 3639.13 (9.707) \\
				& Manopt:conjugate-gradient & 2.37E+02 & 5.28E+01 & 0.32 (0.122) & 7.32E+02 & 2.91E+01 & 11.20 (1.175) & 1.67E+03 & 3.57E+01 & 1460.88 (15.544) & 6.44E+03 & 8.51E+01 & 3653.67 (9.825) \\
				& Manopt:steepest-descent & 1.93E+02 & 4.66E+01 & 1.08 (0.200) & 6.27E+02 & 4.69E+01 & 13.36 (0.056) & 1.59E+03 & 4.92E+01 & 1414.58 (19.622) & 8.82E+03 & 2.05E+02 & 3669.30 (11.029) \\
				& Manopt:trust-region & 2.37E+02 & 5.38E+01 & 3.50 (0.110) & 7.32E+02 & 2.86E+01 & 87.81 (40.669) & 1.61E+03 & 2.70E+01 & 3613.74 (5.821) & 8.70E+03 & 1.62E+02 & 3823.16 (35.164) \\ \hline
		\end{tabular}}
		\caption{Performance comparison of BLOC and its parallel implementation against baseline optimization algorithms across four (modified) benchmark objective functions (Ackley, Griewank, Rosenbrock, and Rastrigin), evaluated over increasing correlation matrix dimensions ($d \times d$ for $d = 5$, 10, 20, 50). For each setting, results are based on 10 replications initialized from the same set of randomly chosen starting points across all methods. The reported minimum objective value is the best among the 10 replications, while the mean runtime (in seconds) and standard error (in parentheses) are computed across those 10 replications.}
		\label{tab:bloc_benchmark}
	\end{table}
\end{landscape}

A particularly important distinction emerges when comparing BLOC with its parallelized implementation. For $d \leq 10$, the sequential BLOC is more efficient due to its minimal overhead, solving problems in fractions of a second to a few minutes while reaching near-optimal values. In these smaller settings, BLOC\textsc{parallel} incurs additional synchronization costs without clear gains in objective quality. By contrast, at $d=20$ and especially $d=50$, the benefits of parallelization become decisive. On Ackley and Griewank, BLOC\textsc{parallel} reduces runtimes by factors of two to three while still attaining machine-precision minima, whereas sequential BLOC slows down and often terminates with noticeably worse objective values. The improvement is also evident on Rosenbrock and Rastrigin, where BLOC\textsc{parallel} attains substantially lower minima within the allotted runtime, outperforming both sequential BLOC and all other solvers. 
\begin{table}
	\centering
	\resizebox{0.6\columnwidth}{!}{%
		\begin{tabular}{|c|cccl|}
			\hline
			Functions & min. value & s.e. of values & \multicolumn{2}{c|}{mean time (s.e.)} \\ \hline
			Ackley & 4.46E-10 & 5.94E-02 & \multicolumn{2}{c|}{18000.47 (0.073)} \\
			Griewank & 4.98E-12 & 1.25E-01 & \multicolumn{2}{c|}{18000.47 (0.088)} \\
			Rosenbrock & 4.02E+04 & 1.90E+06 & \multicolumn{2}{c|}{18000.41 (0.091)} \\
			Rastrigin & 4.84E+03 & 4.14E+02 & \multicolumn{2}{c|}{18000.69 (0.171)} \\ \hline
	\end{tabular}}
	\caption{Performance of the parallel BLOC implementation at $M=100$ with a maximum runtime of five hours per benchmark function. Reported are the best objective value across 10 replications, the associated standard error, and the mean runtime in seconds (with standard error in parentheses).}
	\label{tab:comp_time_highdim} 
\end{table}

To further probe scalability, we conducted an additional experiment with $d=100$, restricting attention to BLOC\textsc{parallel} with a maximum runtime of five hours per function. Table~\ref{tab:comp_time_highdim} reports the results. On Ackley and Griewank, BLOC\textsc{parallel} successfully drives objective values to near machine-precision minima even at this scale, demonstrating its ability to handle extremely high-dimensional correlation matrices. For Rosenbrock and Rastrigin, the solutions are less tight than for Ackley and Griewank, which is expected given the increased complexity of their landscapes in high dimensions. Even so, BLOC\textsc{parallel} makes clear progress within the allotted budget and delivers results that are competitive relative to what is typically observed on these notoriously difficult benchmarks. Importantly, the method remained numerically stable and converged reliably across all 10 replications, confirming its robustness and reliability under demanding settings.

Taken together, the benchmark studies demonstrate that BLOC is a competitive solver across a wide range of structured nonconvex landscapes, consistently outperforming existing alternatives as the problem size grows. The non-parallel variant is effective in low-dimensional settings, but BLOC\textsc{parallel} becomes the method of choice for moderate to large dimensions, balancing accuracy and runtime efficiency. These findings highlight BLOC’s versatility and scalability, and they establish BLOC\textsc{parallel} as a practical optimizer for high-dimensional correlation matrix problems.
\section{Additional details on Simulation studies}
\label{sec:simulation_supp}
\subsection{Sparse covariance estimation with Gaussian likelihood ($n>d$)}
\label{sec:sims_small_p}

We first examine the performance of BLOC in the classical regime where the number of observations exceeds the number of variables ($n > d$), so that the Gaussian log-likelihood provides a natural loss. In this setting, the optimization problem in Equation (1) in the main manuscript is instantiated with 
\[
h_n(\bm\Gamma) = \operatorname{tr}(\bm\Gamma^{-1} \hat{\bm\Gamma}_S) + \log|\bm\Gamma|,
\]
where $\hat{\bm\Gamma}_S$ is the sample correlation matrix. We compare BLOC equipped with nonconvex penalties (SCAD and MCP) against the baseline $\ell_1$-penalized estimator from the \texttt{Spcov} R package \citep{Bien2011}, which corresponds to a convex penalization. 
\\[1ex]
\noindent \textit{Simulation design.} To evaluate performance under different sparsity structures, we generated true correlation matrices according to two designs, with $d$ variables, $n$ samples, and $10$ data replications per setting. 
\\[1ex]
\emph{(i) Block-diagonal:} The dimension $d$ is partitioned into disjoint $5\times 5$ blocks along the diagonal, resulting in a block-diagonal structure with $d/5$ such blocks. Each block is an independently generated random correlation matrix. The blocks are embedded along the diagonal to form the full $d\times d$ correlation matrix, yielding a clustered sparsity pattern (within-block correlations, zeros elsewhere). Positive definiteness holds by construction.
\\[1ex]
\emph{(ii) Uniform-sparse:} We prescribe a fixed off-diagonal sparsity level depending on $(d,n)$, and then populate a random subset of nonzero entries with values drawn i.i.d.\ from $\mathrm{Uniform}[0.3,\,0.6]$, followed by symmetrization. To ensure positive definiteness, each generated matrix is checked via a Cholesky decomposition, and regeneration is repeated if necessary. The specific sparsity levels are: $0.95$ zeros for $(d,n)=(20,50)$, $0.98$ zeros for $(50,100)$, and $0.99$ zeros for $(100,150)$.

For each setting, we generated ten independent datasets by drawing $n$ observations from a mean-zero multivariate normal distribution with the true correlation matrix specified by either the block-diagonal or the uniform-sparse design. This produced a consistent collection of replications across all $(d,n)$ configurations. The block-diagonal design reflects localized clusters of correlation, while the uniform-sparse design tests robustness to dispersed and irregular sparsity under positive-definiteness constraints. Table~1 in the main manuscript reports results for $(d,n) \in \{(20,50), (50,100), (100,150)\}$. Across all settings, BLOC with SCAD or MCP achieves a favorable balance between sparsity recovery and estimation accuracy. Under the block-diagonal design, BLOC attains consistently high true positive rates (TPR) while controlling false positives, yielding markedly better Matthews correlation coefficients (MCC) compared with the $\ell_1$ benchmark. By contrast, the LASSO estimator tends to over-select edges, producing inflated false positive rates, degraded MCC, and higher estimation error.  

For the uniform-sparse design, the advantages of nonconvex penalties are even more pronounced: SCAD and MCP achieve low false positive rates and stable recovery of dispersed patterns, while LASSO either overshrinks or exhibits high variability. As dimensionality increases from $(20,50)$ to $(50,100)$, BLOC maintains robustness, whereas the $\ell_1$ estimator deteriorates, particularly in the block-diagonal case where it frequently underselects true edges. At the largest setting $(100,150)$, BLOC continues to deliver reasonable recovery rates under both designs, but Spcov either failed with error termination (block-diagonal) or produced unstable outputs (uniform-sparse), and results are therefore omitted. These findings highlight BLOC’s stability under Gaussian likelihood, combining the reduced bias of nonconvex penalties with the reliability of global optimization.  

\subsection{Sparse covariance estimation with Frobenius norm ({$d\ge n$})}\label{sec:sims_large_p}
We next examine the high-dimensional regime where the number of variables can exceed the number of observations ($d\ge n$). In this setting, we use the Frobenius norm to measure the discrepancy between the sample and estimated correlation matrices. Specifically, the optimization problem in Equation (1) in the main manuscript is instantiated with
\[
h_n(\bm\Gamma) = \|\bm\Gamma - \hat{\bm\Gamma}_S\|_F^2,
\]
where $\hat{\bm\Gamma}_S$ denotes the sample correlation matrix. We assess performance under this criterion using three canonical sparse covariance structures, block, Toeplitz, and banded designs, constructed following \cite{Wen2021}.
\\[1ex]
\emph{(i) Block-diagonal:} The index set $\{1,2,\dots,d\}$ is partitioned evenly into $K=d/10$ non-overlapping subsets $\{S_1,\dots,S_K\}$, each of size $d/K$. The entries $\Sigma_{ij}$ of the covariance matrix $\bm \Sigma$ are then defined by
\[
\Sigma_{ij} \;=\; 0.2\,\mathbb{I}(i=j) \;+\; 0.8 \sum_{k=1}^K \mathbb{I}(i\in S_k,\; j\in S_k),~1\le i,j\le d
\]
so that entries within the same block take value $0.8$ (plus diagonal correction $0.2$), while entries across different blocks are zero.
\\[1ex]
\emph{(ii) Toeplitz:} Long-range correlations decay geometrically with index distance:
\[
\Sigma_{ij}=0.75^{\,|i-j|},\qquad 1\le i,j\le d.
\]
\\[1ex]
\emph{(iii) Banded:} Local dependence with linear decay and finite bandwidth:
\[
\Sigma_{ij}=\bigl(1-|i-j|/10\bigr)\,\mathbb{I}\{|i-j|\le 10\}.
\]

For each scenario, we generate $n$ i.i.d.\ samples from a zero mean multivariate Normal distribution with covariance matrix $\bm \Sigma$ and estimate the correlation matrix under Frobenius loss. We consider three regimes representative of the high-dimensional setting with $d \ge n$: $(d,n)=(50,50)$, $(100,50)$, and $(100,100)$.
The proposed BLOC optimizer is coupled with nonconvex penalties, namely SCAD and MCP, and evaluated against a wide collection of existing approaches for sparse covariance estimation. These include the $\ell_1$-ADMM estimator, which applies an $\ell_1$ penalty under an eigenvalue constraint and is solved by ADMM \citep{Liu2014}; iterative reweighting strategies such as IRW-ADMM (single-loop reweighted ADMM; \citealp{Wen2021}) and the traditional double-loop IRW with inner ADMM iterations \citep{Rothman2009}; block coordinate descent algorithms with nonconvex penalties, including $\ell_q$-BCD with $q=0.5$, $\ell_q$-BCD with $q=0$, and SCAD-BCD \citep{Wen2016}; and elementwise thresholding rules, namely soft, hard, and SCAD thresholding. MATLAB implementations of all the aforementioned methods were obtained directly from the online repository associated with \cite{Wen2021}. Detailed descriptions of algorithmic settings and configuration choices are provided in that article, and we refer readers there for completeness. Performance across all methods, scenarios, and regimes is evaluated using Frobenius and spectral norm errors of the estimated correlation matrix, mean absolute deviation (MAD) of off-diagonal entries, and edge-selection diagnostics including the true positive rate (TPR), false positive rate (FPR), and the Matthews correlation coefficient (MCC), each computed relative to the known sparsity pattern of $\bm\Sigma$.

Across all three covariance structures, the proposed BLOC framework with SCAD or MCP penalties consistently yields the lowest Frobenius and spectral norm errors, as well as the smallest mean absolute deviation of off-diagonal entries, particularly in the high-dimensional regimes with $d > n$. In the $(d,n)=(50,50)$ case (Table \ref{tab:n50_p50}), most methods perform well in the block design due to its strong signal, but SCAD–BLOC and MCP–BLOC distinguish themselves by delivering lower estimation errors while retaining accurate edge recovery. Performance remains competitive under Toeplitz and banded structures, where BLOC-based estimators substantially reduce estimation error compared to $\ell_1$-based ADMM and thresholding methods, though the latter sometimes achieve similar TPR at the cost of elevated FPR and reduced MCC.

As dimensionality grows to $(d,n)=(100,50)$ (Table 2 in the main manuscript), the advantage of BLOC optimization becomes even more apparent. SCAD–BLOC and MCP–BLOC maintain stable estimation accuracy with Frobenius errors around $5$–$7$, while baseline methods such as $\ell_1$-ADMM and soft thresholding show substantially larger errors and poorer edge selection. Importantly, BLOC-based estimators continue to balance sensitivity and specificity: TPR remains high without the sharp rise in FPR seen in thresholding rules, resulting in superior MCC values. Nonconvex block coordinate descent methods (e.g., $\ell_q$-BCD) occasionally achieve high TPR, but at the expense of sparsity control and with noticeably higher spectral norm errors.

When the sample size increases to $(d,n)=(100,100)$ (Table 3 in the main manuscript), all methods benefit from improved estimation accuracy, but the relative ordering persists. SCAD–BLOC and MCP–BLOC continue to dominate in terms of both norm-based errors and structural recovery, outperforming reweighted ADMM variants and traditional thresholding. Thresholding methods tend to over-select edges (high FPR) despite acceptable Frobenius errors, underscoring the importance of positive-definite optimization. Overall, these results demonstrate that BLOC, combined with nonconvex penalties, achieves robust gains across covariance structures and sample size regimes, offering a favorable tradeoff between estimation accuracy, sparsity recovery, and positive definiteness.

\begin{table}[]
	\centering
	\scalebox{0.8}{\begin{tabular}{c|lcccccc}
			\multicolumn{1}{l}{}       & Methods      & Frob. Norm  & Spec. norm  & MAD          & TPR         & FPR         & MCC         \\
			\hline
			\multirow{11}{*}{Block}    & SCAD thres   & 4.04(0.143) & 2.28(0.104) & 0.027(0.002) & 1.00(0.000) & 0.14(0.022) & 0.59(0.033) \\
			& Hardthresh   & 3.61(0.175) & 2.14(0.186) & 0.018(0.001) & 0.98(0.008) & 0.01(0.002) & 0.95(0.009) \\
			& Softthresh   & 4.51(0.139) & 2.32(0.117) & 0.032(0.002) & 1.00(0.000) & 0.15(0.022) & 0.57(0.035) \\
			& TradIRW      & 3.22(0.118) & 1.75(0.092) & 0.017(0.001) & 1.00(0.000) & 0.00(0.002) & 0.99(0.009) \\
			& IRWADMM      & 3.10(0.118) & 1.68(0.099) & 0.016(0.001) & 1.00(0.000) & 0.00(0.002) & 0.98(0.012) \\
			& SCADBCD      & 3.20(0.173) & 1.74(0.153) & 0.017(0.001) & 1.00(0.000) & 0.02(0.004) & 0.91(0.018) \\
			& LqBCD(q=0)   & 2.95(0.105) & 1.59(0.086) & 0.015(0.001) & 1.00(0.000) & 0.00(0.000) & 1.00(0.001) \\
			& LqBCD(q=0.5) & 3.04(0.098) & 1.67(0.099) & 0.016(0.001) & 1.00(0.000) & 0.00(0.000) & 1.00(0.000) \\
			& L1ADMM       & 4.20(0.108) & 2.12(0.075) & 0.030(0.002) & 1.00(0.000) & 0.15(0.033) & 0.58(0.040) \\
			& MCP(BLOC)    & 1.15(0.100) & 0.54(0.079) & 0.005(0.000) & 1.00(0.000) & 0.01(0.003) & 0.95(0.015) \\
			& SCAD(BLOC)   & 1.31(0.072) & 0.60(0.052) & 0.006(0.000) & 1.00(0.000) & 0.02(0.003) & 0.91(0.015) \\
			\hline
			\multirow{11}{*}{Toeplitz} & SCADthres    & 5.93(0.228) & 3.22(0.249) & 0.069(0.003) & 0.41(0.037) & 0.15(0.033) & 0.25(0.013) \\
			& Hardthresh   & 6.35(0.197) & 3.10(0.156) & 0.070(0.002) & 0.22(0.017) & 0.02(0.007) & 0.25(0.009) \\
			& Softthresh   & 5.56(0.142) & 3.15(0.156) & 0.066(0.001) & 0.43(0.023) & 0.17(0.025) & 0.25(0.016) \\
			& TradIRW      & 5.35(0.134) & 2.69(0.110) & 0.060(0.001) & 0.24(0.013) & 0.02(0.006) & 0.26(0.009) \\
			& IRWADMM      & 5.50(0.156) & 2.80(0.148) & 0.062(0.002) & 0.24(0.019) & 0.02(0.012) & 0.25(0.009) \\
			& SCADBCD      & 5.61(0.168) & 2.93(0.185) & 0.064(0.002) & 0.32(0.039) & 0.08(0.030) & 0.26(0.010) \\
			& LqBCD(q=0)   & 5.63(0.152) & 2.79(0.133) & 0.063(0.002) & 0.23(0.015) & 0.02(0.006) & 0.25(0.008) \\
			& LqBCD(q=0.5) & 5.55(0.135) & 2.87(0.113) & 0.063(0.001) & 0.26(0.020) & 0.04(0.011) & 0.25(0.008) \\
			& L1ADMM       & 5.34(0.132) & 2.96(0.141) & 0.065(0.001) & 0.47(0.033) & 0.22(0.037) & 0.23(0.016) \\
			& MCP(BLOC)    & 4.19(0.175) & 2.12(0.199) & 0.046(0.002) & 0.37(0.026) & 0.00(0.001) & 0.38(0.017) \\
			& SCAD(BLOC)   & 4.21(0.160) & 2.10(0.198) & 0.046(0.002) & 0.41(0.029) & 0.00(0.002) & 0.40(0.020) \\
			\hline
			\multirow{11}{*}{Banded}   & SCADthres    & 6.17(0.466) & 3.79(0.431) & 0.067(0.006) & 0.87(0.018) & 0.21(0.042) & 0.64(0.034) \\
			& Hardthresh   & 6.43(0.362) & 3.58(0.303) & 0.063(0.004) & 0.70(0.024) & 0.02(0.014) & 0.75(0.014) \\
			& Softthresh   & 5.95(0.264) & 3.54(0.224) & 0.068(0.002) & 0.90(0.014) & 0.28(0.041) & 0.59(0.033) \\
			& TradIRW      & 5.65(0.249) & 3.16(0.233) & 0.055(0.002) & 0.77(0.018) & 0.01(0.005) & 0.81(0.011) \\
			& IRWADMM      & 5.66(0.278) & 3.18(0.266) & 0.056(0.003) & 0.79(0.017) & 0.06(0.026) & 0.75(0.028) \\
			& SCADBCD      & 5.70(0.308) & 3.21(0.287) & 0.058(0.004) & 0.86(0.019) & 0.10(0.041) & 0.77(0.035) \\
			& LqBCD(q=0)   & 5.72(0.278) & 3.22(0.266) & 0.057(0.003) & 0.79(0.016) & 0.03(0.017) & 0.80(0.017) \\
			& LqBCD(q=0.5) & 5.79(0.299) & 3.43(0.280) & 0.060(0.004) & 0.83(0.027) & 0.08(0.033) & 0.76(0.026) \\
			& L1ADMM       & 5.67(0.263) & 3.42(0.243) & 0.066(0.004) & 0.90(0.017) & 0.28(0.042) & 0.58(0.029) \\
			& MCP(BLOC)    & 4.02(0.294) & 2.29(0.242) & 0.040(0.004) & 0.94(0.013) & 0.12(0.035) & 0.80(0.033) \\
			& SCAD(BLOC)   & 4.12(0.342) & 2.39(0.267) & 0.041(0.005) & 0.94(0.015) & 0.13(0.040) & 0.80(0.037)\\
			\hline
	\end{tabular}}
	\caption{Simulation results for sparse covariance estimation under Frobenius loss with $(d,n)=(50,50)$. Performance is evaluated across block-diagonal, Toeplitz, and banded covariance structures. Reported metrics include Frobenius norm error (Frob. norm), spectral norm error (Spec. norm), mean absolute deviation (MAD) of off-diagonal entries, and edge-selection accuracy measured by true positive rate (TPR), false positive rate (FPR), and Matthews correlation coefficient (MCC). Entries report means with standard errors in parentheses across 10 independent replications.}
	\label{tab:n50_p50} 
\end{table}

\section{Proofs of Theorems}\label{app:theorem}
\subsection{Notations}
Let $\mathcal{C}_d = \{ \bm{C} \in \mathbb{R}^{d \times d} : \bm{C} \succ 0,\; \bm C=\bm C^T,\; \text{diag}(\bm{C}) = 1 \}$ denote the set of $d\times d$ matrices that are symmetric, positive definite and have all their diagonal elements equal to $1$. Throughout this article, the entries of $\bm \Gamma \in \mathcal C_d$ will be denoted by $\gamma_{ij},~1\le i,j\le d$. For a locally Lipschitz function \footnote{A function $f : \mathbb R^d\to\mathbb R$ is locally Lipschitz if for all $\bm x\in\mathbb R^d$, there exists a neighborhood $S$ of $\bm x$, such that $f$ is Lipschitz in $S$. Hence, for all $\bm x'\in S$ there exists $L$ such that $|f(\bm x) - f(\bm x')| \le L\|x-x'\|$.} $f:\mathcal C_d\to\mathbb R$ its Clarke differential \citep{clarke1990optimization} at $\bm\Gamma$ will be denoted by $\partial f(\bm\Gamma)$. The cardinality of a set $S$ will be denoted by $|S|$ and its complement by $S^c$. For any matrix $\bm A$, $\|\bm A\|_F$ represents its Frobenius norm and its operator norm will be denoted by $\|\bm A\|$. Finally, for two real sequences $a_n$ and $b_n$, we write $a_n \lesssim b_n$ to denote that there exists a universal constant 
$C>0$ such that $a_n \le C b_n$ and we denote $a_n \asymp b_n$ to mean $d_1 a_n \le b_n \le  d_2 a_n$ for large $n$ and some constants $d_2 \ge d_1>0$.
\subsection{Additional comments of theoretical properties of BLOC}
\subsubsection{Comments on Theorem \ref{thm:open_ball_reach}}
\begin{remark}
	\label{rem:openball}
	Theorem~\ref{thm:open_ball_reach} establishes global reachability via repeated restarts, but in practice BLOC explores the domain quite aggressively even within a single \emph{run}. As the step size decays during a \emph{run}, the evolving lattice of candidate points becomes increasingly fine; once $s_t$ is below $\delta/(2\sqrt{N})$, the grid $\bm{\nu}^{(0)}+s_t\mathbb{Z}^N$ can already intersect $B_\delta(\bm{\nu}^*)$ by the lattice covering property \citep{Conway1988}. Thus, although the theoretical guarantee requires infinitely many restarts, empirically only a modest number of \emph{runs} (typically 5--20) are sufficient, and the method often outperforms other global solvers, as shown later in the benchmark studies.
\end{remark}

\begin{remark}
	\label{rem:patsearch}
	Our reachability argument is distinct from classical pattern search results such as \cite{Torczon1997}, which establish only local stationarity, not global open-ball reachability.
\end{remark}
\subsubsection{Comments on Theorem \ref{thm:conv_rate}}
Theorem \ref{thm:conv_rate} shows that BLOC, even in its derivative-free coordinate polling form, achieves an $O(1/r)$ sublinear convergence rate within a single \emph{run} when applied to convex smooth functions. This matches the baseline rate of gradient descent on convex functions with Lipschitz gradients, as well as that of standard (cyclic or randomized) coordinate descent in smooth convex optimization \citep{Nesterov2004, Wright2015, Nesterov2018}. Although faster rates are possible under stronger assumptions for those latter methods, for example, linear convergence under strong convexity, or $O(1/r^2)$ via Nesterov’s accelerated gradient methods, such guarantees rely on gradient access and additional structural properties of the objective. 

The key distinction is that BLOC attains the same $O(1/r)$ efficiency while operating as a \emph{zero-order method}, relying only on function evaluations. This makes it particularly valuable for black-box and non-differentiable objectives, where gradient-based acceleration may not be applicable and/or useful potentially. Moreover, unlike classical first-order methods, BLOC couples this rate with substantial built-in exploration: the coordinate polling and adaptive step-size mechanics enable broad coverage of the search space, while the restart mechanism supplies global reachability. Thus, while the theorem here quantifies efficiency within a single \emph{run}, the earlier global reachability results demonstrate that, across \emph{run}s, BLOC is guaranteed to eventually explore any neighborhood of the global minimizer. Together, these properties highlight the dual strengths of BLOC: efficient local descent and robust global exploration.
\subsection{Proofs of Propositions 1, 2 and Theorems 3--6}
\label{app:bloc_proofs}
\subsubsection{Proof of Proposition \ref{thm:corr}}
\begin{proof}
	\textbf{($\Rightarrow$)} Suppose $\bm{C}$ is a correlation matrix. Then $\bm{C}$ is symmetric, positive definite, and has ones on its diagonal. Since $\bm{C}$ is positive definite, it admits a unique Cholesky decomposition:
	$$\bm{C} = \bm{L}^*{\bm{L}^*}^\top,$$
	where $\bm{L}^*$ is a lower triangular full-rank matrix with strictly positive diagonal entries. Let $\bm{l}^*_m$ denote the $m$-th row of $\bm{L}^*$. Then,
	
	$$C_{mm} = (\bm{L}^*{\bm{L}^*}^\top)_{mm} = \bm{l}^*_m {\bm{l}^*_m}^\top = \|\bm{l}_m\|_2^2.$$
	But $C_{mm} = 1$ (since $\bm{C}$ is a correlation matrix), so $\|\bm{l}^*_m\|_2 = 1$ for all $m$. Hence, $\bm{L}^*$ satisfies all the properties of $\bm{L}$ and can be taken as such. Since the Cholesky factorization is unique under the condition of positive diagonal entries, the representation $\bm{C} = \bm{L}\bm{L}^\top$ with the specified properties is unique.
	
	\textbf{($\Leftarrow$)} Conversely, suppose there exists a full-rank lower triangular matrix $\bm{L}$ with strictly positive diagonal entries and each row having L2 norm 1 such that $\bm{C} = \bm{LL}^\top$. Then:
	\begin{itemize}
		\item $\bm{C}$ is symmetric because $\bm{C} = \bm{LL}^\top$,
		\item $\bm{C}$ is positive definite since for any $x \in \mathbb{R}^M$, we have
		$$\bm{x}^\top \bm{C} \bm{x} = \bm{x}^\top \bm{LL}^\top \bm{x} = \|\bm{L}^\top \bm{x}\|_2^2 > 0,$$
		and since $\bm{L}$ is full rank, $C$ is positive definite,
		\item the diagonal entries of $C$ are
		$$C_{mm} = (\bm{LL}^\top)_{mm} = \bm{l}_m \bm{l}_m^\top = \|\bm{l}_m\|_2^2 = 1,$$
		by the assumption on the row norms of $\bm{L}$.
	\end{itemize}
	\noindent Therefore, $\bm{C}$ is a correlation matrix. Uniqueness follows from the uniqueness of the Cholesky factorization with positive diagonals.
\end{proof}
\subsubsection{Proof of Proposition \ref{thm:bijection}}
\begin{proof}
	We demonstrate both directions of the claimed bijection.
	
	\noindent \emph{(i) Angular coordinates yield a valid correlation matrix:}  
	Suppose we are given a collection of angles \( \{\omega_{21}, \Omega_3, \ldots, \Omega_M\} \) from their respective domains. Construct the lower triangular matrix \( \bm{L} \) as follows:
	\begin{itemize}
		\item Set \( l_{11} = 1 \),
		\item For \( m = 2 \), define \( l_{21} = \sin\omega_{21} \) and \( l_{22} = \cos\omega_{21} \),
		\item For each \( m \geq 3 \), define the row vector \( \bm{l}_m \) according to the recursive angular formulation given earlier.
	\end{itemize}
	By construction, each \( \bm{l}_m \) lies on the unit sphere in \( \mathbb{R}^m \) with its last coordinate strictly positive, i.e., \( \|\bm{l}_m\| = 1 \) and \( l_{mm} > 0 \). Hence,
	\begin{itemize}
		\item All rows of \( \bm{L} \) are unit-norm,
		\item The matrix \( \bm{C} = \bm{L} \bm{L}^\top \) satisfies \( C_{ii} = \bm{l}_i^\top \bm{l}_i = 1 \), ensuring unit diagonal entries,
		\item Since \( \bm{L} \) is full-rank with positive diagonals, \( \bm{C} \) is symmetric positive definite.
	\end{itemize}
	It follows that the resulting matrix \( \bm{C} \) is a valid correlation matrix.
	
	\medskip
	
	\noindent \emph{(ii) Any correlation matrix corresponds to a unique angular representation:}  
	Let \( \bm{C} \in \mathcal{C}_M \) be a given full-rank correlation matrix. Then its Cholesky factorization \( \bm{C} = \bm{L} \bm{L}^\top \) exists and is unique under the constraint that \( \bm{L} \) has positive diagonal entries.
	
	\begin{itemize}
		\item The condition \( C_{11} = 1 \) implies \( l_{11} = 1 \),
		\item For \( m = 2 \), the angle can be recovered as \( \omega_{21} = \arctan(l_{21} / l_{22}) \in (-\pi/2, \pi/2) \),
		\item For \( m \geq 3 \), since \( \bm{l}_m \in \mathbb{S}_+^{m-1} \), and the spherical coordinate map \( \xi_m : \Theta_m \to \mathbb{S}_+^{m-1} \) is known to be bijective \citep{blumenson1960}, there exists a unique angular vector \( \Omega_m \) such that \( \bm{l}_m = \xi_m(\Omega_m) \).
	\end{itemize}
	Therefore, one can uniquely reconstruct the angular coordinates from any valid correlation matrix.
	
	\medskip
	
	\noindent \emph{Conclusion:}  
	Together, parts (i) and (ii) confirm that the mapping between angular coordinate tuples \( \{\omega_{21}, \Omega_3, \ldots, \Omega_M\} \) and the space of correlation matrices \( \mathcal{C}_M \) is bijective via the Cholesky decomposition.
\end{proof}

\subsubsection{Proof of Theorem \ref{thm:stationarity}}
\begin{proof}
	Take an open neighborhood \( \bm{U} \subset \mathcal{C} \) with respect to the \( \ell_\infty \)-norm containing \( \bm{\nu} \), on which \( f \) is convex. So, there exists \( r > 0 \) such that \( \bm{U} = \prod_{i=1}^N U_i \), where \( U_i = (\nu_i - r, \nu_i + r) \) for \( i = 1,\ldots,N \). Fix any \( i \in \{1,\ldots,N\} \), and define the univariate slice:
	\[
	g_i(z) := f(\nu_1, \ldots, \nu_{i-1}, z, \nu_{i+1}, \ldots, \nu_N), \quad z \in U_i.
	\]
	Since \( f \) is convex on \( \bm{U} \), each \( g_i \) is convex on \( U_i \), and differentiable. We claim \( g_i(\nu_i) \leq g_i(z) \) for all \( z \in U_i \).
	
	Suppose instead that there exists \( \nu_i^* \in U_i \) such that \( g_i(\nu_i^*) < g_i(\nu_i) \). Let \( d = |\nu_i^* - \nu_i| > 0 \). Without loss of generality, assume \( \nu_i^* > \nu_i \), so \( \nu_i^* = \nu_i + d \).
	Since \( \delta_k \to 0 \), there exists some \( K \in \mathbb{N} \) such that \( \delta_K < d \). Then:
	\[
	\nu_i < \nu_i + \delta_K < \nu_i + d = \nu_i^*.
	\]
	Therefore, there exists \( \lambda \in (0,1) \) such that:
	\[
	\nu_i + \delta_K = \lambda \nu_i + (1 - \lambda)(\nu_i + d).
	\]
	Using convexity of \( g_i \), we have:
	\[
	g_i(\nu_i + \delta_K) \leq \lambda g_i(\nu_i) + (1 - \lambda) g_i(\nu_i^*) < g_i(\nu_i),
	\]
	since \( g_i(\nu_i^*) < g_i(\nu_i) \). But this contradicts our assumption that \( f(\bm{\nu}) \leq f(\bm{\nu}_k^{(i+)}) \). So no such \( \nu_i^* \) can exist, and \( g_i(\nu_i) \leq g_i(z) \) for all \( z \in U_i \).
	
	Because \( g_i \) is differentiable and attains a minimum at \( z = \nu_i \), we conclude \( g_i'(\nu_i) = 0 \), i.e.,
	\[
	\frac{\partial}{\partial x_i} f(\bm{\nu}) = 0.
	\]
	This holds for all \( i = 1,\ldots,N \), so \( \nabla f(\bm{\nu}) = \bm{0} \).
	
	This result is aligned with the pattern search literature, notably \citet[Theorem~3.6]{Torczon1997}, which shows that under coordinate-wise polling with decreasing step sizes, accumulation points of iterates satisfy first-order stationarity conditions.
\end{proof}
\subsubsection{Proof of Theorem \ref{thm:open_ball_reach}}
\begin{proof}
	To begin, we show that for sufficiently fine mesh size $s_r$ there must exist a grid point lying inside $B_\delta(\bm{\nu}^*)$. Fix $\delta>0$. Since $s_r\downarrow 0$, choose $r$ with $s_r<2\delta/\sqrt{N}$.
	For the cubic lattice $\alpha+s_r\mathbb{Z}^N$, the standard lattice covering property
	(see \citealp{Conway1988}, Ch.~1) ensures that for any $x\in\mathbb{R}^N$ there exists
	$g\in\alpha+s_r\mathbb{Z}^N$ such that
	\[
	\|x-g\|_\infty \le s_r/2.
	\]
	Applying this with $x=\bm{\nu}^*$ yields some $\bm{\nu}_{\mathrm{grid}}$ with
	\[
	\|\bm{\nu}_{\mathrm{grid}}-\bm{\nu}^*\|_2
	\;\le\; \sqrt{N}\,\|\bm{\nu}_{\mathrm{grid}}-\bm{\nu}^*\|_\infty
	\;\le\; \sqrt{N}\,s_r/2
	\;<\;\delta.
	\]
	Therefore $G_r\cap B_\delta(\bm{\nu}^*)\neq\emptyset$ for all sufficiently large $r$.
	
	Next, we show that the probability of a restart landing inside $B_\delta(\bm{\nu}^*)$ does not vanish as the grid refines. Under (B3), the probability that the restart of \emph{run} $r$ lands in the ball is
	\[
	p_r:=\mathbb{P}\!\big(\bm{\nu}^{(r+1)}_0\in B_\delta(\bm{\nu}^*)\big)
	=\frac{|\,G_r\cap B_\delta(\bm{\nu}^*)\,|}{|\,G_r\,|}.
	\]
	As $s_r\to 0$, these ratios converge to the volume fraction
	\[
	\frac{\operatorname{vol}(B_\delta(\bm{\nu}^*)\cap\mathcal{C})}{\operatorname{vol}(\mathcal{C})}
	\;=\; v \;>\; 0,
	\]
	by the standard Riemann-sum convergence for indicator functions of sets with measure-zero boundary
	(e.g., any real-analysis text; the boundary of a compact convex set has Lebesgue measure $0$).
	Hence, for any $\varepsilon\in(0,v)$ there exists $r_0$ such that $p_r\ge v-\varepsilon=:p_0>0$ for all $r\ge r_0$.
	
	Finally, to conclude the argument we invoke the Borel--Cantelli lemma. Restarts are independent across \emph{runs} by (B3), and occur infinitely often. Therefore,
	\[
	\sum_{r=1}^{\infty}\mathbb{P}\big(\bm{\nu}^{(r+1)}_0\in B_\delta(\bm{\nu}^*)\big)
	\;\ge\;\sum_{r\ge r_0} p_0 \;=\; \infty.
	\]
	By the second Borel--Cantelli lemma (see, e.g., \citealp{billingsley1995probability}),
	with probability one there exists a \emph{run} index $r$ such that $\bm{\nu}^{(r+1)}_0\in B_\delta(\bm{\nu}^*)$.
	Let $T$ be the iteration index at that restart; then $\bm{\nu}^{(T)}\in B_\delta(\bm{\nu}^*)$.
	\medskip
	
	Note that, that conclusion depends only on (B3). Assumptions (B1)–(B2)
	make the statement faithful to BLOC's in-\emph{run} mechanics (coordinate polling, geometric reduction only after failure, and a step-size floor $\kappa$), but they are not invoked in Steps 1–3.
\end{proof}

\subsubsection{Proof of Theorem \ref{thm:global_conv_prob}}
\begin{proof}
	Let \( \bm{\nu}^{(r)} \) denote the output of \emph{run} \( r \), that is, the best point identified at the end of that \emph{run}. We first establish that the uniqueness and strictness of the minimizer imply a separation property. Since \( f \) is continuous on the compact domain \( \mathcal{C} \), it attains its minimum at \( \bm{\nu}^* \). Because the minimizer is strict, for every $\delta>0$ there exists $\eta>0$ such that
	\[
	f(\bm{\nu}) \;\ge\; f(\bm{\nu}^*) + \eta 
	\qquad \text{for all } \bm{\nu}\in \mathcal{C}\setminus B_\delta(\bm{\nu}^*).
	\]
	Thus, outside the ball \( B_\delta(\bm{\nu}^*) \), the function value is uniformly larger than at the minimizer. This guarantees that any point within $B_\delta(\bm{\nu}^*)$ is strictly preferable to all points outside.
	
	Next, we show that the algorithm will eventually enter such a neighborhood. By Theorem 4, which holds under assumptions (B1)–(B3), the restart mechanism guarantees that with probability one there exists some finite \emph{run} $r$ such that $\bm{\nu}^{(r)}_0\in B_\delta(\bm{\nu}^*)$. In other words, the algorithm almost surely restarts inside any prescribed ball around the minimizer, no matter how small.
	
	Once the algorithm enters this region, we argue that it cannot leave without converging. Assumption (B4) ensures that \( f \) is differentiable and convex in some neighborhood of \( \bm{\nu}^* \). Within this neighborhood, the coordinate polling and step-size schedule in (B1)–(B2) guarantee that the algorithm can only accept descent steps. If no further coordinate direction yields improvement, then by Theorem~3 the current iterate must satisfy \( \nabla f = 0 \). By local convexity, the only stationary point in this neighborhood is the minimizer itself, and thus the algorithm must converge to \( \bm{\nu}^* \).
	
	Finally, we combine these observations to establish convergence in probability. Fix $\varepsilon > 0$ and choose $\delta < \varepsilon$ such that $B_\delta(\bm{\nu}^*)$ lies entirely within the convex neighborhood where (B4) holds. By the restart argument, the probability that BLOC enters $B_\delta(\bm{\nu}^*)$ by \emph{run} $r$ approaches one as $r\to\infty$. Once inside, the descent mechanism ensures convergence to $\bm{\nu}^*$. Therefore
	\[
	\lim_{r\to\infty}\mathbb{P}\big(\|\bm{\nu}^{(r)} - \bm{\nu}^*\| < \varepsilon\big)=1,
	\]
	which proves convergence in probability.
\end{proof}
\subsubsection{Proof of Theorem \ref{thm:conv_rate}}
\begin{proof}
	Let \( s_r = {s_0}/{\rho^r} \) be the step-size used at the \( r \)-th step-size reduction. Suppose at this iteration, no polling direction \( \bm{d}_i \in \{ \pm \bm{e}_1, \ldots, \pm \bm{e}_N \} \) results in a decrease in the objective function. That is, for all \( j = 1, \dots, N \), we have:
	\[
	f(\bm{\nu} + s_r \bm{e}_j) \ge f(\bm{\nu}) \quad \text{and} \quad f(\bm{\nu} - s_r \bm{e}_j) \ge f(\bm{\nu}).
	\]
	
	Now consider a Taylor expansion using the Lipschitz continuity of \( \nabla f \), which gives for any direction \( \bm{d}_i \):
	\[
	f(\bm{\nu} + s_r \bm{d}_i) \le f(\bm{\nu}) + s_r \nabla f(\bm{\nu})^\top \bm{d}_i + \frac{L}{2} s_r^2.
	\]
	
	\paragraph{Case 1: \( \bm{d}_i = +\bm{e}_j \).}
	From the assumption that this move fails to improve:
	\[
	f(\bm{\nu} + s_r \bm{e}_j) \ge f(\bm{\nu}) \Rightarrow s_r \frac{\partial f}{\partial \nu_j} + \frac{L}{2} s_r^2 \ge 0.
	\]
	Dividing by \( s_r \) ($>$ 0), we get:
	\[
	\frac{\partial f}{\partial \nu_j} \ge -\frac{L}{2} s_r.
	\]
	
	\paragraph{Case 2: \( \bm{d}_i = -\bm{e}_j \).}
	Similarly, the expansion gives:
	\[
	f(\bm{\nu} - s_r \bm{e}_j) \le f(\bm{\nu}) - s_r \frac{\partial f}{\partial \nu_j} + \frac{L}{2} s_r^2,
	\]
	and the assumption that this move is also rejected gives:
	\[
	-\frac{\partial f}{\partial \nu_j} + \frac{L}{2} s_r \ge 0 \Rightarrow \frac{\partial f}{\partial \nu_j} \le \frac{L}{2} s_r.
	\]
	
	Combining both:
	\[
	\left| \frac{\partial f}{\partial \nu_j} \right| \le \frac{L}{2} s_r \quad \text{for all } j = 1, \ldots, N.
	\]
	
	Hence:
	\[
	\| \nabla f(\bm{\nu}) \|_2 \le \sqrt{N} \cdot \frac{L}{2} s_r.
	\]
	
	Now, from smooth convex optimization theory \citep{nesterov2004introductory}, we know that for convex functions with \( L \)-Lipschitz gradients:
	\[
	f(\bm{\nu}) - f(\bm{\nu}^*) \le \frac{1}{2L} \| \nabla f(\bm{\nu}) \|_2^2.
	\]
	
	Substituting the bound on gradient norm:
	\[
	f(\bm{\nu}) - f(\bm{\nu}^*) \le \frac{1}{2L} \cdot \left( \frac{L^2 N}{4} s_r^2 \right) = \frac{L N}{8} s_r^2.
	\]
	
	Since \( s_r = s_0 / \rho^r \), we get:
	\[
	f(\bm{\nu}^{(r)}) - f(\bm{\nu}^*) \le \frac{L N s_0^2}{8} \cdot \rho^{-2r}.
	\]
	
	Now observe that for any \( \rho > 1 \), there exists a positive constant such that \( \rho^{2r} \ge r+1 \), hence:
	\[
	f(\bm{\nu}^{(r)}) - f(\bm{\nu}^*) \le \frac{C}{r+1}. \quad \text{for some constant $C>0$.}
	\]
	
	This establishes the sublinear convergence rate.
\end{proof}
\subsection{Comments on Assumptions (A1) -- (A6)}
Assumption (A1) ensures that the statistical target of estimation $\bm \Gamma_0$ is well defined.  
In the Gaussian setting of \citet{lam2009sparsistency}, Assumption (A1) holds automatically because the 
negative log-likelihood is strictly convex in the precision matrix, which uniquely determines the 
correlation matrix. For general losses $h_n$, uniqueness and interiority cannot be guaranteed automatically but are 
necessary for defining a well-posed statistical target and for conducting local curvature and subgradient analysis. Assumption (A2) requires the loss $h_n$ to exhibit sufficient curvature on the cone 
$\mathcal C(S)$ associated with the sparsity pattern of $\bm\Gamma_0$. This condition is the natural analogue of the strict curvature enjoyed by the 
Gaussian negative log-likelihood, but must be imposed explicitly for a general loss. In particular, RSC ensures that the loss increases sufficiently fast along directions that the penalty cannot control and the tolerance term $\Phi(\Delta)$ accounts for the fact that, unlike the Gaussian likelihood, a general loss may not 
admit an exact quadratic expansion. Assumption (A3) requires that a subgradient $g_0\in\partial h_n(\Gamma_0)$ satisfies
$\|g_0\|_\infty\lesssim\sqrt{(\log d)/n}$ with high probability.  
For the Gaussian likelihood, this property follows from the entrywise concentration of the sample covariance matrix. However, for a general loss $h_n$ it must be imposed explicitly. This assumption is essential for our technical analysis because the basic inequality in penalized estimation contains a linear term
$\langle g_0,\Delta\rangle$, and sparsity recovery requires that this stochastic fluctuation be no larger than the 
penalty level $\lambda_{n}\asymp\sqrt{(\log d)/n}$. 
Assumption (A4) imposes a global Lipschitz condition on the loss subgradient.
Although most high-dimensional analyses formulate this condition locally
(e.g., on a neighborhood of the population minimizer), a global version
simplifies our technical analysis and is automatically satisfied by many smooth empirical
risks. For instance, Assumption (A4) holds whenever $h_n$ is differentiable and its Hessian is uniformly bounded in operator norm over $\mathcal C_d$. Finally, we borrow the regularity conditions, given by Assumption (A5) and (A6), on the penalty $p_\lambda(\cdot)$ from \cite{lam2009sparsistency}, which are satisfied by commonly used nonconvex penalties such as SCAD.
\subsection{Proofs of Theorems 1 and 2}
\label{app:stat_proofs}
\subsubsection{Proof of Theorem \ref{thm:G7}}
We discuss the main ideas required for proving Theorem \ref{thm:G7}.

Let $\Delta=\hat{\bm\Gamma}_n-\bm\Gamma_0$ and let $g(\bm\Gamma)\in\partial h_n(\bm\Gamma)$ denote a subgradient of the loss. Write $\Delta_S,\Delta_{S^c}$ for the off-diagonal parts restricted to $S$ and its complement. 

By optimality of $\hat{\bm\Gamma}_n$ and  since $\bm\Gamma_0\in\mathcal C_d$,
\begin{align}
	0\ &\ge\ h_n(\bm\Gamma_0+\Delta)-h_n(\bm\Gamma_0)
	\ +\ \sum_{i\neq j}\bigl[p_{\lambda_{n}}(|\gamma_{0,ij}+\Delta_{ij}|)-p_{\lambda_{n}}(|\gamma_{0,ij}|)\bigr].
	\label{eq:BI}
\end{align}
Fix $g_0\in\partial h_n(\bm\Gamma_0)$ and add and subtract $\langle g_0,\Delta\rangle$ to obtain
\begin{align}
	h_n(\bm\Gamma_0+\Delta)-h_n(\bm\Gamma_0)-\langle g_0,\Delta\rangle
	\ \le\ -\langle g_0,\Delta\rangle
	\ -\ \sum_{i\neq j}\!\bigl\{p_{\lambda_{n}}(|\gamma_{0,ij}+\Delta_{ij}|)-p_{\lambda_{n}}(|\gamma_{0,ij}|)\bigr\}.
	\label{eq:split}
\end{align}

Next, by Assumption (A5),
\[
\sum_{(i,j)\in S^c}p_{\lambda_{n}}(|\Delta_{ij}|)\ \ge\ k\,\lambda_{n}\,\|\Delta_{S^c}\|_1.
\]
and by Assumption (A6) there exists $c_b\in[0,1)$ such that
\[
\sum_{(i,j)\in S}\!\bigl[p_{\lambda_{n}}(|\gamma_{0,ij}+\Delta_{ij}|)-p_{\lambda_{n}}(|\gamma_{0,ij}|)\bigr]
\ \ge\ -\,c_b\,\lambda_{n}\,\|\Delta_{S}\|_1.
\]
Plugging both into \eqref{eq:split},
\begin{equation}
	h_n(\bm\Gamma_0+\Delta)-h_n(\bm\Gamma_0)-\langle g_0,\Delta\rangle
	\ \le\ -\langle g_0,\Delta\rangle\ +\ c_b\,\lambda_{n}\|\Delta_S\|_1\ -\ k\,\lambda_{n}\|\Delta_{S^c}\|_1.
	\label{eq:preRSC}
\end{equation}
Also,
\begin{equation}
	h_n(\bm\Gamma_0+\Delta)-h_n(\bm\Gamma_0)-\langle g_0,\Delta\rangle
	={\langle g(\hat{\bm\Gamma}_n)-g_0,\Delta\rangle}+R_n(\Delta),
	\label{eq:decomp}
\end{equation}
where from Assumption (A4)
\begin{equation}
	\big|\langle g(\hat{\bm\Gamma}_n)-g_0,\Delta\rangle\big|\ \le\ L\,\|\Delta\|_F^2 .
	\label{eq:H3}
\end{equation}

Now, by Assumption (A2), for all admissible $\Delta$,
\begin{equation}
	h_n(\bm\Gamma_0+\Delta)-h_n(\bm\Gamma_0)-\langle g_0,\Delta\rangle
	\ \ge\ \frac{\kappa}{2}\|\Delta\|_F^2-\tau\,\Phi(\Delta).
	\label{eq:RSC}
\end{equation}
Combining \eqref{eq:preRSC}, \eqref{eq:decomp}, \eqref{eq:H3}, and \eqref{eq:RSC} yields
\begin{equation}
	\frac{\kappa}{2}\|\Delta\|_F^2-\tau\,\Phi(\Delta)
	\ \le\ -\langle g_0,\Delta\rangle\ +\ c_b\,\lambda_{n}\|\Delta_S\|_1\ -\ k\,\lambda_{n}\|\Delta_{S^c}\|_1.
	\label{eq:keyineq}
\end{equation}

Next, choose $g_0\in\partial h_n(\bm\Gamma_0)$ minimizing $\|g_0\|_\infty$. By Assumption (A3),
$\|g_0\|_\infty\le C_0\sqrt{(\log d)/n}$ w.h.p., and therefore
\[
|\langle g_0,\Delta\rangle|\ \le\ \|g_0\|_\infty\,\|\Delta\|_1
\ \le\ C_0\sqrt{\dfrac{\log d}{n}}\bigl(\|\Delta_S\|_1+\|\Delta_{S^c}\|_1\bigr).
\]
Pick $\lambda_{n}\ge c_1\|g_0\|_\infty$ so the linear term is absorbed by the $\lambda_{n}$–weighted $\ell_1$ terms on the right-hand side of \eqref{eq:keyineq}. Since $k>c_b$, this forces the cone condition
\begin{equation}
	\|\Delta_{S^c}\|_1\ \le\ 3\,\|\Delta_{S}\|_1.
	\label{eq:cone}
\end{equation}

On the cone \eqref{eq:cone}, $\|\Delta\|_1\le 4\|\Delta_S\|_1\le 4\sqrt{s}\,\|\Delta\|_F$. Using this in \eqref{eq:keyineq} together with $\lambda_{n}\asymp \sqrt{(\log d)/n}$, we get
\[
\kappa\,\|\Delta\|_F^2\ \lesssim\ \lambda_{n}\sqrt{s}\,\|\Delta\|_F\ +\ \tau\,\Phi(\Delta).
\]
Absorbing constants and dividing by $\|\Delta\|_F$ (when $\Delta\neq 0$) yields
\[
\|\Delta\|_F\ \lesssim\ \sqrt{s}\,\lambda_{n}\ +\ \frac{\tau\,\Phi(\Delta)}{\|\Delta\|_F}.
\]
For many smooth losses $\Phi\equiv 0$; otherwise $\tau\Phi(\Delta)/\|\Delta\|_F$ is negligible under the stated scaling.
Hence, we conclude
\[
\|\hat{\bm\Gamma}_n-\bm\Gamma_0\|_F\ =\ O_P\!\Big(\sqrt{s\,\tfrac{\log d}{n}}\Big).
\]

Finally, $\hat{\bm\Sigma}_n=\hat{\bm W}\,\hat{\bm\Gamma}_n\,\hat{\bm W}$ with $\hat{\bm W}^2=\mathrm{diag}(\bm S)$. Since the correlation diagonal is known ($\gamma_{ii}= 1$), operator norm error accumulates only through the $s$ off-diagonal entries and one degree of variance rescaling, yielding
\[
\|\hat{\bm\Sigma}_n-\bm\Sigma_0\|\ =\ O_P\!\Bigl\{\sqrt{(s+1)\,\dfrac{\log d}{n}}\Bigr\}.
\]
This completes the proof. \hspace{\fill} $\square$

\subsubsection{Proof of Theorem \ref{thm:G8}}
Let $\Delta=\hat{\bm\Gamma}_n-\bm\Gamma_0$ and let $g(\bm\Gamma)\in\partial h_n(\bm\Gamma)$ denote a subgradient of the loss.  We show that every off-diagonal coordinate in $S^c$ is 
estimated as exactly zero.

By local optimality of $\hat{\bm\Gamma}_n$, for each $(i,j)$,
\begin{equation}\label{eq:KKT}
	0\ \in\ \partial_{ij} h_n(\hat{\bm\Gamma}_n) 
	\;+\; 
	\partial p_{\lambda_{n}}(|\gamma_{ij}|)\Big|_{\gamma_{ij}=\hat\gamma_{n,ij}}.
\end{equation}
Under Assumption (A5),
\[
\partial p_{\lambda_{n}}(|\gamma_{ij}|)=
\begin{cases}
	[-k\lambda_{n},\ k\lambda_{n}], & \gamma_{ij}=0,\\[4pt]
	\{\,p'_{\lambda_{n}}(|\gamma_{ij}|)\,\mathrm{sgn}(\gamma_{ij})\,\}, & \gamma_{ij}\neq 0.
\end{cases}
\]

By Assumption (A3), with probability $1-\delta_n$,
\begin{equation}
	\label{eq:score-at-true}
	\inf_{g_0\in\partial h_n(\bm\Gamma_0)}\|g_0\|_\infty
	\ \le\ C_0\sqrt{\frac{\log d}{n}}
\end{equation}
and by Theorem~1 there exists $C_\Delta>0$ such that, 
\begin{equation}\label{eq:G7rate}
	\|\Delta\|_F \le C_\Delta\sqrt{s\,\tfrac{\log d}{n}}.
\end{equation}
Finally, by Assumption (A4), there exists $L>0$ for which
\begin{equation}\label{eq:H3sup}
	\|g(\hat{\bm\Gamma}_n)-g_0\|_\infty \le L\|\Delta\|_F.
\end{equation}
Consider the event $\mathcal E:=\{\eqref{eq:score-at-true},\eqref{eq:G7rate},\eqref{eq:H3sup}\ \text{all hold}\}$.

Since the diagonal is fixed, first-order optimality for any $(i,j)$ with $i\neq j$ gives
\begin{equation}\label{eq:KKT}
	0 \in g_{ij}(\hat{\bm\Gamma}_n) + \partial\!\big(p_{\lambda_{n}}(|\cdot|)\big)(\hat\gamma_{n,ij})
	\quad\Longleftrightarrow\quad
	\begin{cases}
		g_{ij}(\hat{\bm\Gamma}_n) + p'_{\lambda_{n}}(|\hat\gamma_{n,ij}|)\,\mathrm{sign}(\hat\gamma_{n,ij}) = 0, & \hat\gamma_{n,ij}\neq 0,\\[2pt]
		|g_{ij}(\hat{\bm\Gamma}_n)| \le p'_{\lambda_{n}}(0+) , & \hat\gamma_{n,ij}=0 .
	\end{cases}
\end{equation}

Fix $(i,j)\in S^c$ and suppose, for contradiction, that $\hat\gamma_{n,ij}\neq 0$.
From \eqref{eq:KKT} we obtain $|g_{ij}(\hat{\bm\Gamma}_n)| = p'_{\lambda_{n}}(|\hat\gamma_{n,ij}|)$.
By Assumption (A5),
\[
p'_{\lambda_{n}}(|\hat\gamma_{n,ij}|) \ge p'_{\lambda_{n}}(0+) = k\,\lambda_{n},~
k>0.
\]
On the other hand, by \eqref{eq:score-at-true} and \eqref{eq:H3sup},
\[
|g_{ij}(\hat{\bm\Gamma}_n)| \le |g_{0,ij}| + |g_{ij}(\hat{\bm\Gamma}_n)-g_{0,ij}|
\le C_0\sqrt{\dfrac{\log d}{n}} + L\|\Delta\|_F
\le \Big(C_0 + L C_\Delta \sqrt{s}\Big)\sqrt{\dfrac{\log d}{n}}.
\]
Because $\lambda_{n}\asymp \sqrt{\log d / n}$, there exists $n_0$ large enough such that for all
$n\ge n_0$,
\[
\Big(C_0 + L C_\Delta \sqrt{s}\Big)\sqrt{\dfrac{\log d}{n}} \;<\; k\,\lambda_{n}.
\]
This contradicts $|g_{ij}(\hat{\bm\Gamma}_n)| \ge k\lambda_{n}$. Hence $\hat\gamma_{n,ij}=0$ for all $(i,j)\in S^c$
on $\mathcal E$. Therefore,
$
\hat{\bm\Gamma}_{S^c} = 0$ with probability tending to 1. \hspace{\fill} $\square$
\newpage
\bibliography{Bibliography-MM-MC.bib}
\end{document}